\newcommand{\old}[1]{{}}
\newcommand\tinyv{\@setfontsize\tinyv{4pt}{6}}
\theoremstyle{plain}
\newtheorem{theorem}{Theorem}
\newtheorem{lemma}[theorem]{Lemma}
\newtheorem{observation}[theorem]{Observation}
\newtheorem{claim}{Claim}[theorem]
\theoremstyle{definition}
\newtheorem{definition}[theorem]{Definition}
\newtheorem{question}{Question}
\newcommand{\agi}[1]{\todo[inline,size=\tiny]{ AG: #1}}
\newcommand{\lr}[1]{\left(#1\right)}
\newcommand{\LR}[1]{\left\{#1\right\}}
\newcommand{\cP}{\mathcal{P}}
\newcommand{\cO}{\mathcal{O}}
\newcommand{\cC}{\mathcal{C}}
\newcommand{\cT}{\mathcal{T}}
\newcommand{\cost}{{\sf cost}\xspace}
\newcommand{\wcost}{{\sf wcost}\xspace}
\renewcommand{\epsilon}{\varepsilon}
\newcommand{\opt}{{\sf OPT}}
\renewcommand{\d}{{\sf d}}
\renewcommand{\c}{6}
\newcommand{\findassgn}{\texttt{ASSIGNMENT}}
\newcommand{\ballint}{\texttt{BALL-INT}}
\newcommand{\genballint}{\texttt{GenBALL-INT}}
\newcommand{\epasalgo}{\texttt{COMPUTE-SOLUTION}}
\newcommand{\cI}{\mathcal{I}}
\newcommand{\cM}{\mathcal{M}}
\newcommand{\cG}{\mathcal{G}}
\newcommand{\cB}{\mathcal{B}}
\newcommand{\coresetsize}{\ell}
\newcommand{\assgntime}{T_\textsf{A}}
\newcommand{\ballinttime}{T_\textsf{B}}
\newcommand{\sclen}{O(\lambda(\tfrac{\epsilon}{2}) \tfrac{\log (1/\epsilon)}{\epsilon} ) }
\newcommand{\pathlen}{O(k\lambda(\tfrac{\epsilon}{2}) \tfrac{\log (1/\epsilon)}{\epsilon} ) }
\newcommand{\algotime}{2^{O(\tfrac{k}{\epsilon}\log (k\coresetsize) \cdot \lambda(\tfrac{\epsilon}{2}) \log(\sfrac{1}{\epsilon}))}\poly(|\cI|)\cdot \assgntime(\cI)\cdot\ballinttime(\cI)}
\newcommand{\epastime}{2^{O\left(\tfrac{k}{\epsilon}\log (k\coresetsize) \cdot \lambda(\tfrac{\epsilon}{60z}) \log(\sfrac{1}{\epsilon})\right)}\poly(|\cI|)\cdot \assgntime(\cI)\cdot\ballinttime(\cI)}
\newcommand{\ball}{{\sf ball}}
\newcommand{\probname}{$(k,z)$-Clustering}
\DeclareMathOperator{\poly}{poly}
\definecolor{MyBlue}{rgb}{0.200,0.200,0.600}
\newcommand{\consprob}{Well Constrained $(k,z)$-Clustering}
\newcommand{\conskmed}{Well Constrained $(k,z)$-Median}
\title{\vspace{-2cm}Clustering under Constraints:\\ Efficient Parameterized Approximation Schemes}
  \author{Sujoy Bhore\thanks{Department of Computer Science \& Engineering, Indian Institute of Technology Bombay, Mumbai, India.\\ Email: \href{sujoy@cse.iitb.ac.in}{sujoy@cse.iitb.ac.in}}
  \qquad Ameet Gadekar\thanks{CISPA Helmholtz Center for Information Security, Saarbr\"{u}cken, Germany.\\ Email: \href{ameet.gadekar@cispa.de}{ameet.gadekar@cispa.de}} \qquad Tanmay Inamdar\thanks{Department of Computer Science \& Engineering, Indian Institute of Technology Jodhpur, Jodhpur, India.\\ Email: \href{taninamdar@gmail.com}{taninamdar@gmail.com}}	
  }
\date{}
\begin{document}

\maketitle


\begin{abstract}
Constrained $k$-clustering is a general framework for clustering that captures a wide range of structural and assignment constraints, including capacities, fairness, outliers, and matroid restrictions.
Despite its broad applicability, efficient parameterized approximation schemes (EPASes) for these problems were previously known only in continuous Euclidean spaces or for formulations restricted to Voronoi assignments, where each point is assigned to its nearest center. Moreover, it is known that EPASes cannot exist in general metric spaces.

We present a unified framework that yields EPASes for constrained $(k,z)$-clustering in metric spaces of bounded (algorithmic) scatter dimension, a notion introduced by Abbasi et al.~(FOCS 2023). They showed that several well known metric families, including continuous Euclidean spaces, bounded doubling spaces, planar metrics, and bounded treewidth metrics, have bounded scatter dimension. Subsequently, Bourneuf and Pilipczuk~(SODA 2025) proved that this also holds for metrics induced by graphs from any fixed proper minor closed class. Our result, in particular, addresses a major open question of Abbasi et al., whose approach to $k$-clustering in such metrics was inherently limited to \emph{Voronoi-based} objectives, where each point is connected only to its nearest chosen center.

As a consequence of our framework, we obtain EPASes for several constrained clustering problems, including capacitated and matroid $(k,z)$-clustering, fault tolerant and fair $(k,z)$-clustering, as well as for metrics of bounded highway dimension. In particular, our results on capacitated and fair $k$-Median and $k$-Means provide the first EPASes for these problems across broad families of structured metrics. Previously, such results were known only in continuous Euclidean spaces, due to the works of Cohen-Addad and Li (ICALP 2019) and Bandyapadhyay, Fomin, and Simonov (ICALP 2021; JCSS 2024), respectively. Along the way, we also obtain faster EPASes for uncapacitated $k$-Median and $k$-Means, improving upon the running time of the algorithm by Abbasi et al. (FOCS 2023).



\old{

Algorithmic scatter dimension is a notion of metric spaces, introduced recently by Abbasi et al. (FOCS 2023), which unifies various well-known metric spaces, including continuous Euclidean space, bounded doubling space, planar and bounded treewidth metrics. Recently, Bourneuf and
Pilipczuk (SODA 2025) showed that metrics induced by graphs from any fixed proper minor closed graph class have bounded scatter dimension. 
Abbasi et al. presented a unified approach to obtain EPASes (i.e., $(1+\epsilon)$-approximations running in time FPT in $k$ and $\epsilon$) for $k$-Clustering in metrics of bounded scatter dimension. However, a seemingly inherent limitation of their approach was that it could only handle clustering objectives where each point was assigned to the closest chosen center, i.e., \emph{Voronoi clustering}. They explicitly asked, \emph{if there exist EPASes for constrained $k$-Clustering for metrics of bounded scatter dimension}. 

We present a unified framework which yields EPASes for metrics of bounded algorithmic scatter dimension, for \emph{capacitated} and \emph{fair} $k$-Median/Means. Our framework exploits coresets for such constrained clustering problems in a novel manner, 
and notably, requires only coresets of size $\left(k \log n\epsilon^{-1}\right)^{O(1)}$, which are usually constructible even in \emph{general metrics}.
Furthermore, we note that due to existing lower bounds it is impossible to obtain such an EPAS for Capacitated $k$-Center, thereby, essentially answering the complete spectrum of the question.  

Our results on \emph{capacitated} and \emph{fair} $k$-Median/Means make significant progress, and provide the first EPASes for these problems in  broad families of metric spaces. Earlier such results were only known in continuous Euclidean spaces due to the work of Cohen-Addad \& Li, (ICALP 2019), and Bandyapadhyay, Fomin \& Simonov, (ICALP 2021; JCSS 2024), respectively. Along the way, we obtain faster EPASes for \emph{uncapacitated} $k$-Median/Means, improving upon the running time of the algorithm by Abbasi et al. (FOCS 2023). 
}


\end{abstract}


{
	\vfill
	\begin{multicols}{2}
		{
			\setcounter{secnumdepth}{5}
			\setcounter{tocdepth}{2} \tableofcontents
		}
	\end{multicols}
  }


\section{Introduction}

Clustering and facility location problems form a central theme in modern algorithmic theory. Given a set of points with pairwise distances, the goal is to choose a small number of representative centers and assign points to them in a way that minimizes an aggregate measure of distance. This paradigm captures a wide range of applications in data analysis, machine learning, and operations research. Traditionally, clustering is viewed as partitioning the dataset, where each point belongs to a single cluster represented by its nearest center. A broader and more flexible perspective, however, treats these problems as assignment optimization: each point can distribute its weight among multiple facilities under certain structural or resource constraints (see, e.g., ~\cite{ShmoysTA97,CharikarG05, KleinbergT02,BalcanBG09}).

Among the best known formulations of clustering are the classical \emph{$k$-Median} and \emph{$k$-Means} problems, which have been studied intensively across theoretical computer science, machine learning, and optimization for several decades (see, e.g., ~\cite{kanungo2002efficient, DBLP:conf/stoc/Har-PeledM04, badoiu-etal:approximate-clustering-coresets, DBLP:conf/stoc/Cohen-AddadEMN22, KumarSS10}). These models seek $k$ centers that minimize the total (or squared) distance from each point to its assigned center. Over time, several more realistic and constrained variants have been explored. Among these, the \emph{Capacitated} $k$-\emph{Median} and $k$-\emph{Means} problems stand out, where each center can serve only a limited number of points or must satisfy additional combinatorial restrictions on the set of open centers. Such constrained variants naturally arise in network design, resource allocation, and facility planning, and are substantially harder from an algorithmic standpoint. Despite decades of work, Capacitated $k$-Median remains one of the central open problems in approximation algorithms: the best known approximation ratio in general metrics is $O(\log k)$, obtained via probabilistic embeddings of metrics into trees~\cite{AdamczykBMM019}. Constant factor approximations are known only when either the capacities~\cite{Li17, Li16} or the number of clusters~\cite{DemirciL16, ByrkaRU16} are allowed to be violated by a $(1+\varepsilon)$ factor. On the hardness side, the strongest lower bound known, $(1+\frac{2}{e}-\varepsilon)$ under Gap-ETH, is inherited directly from the uncapacitated version. These gaps between upper and lower bounds have motivated the search for improved approximations in structured metrics, often by allowing fixed parameter or quasi polynomial running time~\cite{Cohen-AddadL19Capacitated, BhattacharyaJK18}.

In recent years, the connection between the geometric structure of metrics and the design of parameterized approximation algorithms has become increasingly prominent. Cohen-Addad and Li~\cite{Cohen-AddadL19Capacitated} obtained a $(3+\varepsilon)$ approximation for Capacitated $k$-Median in general metrics with running time fixed parameter tractable (FPT) in $k$ and $\varepsilon$. Notably, for the uncapacitated case, the lower bound of $(1+\frac{2}{e})$ can be matched by an FPT algorithm~\cite{Cohen-AddadG0LL19Tight}. A central question thus arises: \emph{for which families of metric spaces can one obtain better upper bounds for Capacitated $k$-Median or $k$ Means?} The only affirmative progress is known for Euclidean metrics, where Cohen-Addad and Li showed that $(1+\varepsilon)$ approximation algorithms exist for both problems with running time $f(\varepsilon,k)\,n^{O(1)}$~\cite{Cohen-AddadL19Capacitated} (see also~\cite{BhattacharyaJK18}).\footnote{In this context, a recent work of Huang et al.~\cite{Huang0L025} on coresets for constrained clustering is also relevant; we discuss its implications in \Cref{subsec:remarks}.}

As mentioned before, stronger upper bounds are known for the uncapacitated versions of the problem. Indeed, Cohen-Addad et al.~\cite{Cohen-AddadG0LL19Tight} obtained tight $(1+\frac{2}{e}+\varepsilon)$ and $(1+\frac{8}{e}+\varepsilon)$ approximation factors for $k$-Median and $k$-Means, respectively, in general metric spaces, with running times fixed parameter tractable (FPT) in $k$ and $\varepsilon$. However, an important question remains: \emph{can these lower bound barriers for uncapacitated $k$-Median and $k$-Means be broken for structured metric spaces?} This question has driven a series of advances in Euclidean settings, where geometric structure enables substantially stronger approximation guarantees. B\u{a}doiu, Har-Peled, and Indyk~\cite{badoiu-etal:approximate-clustering-coresets} showed that, for points in $\mathbb{R}^d$, a $(1+\varepsilon)$-approximate $k$-Median can be computed in expected time $2^{(k/\varepsilon)^{O(1)}}d^{O(1)}n \log^{O(k)} n$. Har-Peled and Mazumdar~\cite{Har-PeledM05} subsequently proved the existence of small coresets for both $k$-Median and $k$-Means in low dimensions, while Kumar, Sabharwal, and Sen~\cite{KumarSS10} gave simple randomized algorithms achieving $(1+\varepsilon)$ approximations with constant probability in $O(2^{(k/\varepsilon)^{O(1)}}dn)$ time. A common theme among these works is the use of geometric sampling and coreset constructions that exploit Euclidean structure. Most of these algorithms address the continuous setting, where centers may be placed anywhere in $\mathbb{R}^d$, although the result of~\cite{KumarSS10} also applies to the discrete $k$-Means problem.

Over the past decade, coreset techniques have evolved into a general algorithmic framework for clustering across diverse metric spaces. Feldman et al.~\cite{feldman2013turning} constructed coresets for $k$-Means, PCA, and projective clustering whose sizes are independent of the dimension, and Sohler and Woodruff~\cite{sohler2018strong} extended these ideas to $k$-Median and subspace approximation. Braverman et al.~\cite{braverman2021coresets} developed coresets for $k$-Median in excluded-minor metrics. 
After a long sequence of advances~\cite{feldman2011unified, DBLP:conf/icml/BakerBHJK020, DBLP:conf/stoc/HuangV20, braverman2021coresets}, Cohen-Addad et al.~\cite{cohen-addad-etal21:coreset-framework} presented a unified coreset framework encompassing Euclidean, doubling, and minor-free metrics, as well as general metric spaces. While these results provide remarkably small coresets and broad applicability, an important algorithmic challenge remains: 
\emph{Can coreset-based methods also yield unified framework for designing faster (EPASes) approximation schemes for (constrained) Clustering?}


\paragraph*{EPASes for (Constrained) Clustering.} 
An \emph{efficient parameterized approximation scheme} (EPAS)\footnote{A 
$(1+\varepsilon)$-approximation algorithm that runs in time $f(k,\varepsilon)\poly(n)$ for every $\varepsilon>0$.} is perhaps the most powerful synthesis of approximation and parameterized algorithms, and has become a central theme in recent algorithmic research (see~\cite{AbbasiClustering23, bandyapadhyay2024parameterized, matouvsek2000approximate, ostrovsky2013effectiveness, ding2020unified} and the references therein). However, for $k$-Median and $k$-Means, it is impossible to obtain EPASes for arbitrary metric spaces~\cite{Cohen-AddadG0LL19} under standard complexity-theoretic conjectures. Consequently, much of the literature has focused on designing EPASes for \emph{special metrics}, such as continuous Euclidean spaces~\cite{KumarSS10, badoiu-etal:approximate-clustering-coresets}. 

In a recent work, Abbasi et al.~\cite{AbbasiClustering23} presented a unified EPAS framework that applies to a variety of clustering objectives as well as diverse metric spaces. To achieve this, they introduced the notion of \emph{$\varepsilon$-scatter dimension} (see Section~\ref{sec:scatter-dim} for the formal definition), which encompasses many commonly studied metric families, including continuous Euclidean and doubling spaces. Subsequently, Bourneuf and Pilipczuk~\cite{BourneufP25} showed that metrics induced by graphs from any fixed proper minor-closed graph class also have bounded scatter dimension. Moreover, they developed metric analogs of well-known graph invariants from the theory of \emph{sparsity}, and as a consequence constructed a \emph{coreset} for $k$-Center in any proper minor-closed graph class whose size is polynomial in $k$, with the exponent depending on the graph class and $\frac{1}{\varepsilon}$. 

Abbasi et al.~\cite{AbbasiClustering23} bypassed the use of coresets to obtain their clustering results. They argued that deriving clustering algorithms via coresets faces information-theoretic limitations, particularly for $k$-Center clustering, since small coresets do not exist in high-dimensional Euclidean spaces~\cite{braverman2019coresets}. However, an inherent limitation of their approach is that it only applies to clustering objectives in which each point is assigned to its nearest chosen center, i.e., \emph{Voronoi clustering}.\footnote{A recent work of Gadekar and Inamdar~\cite{DBLP:conf/stacs/Gadekar025} also used scatter dimension to design an EPAS for a common generalization of $k$-Center and $k$-Median that is not captured by the general norm framework of~\cite{AbbasiClustering23}, further demonstrating the broader applicability of the scatter-dimension concept. However, the clustering objective considered in~\cite{DBLP:conf/stacs/Gadekar025} still only supports \emph{Voronoi} assignments.} 
Indeed, Abbasi et al. explicitly posed the following broad open question:
\begin{question}\label{ques}
\emph{Is it possible to design EPASes for constrained $(k,z)$-Clustering\footnote{Informally, the objective of the $(k, z)$-clustering includes the sum of $z$-th power of distances, where $1 \le z < \infty$ is fixed. For example, $z = 1, 2$ corresponds to $k$-Median and $k$-Means, respectively. A formal definition follows.} in metric spaces of bounded (algorithmic) scatter dimension, for assignment constraints such as capacities, fairness, and diversity?}
\end{question}

In fact, \cite{AbbasiClustering23} asked this question for an even more general constrained clustering problem; however that problem is so general that it also captures the \emph{Capacitated} $k$-\emph{Center} problem, which is known to be \textsf{W[1]}-hard to approximate within any factor arbitrarily close to $1$ in metric spaces of bounded highway dimension~\cite{FeldmannV22}. Furthermore, it has been observed that metric spaces of bounded highway dimension also have bounded scatter dimension; \footnote{See the talk on~\cite{AbbasiClustering23} at FOCS 2023: \url{https://focs.computer.org/2023/schedule/}.} implying that the original question of~\cite{AbbasiClustering23} cannot be answered in the affirmative. Therefore, we reinterpret their question by restricting our focus to constrained $(k,z)$-Clustering, while still avoiding the hardness barrier of~\cite{FeldmannV22}.

The above discussion motivates a more general view of clustering under constraints.
Rather than focusing on any specific objective or restriction, we consider the broad setting of \emph{constrained $(k,z)$-clustering}, which captures a wide spectrum of problems including capacitated, fault-tolerant, fair, outlier, and matroid-constrained variants.  Our goal is to develop a unified algorithmic framework that yields efficient parameterized approximation schemes for this family of problems in metric spaces of bounded (algorithmic) scatter dimension, thereby also addressing the underlying challenge posed in Question~\ref{ques}. Our formulation is deliberately generic: it views clustering as an \emph{assignment optimization} problem, where each data point may fractionally distribute its weight among multiple centers under feasibility constraints. This unified perspective underlies our results and enables reasoning about diverse constrained objectives within a single framework. 

We now formally define the problem.


\subsection{Problem Formulation}
We consider a generalized notion of \probname, which we call \emph{constrained \probname}. 
The input consists of a set of points $P$ (also called clients) with $|P|=n$, a set of facilities $F$, and a metric space $M=(P\cup F,d)$. 
Each point $p\in P$ has a nonnegative weight $w(p)$, and we are given a positive integer $k$.  

The goal is to choose a center set $X\subseteq F$ with $|X|=k$, together with an assignment function 
$f: P\times X \rightarrow \mathbb{R}_{\ge 0}$ satisfying
\[
\sum_{x\in X} f(p,x)=w(p) \qquad \text{for all } p\in P,
\]
so as to minimize the clustering cost
\[
\cost(X)=\sum_{p\in P}\sum_{x\in X} f(p,x)\cdot d(p,x)^z,
\]
for a fixed constant $z\ge 1$.  
In addition, we impose two types of constraints:  
\begin{description}[leftmargin=*]
    \setlength\itemsep{-0.5em}
    \item[(i) \emph{center constraints}]that restrict the allowable center sets $X$, and
    \item[(ii) \emph{assignment constraints}]that restrict feasible assignments $f$ for a given feasible center set.
\end{description} 
This general framework captures a wide range of well-studied constrained clustering problems. Below, we describe several important special cases.

\begin{itemize}
    \item \textbf{Capacitated Clustering~\cite{Li17,DemirciL16,KhullerS00,CyganHK12,Cohen-AddadL19Capacitated}} In this setting, each facility $c \in F$ has a capacity $\eta_c$ such that the total points assigned to $c$ are at most $\eta_c$. This is captured by the constraint for all $c\in C$ that   $\sum_{p \in P} f(p,x) \le \eta_c$, for all $p \in P$.\footnote{We consider a general setting of capacitated clustering, similar to~\cite{Cohen-AddadL19Capacitated}, where points can be assigned fractionally to multiple facilities.}
    \item \textbf{Fault Tolerant Clustering~\cite{DBLP:journals/talg/HajiaghayiHLLS16,DBLP:journals/tcs/KhullerPS00,10.1007/978-3-642-13036-6_19}:} In this problem, each point $p \in P$ is assigned to $\ell$ closest centers in $X$ to $p$. This can be obtained by the constraint $f(p,x) \le \tfrac{w(p)}{\ell}$ for all $p \in P$ and $x \in X$, and considering the normalized cost $\sum_{p \in P} \sum_{x \in X} f(p,x) \cdot (d(p,x))^z$, which is  $(\tfrac{1}{\ell})^{th}$ factor of the original cost of $X$.
    \item \textbf{$(\alpha,\beta)$-Fair Clustering~\cite{CKLV17,BandyapadhyayFS24}:} This problem partitions $P$ into $\ell$ colors $P_1,\dots, P_\ell$, and we are given a range $[\alpha_i,\beta_i]$ for each color $i \in [\ell]$. The goal is to find a set $X$ of $k$ centers and a clustering  of $P$ such that, in each cluster, the proportion of color-$i$ points to the cluster points is in the range $[\alpha_i,\beta_i]$. This problem is captured by adding the following constraint on $f$: for each center $x \in X$ and each color $i \in [\ell]$, we have $\frac{\sum_{p \in  P_i} f(p,x) }{\sum_{p \in P} f(p,x) } \in [\alpha_i,\beta_i]$. 
    \item \textbf{Outlier Constraint~\cite{CharikarKMN01,DBLP:conf/soda/Chen08}: } Here, we are given additional parameter $M \le W$, where $W =\sum_{p} w(p)$, and the constraint is to assign all the weight of points except $M$  to $X$. This can be achieved by relaxing the constraint $\sum_{x \in X} f(p,x) = w(p)$ to $\sum_{x \in X} f(p,x) \le w(p)$, and adding the constraint $\sum_{p \in P}\sum_{x \in X} f(p,x) \ge W-M$.
    \item \textbf{Matroid constrained Clustering~\cite{KrishnaswamyLS18,DBLP:conf/soda/KrishnaswamyKNSS11,DBLP:journals/talg/Swamy16}:} In this problem, we are additionally given a matroid $\mathbb{M}$ over $F$, and the constraint is to have a center set $X$ that is an independence set in $\mathbb{M}$. A particular example of this problem is a fair variant of $k$-median~\cite{10.1145/TGOO}, where each facility has one of the $k$ colors, and there is a bound $\alpha_i$ for each color. The constraint is to choose $X$ such that it has exactly $\alpha_i$ centers of color $i$.
\end{itemize}

\old{\newpage
\section{Old Introduction}
Clustering plays an indispensable role in our understanding of very large scale data. In $k$-clustering problems, we are given a set $\mathcal{P}$ of data points in a metric space $\mathcal{M}$, and the goal is to partition $\mathcal{P}$ into $k$ partitions (namely, \emph{clusters}) such that nearby points should ideally be in the same part, and each cluster is represented by a center. 
Clustering problems have been studied extensively in theoretical computer science, machine learning, computer vision, among other fields. Among the various notions of clustering, \emph{$k$-Median} and \emph{$k$-Means} are arguably the most popular and have been studied algorithmically for decades~\cite{kanungo2002efficient, DBLP:conf/stoc/Har-PeledM04, BCPSS24, DBLP:conf/stoc/Cohen-AddadEMN22, badoiu-etal:approximate-clustering-coresets, KumarSS10}. 
Clustering under capacity constraints is a natural and fundamental problem, and hence, quite conceivably, \emph{Capacitated $k$-Median} and \emph{Capacitated $k$-Means} have garnered lot of attention in recent years; see~\cite{BFRS15, ChuzhoyR05, Li17, Li16, DemirciL16, CharikarGTS02, VC20}. 

\emph{Capacitated $k$-Median} and \emph{Capacitated $k$-Means} are inherently harder than the vanilla (\emph{Uncapacitated}) $k$-Means or $k$-Median, since there is hard constraint on the number of clusters as well as the number of points that can be assigned to each cluster. In fact, till-date the best known approximation for capacitated $k$-Median remains $O(\log k)$ (folklore), 
using the probabilistic embedding of metric spaces into trees (Adamczyk et al. gave an explicit proof in~\cite{AdamczykBMM019} for completeness). Constant-factor approximation algorithms exist when either the capacities~\cite{Li17, Li16}, or the number of clusters~\cite{DemirciL16, ByrkaRU16}, are allowed to be exceeded by a $(1+\varepsilon)$ factor, for some constant $\varepsilon>0$.
From the lower bounds side, surprisingly, no better lower bound than $(1+\frac{2}{e}-\epsilon)$ is known, which is inherited from the uncapacited version. Given this evident gap, it is natural to ask whether the approximation landscape can be improved for metric spaces with special properties, possibly at the cost of higher running time. While it remains an open question whether a constant-factor approximation algorithm exists for general metrics that runs in polynomial time, Cohen-Addad and Li~\cite{Cohen-AddadL19Capacitated}, in 2019, showed a $(3+\epsilon)$-approximation for the problem, running in time FPT in $k$ and $\epsilon$.
However, even in the uncapacitated case and with FPT running time, the lower bound remains at $1+\frac{2}{e}$, assuming Gap-ETH.
Notably, for uncapacitated case, this lower bound can be matched by an FPT algorithm~\cite{Cohen-AddadG0LL19Tight}. 
A central open question, thus,  is: \emph{For which families of metric spaces, is it possible to obtain better upper bounds for capacitated $k$-Median/Means.}
The only improvement in the affirmative sense is known for Euclidean spaces. For instance, for Euclidean spaces, Cohen-Addad and Li showed that there exists $(1+\varepsilon)$-approximation algorithms for both problems with running time $f(\varepsilon,k) n^{O(1)}$~\cite{Cohen-AddadL19Capacitated} (see also~\cite{BhattacharyaJK18}). \footnote{In this context, a recent work of \cite{Huang0L025} on coresets for constrained clustering is also relevant. We discuss the implications of this work and its relation with our work in \Cref{subsec:remarks}.}

As mentioned before, stronger upper bounds are known for \emph{uncapacitated $k$-Median/Means}. Indeed, Cohen-Addad et al.~\cite{Cohen-AddadG0LL19Tight} obtained tight $(1+\frac{2}{e}+\varepsilon)$ and $(1+\frac{8}{e}+\varepsilon)$ approximation factors for $k$-Median and $k$-Means in general metric spaces, respectively, with running times FPT in $k$ and $\epsilon$. 
However, an important question remains: \emph{Can the lower bound barriers for uncapacitated 
$k$-Median/Means be broken for special metric spaces?} To that end, improved FPT-approximation algorithms have been developed for Euclidean spaces; see~\cite{badoiu-etal:approximate-clustering-coresets,KumarSS10}. B\u{a}doiu, Har-Peled and Indyk~\cite{badoiu-etal:approximate-clustering-coresets} showed that, for points in $\mathbb{R}^d$, a $(1+\varepsilon)$-approximate $k$-Median can be computed in expected $2^{{k/\varepsilon}^{O(1)}}d^{O(1)}n \log^{O(k)} n$ time. 
Subsequently, Har-Peled and Mazumdar in their major work~\cite{Har-PeledM05}, showed the existence of small coresets for the problems of computing $k$-Median and $k$-Means clustering for points in low dimension. 
Kumar, Sabharwal and Sen in their breakthrough work~\cite{KumarSS10} gave simple randomized algorithms for $k$-Median/Means that obtain $(1+\varepsilon)$-approximation solutions with constant probability, running in time $O(2^{{(k/\varepsilon)}^{O(1)}} dn)$. A fundamental aspect of these works is that they primarily rely on key sampling ideas (or, roughly speaking, \emph{coresets}) that crucially exploits properties of the Euclidean space. Moreover, all these  works primarily focus on the \emph{continuous} Euclidean cases (where centers can be opened anywhere), except that the result of Kumar et al.~\cite{KumarSS10} also works for discrete $k$-Means. 
Feldman et al.~\cite{feldman2013turning} obtained 
coresets for
$k$-Means, PCA and projective clustering whose sizes are independent of the dimension.
Later, Sohler and Woodruff~\cite{sohler2018strong}
generalized these results and obtained coresets for $k$-Median and for subspace
approximation of sizes independent of the dimension $d$.
Braverman et al.~\cite{braverman2021coresets} designed coresets for  for $k$-Median in (the shortest-path metric of) an excluded-minor graph, which is also applicable to Euclidean metrics by using a terminal embedding result of Narayanan and Nelson~\cite{NarayananN19} that extends
the \emph{Johnson-Lindenstrauss Lemma}. After a long sequence of work~\cite{cohen2021near, DBLP:conf/stoc/HuangV20, DBLP:conf/icml/BakerBHJK020, braverman2021coresets, feldman2011unified}, finally in 2021, Cohen-addad, Saulpic, and Schwiegelshohn~\cite{cohen-addad-etal21:coreset-framework} gave a unified framwork for a large variety of metrics, ranging from Euclidean space, doubling metric, minor-free metric, as well as capturing the general metric case. However, while these results provide excellent bounds on coreset size, a key concern remains: the computational cost of clustering using them. Hence, a natural question to ask: \emph{Can coresets be used to obtain faster algorithms for $k$-Clustering?}

\paragraph*{EPASes for (Constrained) Clustering.} 
Efficient parameterized approximation scheme (EPAS)\footnote{A 
$(1+\varepsilon)$ approximation algorithm that runs in time $f(k,\varepsilon)\poly(n)$ for every $\varepsilon>0$.} is perhaps, the best combination of approximation and parameterized algorithms, and has been a majorly popular theme in recent algorithms research (see~\cite{AbbasiClustering23, bandyapadhyay2024parameterized, matouvsek2000approximate, ostrovsky2013effectiveness, ding2020unified} and the references therein). However, for $k$-Median/Means, unfortunately, it is not possible to obtain EPASes for points in general metric spaces~\cite{dasgupta08hardness-2-means, awasthi2015hardness}. Therefore, quite conceivably, much of the work focused on designing EPASes for points in special metrics, e.g., continuous Euclidean spaces~\cite{KumarSS10, badoiu-etal:approximate-clustering-coresets}. In a recent work~\cite{AbbasiClustering23}, Abbasi et al. 
presented a unified EPAS which works for various clustering objectives as well as diverse metric spaces. 
To that end, they introduced the notion of \emph{$\varepsilon$-scatter dimension} (see Section~\ref{sec:scatter-dim} for definition), which includes various commonly studied metric spaces, including, continuous Euclidean space, doubling spaces, etc. Very recently, Bourneuf and Pilipczuk~\cite{BourneufP25} showed that metrics induced by graphs from any fixed proper minor-closed graph class also have bounded scatter dimension. Moreover, they introduced metric analogs of well-known graph invariants from the theory of \emph{sparsity}, and as a consequence of this toolbox showed a \emph{coreset} for $k$-Center in any proper minor-closed graph class whose size is polynomial in $k$ but the exponent of the polynomial depends on the graph class and $\frac{1}{\varepsilon}$. Abbasi et al.~\cite{AbbasiClustering23} bypassed \emph{coresets} to obtain their clustering results. 
They mentioned that, obtaining clustering algorithms via coresets faces information-theoretic limitation, particularly in the context of $k$-Center, since coresets of desirable sizes do not exist in high-dimensional Euclidean spaces~\cite{braverman2019coresets}. However, a seemingly inherent limitation of their approach was that they could only handle clustering objectives, where each point was assigned to the closest chosen center, i.e., \emph{Voronoi clustering}\footnote{A recent work of Gadekar and Inamdar~\cite{DBLP:conf/stacs/Gadekar025} also used scatter dimension to design an EPAS for a common generalization of $k$-Center and $k$-Median that is not captured by the general norm framework of~\cite{AbbasiClustering23}; demonstrating that the notion of (algorithmic) scatter dimension has a wider applicability in clustering.
However, the clustering objective considered in \cite{DBLP:conf/stacs/Gadekar025} still can only handle \emph{Voronoi} assignments.}. 
Indeed, in their work they explicitly asked the following open question:

\begin{tcolorbox}
\begin{question}\label{ques}
       \emph{Is it possible to design EPASes for constrained $(k,z)$-Clustering in metric spaces of bounded (algorithmic) scatter dimension, for assignment constraints such as capacities, fairness, and diversity?}
\end{question}
\end{tcolorbox}

In fact, \cite{AbbasiClustering23} asked this question for a more general problem, called constrained $k$-Clustering. However, this problem is so general that it also captures Capacitated $k$-Center, which is known to be W$[1]$-hard to approximate to a factor arbitrarily close to $1$ in metric spaces of bounded highway dimension~\cite{FeldmannV22}. Furthermore, it has been observed that metric spaces of bounded highway dimension have bounded scatter dimension\footnote{See the talk on \cite{AbbasiClustering23} at FOCS 2023: \url{https://focs.computer.org/2023/schedule/}.}, implying that the original question of~\cite{AbbasiClustering23} cannot be answered in the affirmative. Therefore, we reinterpret their question by restricting our focus to constrained $(k, z)$-clustering, where the  objective  is to minimize the sum of $z$th power of distances.
This problem is general enough to capture many well studied clustering problems (e.g., $z= 1, 2$, correspond to $k$-Median and $k$-Means, respectively), but also manages to bypass the hardness of result of~\cite{FeldmannV22}. 
}

\subsection{Scatter Dimension vs.\ Algorithmic Scatter Dimension}
Our algorithms work for  metric spaces with bounded scatter dimension.
At a high level, the scatter dimension\footnote{For simplicity, we focus on a particular metric space. These definitions can analogously be extended to a class of metric spaces; see~\Cref{sec:prelims}.} of a metric space $M=(P\cup F,d)$ is the length of the longest sequence of center--point pairs $(x_1,p_1),(x_2,p_2),\dots$, where $(x_i,p_i)\in F\times P$ such that (i) $d(x_i,p_i)>1+\epsilon$ for all pairs in the sequence, and (ii) $d(x_j,p_i)\le1$ for all $i<j$. Thus, this is a combinatorial property of a metric space that bounds the length of such sequences, which are called \emph{$\epsilon$-scattering sequences}.
Note that this notion neither relies on nor requires the existence of an algorithm
to compute such a sequence.

On the other hand, the algorithmic scatter dimension of $M$ is the length of the longest $\epsilon$-scattering sequence computed by the \emph{best} ball-intersection algorithm, where ``best'' refers to an optimized procedure that produces the smallest possible length of such sequence for arbitrary choices of points. However, to design approximation algorithms under bounded scatter dimension, one still requires the existence of a ball-intersection algorithm for the metric space. In fact, for our framework, we require a stronger ball-intersection algorithm, namely $\mathcal{A}_B$ mentioned above, which not only satisfies the distance constraints but also returns a feasible center.

\subsection{Our Contributions}\label{ss: contri}


In this subsection, we begin by presenting an overview of our unified framework for constrained clustering problems, followed by a statement of the main result associated with this framework. We then discuss a range of applications of this result.

\paragraph*{A Unified Framework for Constrained Clustering.} In this work, we design a unified framework that yields EPASes for a large class of constrained  $(k, z)$-clustering problems, in metric spaces with bounded (algorithmic) scatter dimension. This, in particular, answers \Cref{ques} of Abbasi et al.~\cite{AbbasiClustering23} from FOCS 2023.
%
%
Towards this, we say that a center set $X$ of size $k$ is \emph{feasible} if it satisfies the center constraints Analogously, we say, for  a center set $X$  and  an assignment $f$, $(X,f)$ is \emph{feasible} if $X$ is  feasible and the assignment $f$ from  from $P$ to $X$ satisfies the assignment constraint. 
We require that the constrained $(k, z)$-clustering problem satisfies the following (somewhat informally stated) requirements. See Section~\ref{subsec:formalsetup} for the formal setup.
\begin{enumerate}
    \item There exists an algorithm $\mathcal{A}_C$ that computes a (client set) coreset $(Y, w)$ for the problem of size $\lr{\frac{k \log n}{\epsilon}}^{O(1)}$. \footnote{We emphasize here that, the required bound of $\lr{\frac{k \log n}{\epsilon}}^{O(1)}$ on the coreset size is attainable even in \emph{general metric spaces} for many natural constrained $(k, z)$-clustering problems; thus we do not need to rely on metric-specific properties to obtain coresets whose size is independent of $\log n$.
    } 
    \item There exists an algorithm $\mathcal{A}_A$ that takes as an input any weighted set of points $Z$ and candidate set of $k$ centers, $C$. Either it finds a near-optimal assignment from  all feasible  assignments from $Z$ to $C$, or outputs that no feasible assignment from $Z$ to $C$ exists.
    \item There exists an algorithm $\mathcal{A}_B$ that accepts $k$ requests, $Q_1,\dots,Q_k$, where $Q_i = \{(p, r): p \in Y, r \in \mathbb{R}\}$ and returns $X=\{x_1,\dots,x_k\}, x_i \in F$ such that  $X$ is feasible and for all $i\in[k]$, $x_i$ satisfies request set $Q_i$, i.e., $d(p,x_i)\le r$ for $(p,r)\in Q_i$,  or outputs $\bm{\bot}$ if no such center set exists. 
    
\end{enumerate}
We call such a problem as \emph{\consprob} problem. We allow each of the three algorithms to run in time FPT in $k, \epsilon$, and potentially some problem-specific parameters. These requirements are extremely mild and are satisfied by many natural clustering problems. For instance, the capacitated $(k, z)$-clustering problem admits a coreset of the aforementioned size~\cite{Cohen-AddadL19Capacitated} in general metrics, its assignment algorithm can be implemented via a minimum-cost flow subroutine, and the algorithm $\mathcal{A}_B$ returns a feasible center with maximum capacity.

Our main result is summarized in the following theorem.

\begin{theorem}[Informal version of \Cref{thm:main}] \label{thm:introinformal}
    There exists a $(1+\epsilon)$-approximation for \consprob\ running in time $h(k, \epsilon) \cdot |\cI|^{O(1)}$ in metric spaces of bounded scatter dimension, where $|\cI|$ denotes the size of the input instance $\cI$. In addition, if the coresets for the problem can be computed deterministically, then the algorithm is deterministic.
\end{theorem}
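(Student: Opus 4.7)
The plan is to combine the coreset algorithm $\mathcal{A}_C$ with a branching search over ``request lists'' driven by the algorithmic scatter dimension $\lambda(\cdot)$, invoking $\mathcal{A}_A$ and $\mathcal{A}_B$ as black boxes for the constrained assignment and for synthesizing a candidate center consistent with a set of distance constraints. First I would call $\mathcal{A}_C$ on the input $\cI$ to obtain a weighted coreset $(Y, w)$ of size $\coresetsize = (k\log n/\epsilon)^{O(1)}$, so that the cost of any $k$ centers together with any feasible assignment is preserved up to a $(1\pm\epsilon)$ factor. It therefore suffices to find a near-optimal feasible solution on $(Y, w)$; a standard dyadic guess of the target value $\tau$ close to $\opt$ adds only a logarithmic overhead.

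Fixing $\tau$, I would build a search tree whose nodes are labeled by $k$ request lists $R_1, \dots, R_k$, each $R_i$ consisting of pairs $(p, r)$ with $p \in Y$ and $r$ from a geometric discretization of $O(\log(1/\epsilon)/\epsilon)$ radii. At each node, I call \ballint{} on every $R_i$ to either obtain a candidate center $c_i$ within distance $(1+\epsilon)r$ of every $p \in R_i$, or prune the branch. For the resulting tuple $C = (c_1, \ldots, c_k)$, I then call \findassgn{} on $(Y, w, C)$ to compute the best feasible assignment and its weighted $(k,z)$-cost; if this cost is at most $(1+\epsilon)\tau$, I record the solution. Otherwise, I branch by creating one child for every triple $(i, p, r) \in [k] \times Y \times \{\text{discretized radii}\}$, appending $(p, r)$ to $R_i$; this yields a branching factor of $O(k\cdot\coresetsize \cdot \log(1/\epsilon)/\epsilon)$.

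Correctness rests on the invariant that along at least one root-to-leaf path, every request in $R_i$ is satisfied by the true optimal center $o_i$ of cluster $i$. This invariant is maintained because at each step some enumerated triple $(i, p, r)$ matches the optimal choice. Under the invariant, both the true $o_i$ and the \ballint{} output $c_i$ lie in the intersection of the ball constraints, so each newly added $p$ is a point on which $c_i$ and $o_i$ disagree significantly — exactly the $\epsilon$-scatter sequence around $o_i$ whose length is at most $\sclen$ by the definition of algorithmic scatter dimension. Summing over the $k$ clusters bounds the depth along the good branch by $\pathlen$, and combining with the branching factor and the costs of the three subroutines yields total running time $\algotime$. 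Determinism carries over when $\mathcal{A}_C$ is deterministic, since the enumeration, \ballint{}, and \findassgn{} are all deterministic.

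The main obstacle I expect lies in the branching step: identifying a useful witness in the constrained setting. In the Voronoi framework of Abbasi et al., the single overpaying point in the Voronoi assignment induced by the current centers directly pinpoints the cluster whose center guess must be refined, so only one branch need be explored. In the constrained case the feasible assignment returned by \findassgn{} for the guessed centers may differ arbitrarily from the optimal feasible assignment for the true centers, so a single overpaying point no longer identifies a unique cluster. I would bypass this difficulty by explicitly enumerating all $(i, p)$ pairs at each branching step, paying a $k\cdot\coresetsize$ blow-up per level — affordable precisely because the coreset size is polynomial in $k, \log n, 1/\epsilon$, while the scatter dimension bound keeps the branch depth independent of $n$.
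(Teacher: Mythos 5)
Your overall architecture matches the paper's: coreset first, then a branching search over request sets in which \ballint{} synthesizes candidate centers, \findassgn{} computes a near-optimal feasible assignment, and --- because the Voronoi witness argument breaks for constrained assignments --- the algorithm branches over \emph{all} pairs $(i,p)$, which is affordable since $|Y|$ is small. You also correctly gesture at the existence of a witness pair when the cost exceeds $(1+O(\epsilon))\opt$ (if no point were overpaying relative to its optimal cluster, $f^*$ itself would certify a cheap feasible assignment to $X$). All of that is the paper's Algorithm 2 and the first half of its Claim~\ref{lem:witnesslemma}.

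However, there is a genuine gap in your depth bound. You assert that the radii can be drawn from a geometric discretization of only $O(\log(1/\epsilon)/\epsilon)$ values and that the per-cluster scattering sequence has length $\sclen$, but \Cref{lem:len of scattering} only gives such a bound when the radii appearing in a single request set lie in an interval of aspect ratio $O(1/\epsilon)$. Nothing in your construction enforces this: with empty initial request lists, the radii added to $R_i$ over the course of the recursion can span the full aspect ratio $\Delta$ of the instance, so the discretization needs $\Theta(\log\Delta/\epsilon)$ values and the depth becomes $\Omega(k\lambda\log\Delta/\epsilon)$, destroying the FPT running time. The paper closes this hole with a \emph{leader-guessing} phase in the outer loop: for each cluster it guesses the coreset point $p^*_i$ closest to $o_i$ and (a $(1+\epsilon)$-approximation of) its distance $r'_i$, seeds $Q_i$ with the request $(p^*_i, r'_i)$, and restricts branching to points in $\ball(p^*_i, 6r'_i/\epsilon)\setminus\ball(x_i,r'_i)$. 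Parts (c) and (d) of Claim~\ref{lem:witnesslemma} then show the witness always lies in this window --- in particular, any point farther than $6r'_i/\epsilon$ from the leader is automatically served within factor $(1+\epsilon)$ by $x_i$ via the triangle inequality --- so every radius added to $Q_i$ falls in $[r'_i, 7r'_i/\epsilon]$, and only then does the scatter-dimension lemma yield depth $\pathlen$. A second, smaller omission: you guess the radius $r$ of each new request as a separate branching dimension, whereas the paper derives it deterministically as $\d(p,x_i)/(1+\epsilon)$ from the current center; your variant can be made to work (some discretized $r$ sits between $\d(p,o_i)$ and $\d(p,x_i)/(1+\epsilon/2)$ when $p$ is a witness), but you would need to say so explicitly, and it costs an extra factor in the branching degree.
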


As a corollary of this theorem, we obtain EPASes for several well-constrained $(k, z)$-clustering problems in metric spaces of bounded scatter dimension. These results are summarized in the first half of \Cref{tab:your_label}. In \Cref{subsubsec:scatterdim}, we state the results and outline the key ingredients used to obtain them. The formal proofs are given in Section~\ref{subsec:capacitated}, Section~\ref{subsec:fairclustering}, and Section~\ref{subsec:fault}, respectively.

In light of \Cref{ques}, it is natural to ask whether \Cref{thm:introinformal} can be extended to metric spaces of \emph{algorithmic} scatter dimension. While there are fundamental obstacles to achieving such an extension in full generality, which we discuss in \Cref{subsec:overview}, we are nevertheless able to obtain EPASes for certain specific instances of \consprob. These results are stated in the second half of \Cref{tab:your_label}, and in \Cref{subsubsec:algscatter}, with a detailed discussion provided in \Cref{ss:app:un}.



\begin{table}[t]
    \centering
    \renewcommand{\arraystretch}{1.5}
    \begin{tabular}{|c|c|c|c|c|}
        \hline
        \rowcolor{lightgray} 
        \textbf{Dimension}  & \textbf{Constraint} & \textbf{FPT Factor} & \textbf{Theorem} & \textbf{Section} \\
        \hhline{|=|=|=|=|=|}
        \multirow{3}{*}{\begin{tabular}[c]{@{}c@{}}Scatter Dimension \end{tabular}}  
          &
        \makecell{Capacities\\{\tiny uniform/non-uniform}} & \multirow{3}{*}{\begin{tabular}[c]{@{}c@{}} $\exp{(O(\frac{k \Lambda}{\epsilon} \log(\frac{1}{\epsilon}) \log(\frac{k \Lambda}{\epsilon}) )})$ * \end{tabular}}  & \multirow{3}{*}{\begin{tabular}[c]{@{}c@{}}Thm.~\ref{thm:capmatfault} \end{tabular}}    & \ref{subsec:capacitated}  \\
         \cline{2-2} \cline{5-5}
         & \makecell{Matroid} &  & & \ref{subsec:matroid} \\
         \cline{2-2}\cline{5-5}
         & \makecell{Fault Tolerant} &  &  & \ref{subsec:fault} \\
         \cline{1-5}         
           \multirow{2}{*}{\begin{tabular}[c]{@{}c@{}}Algorithmic Scatter\\  Dimension \end{tabular}}  
          & \makecell{$(\alpha, \beta)$-Fairness\\{\tiny Also: chromatic \& $\ell$-diversity}} & $\exp{(O(\frac{k \Gamma \Lambda}{\epsilon} \log(\frac{1}{\epsilon}) \log(\frac{k \Gamma \Lambda}{\epsilon}) )})$ * & Thm.~\ref{thm:fair} & \ref{subsec:fairclustering} \\
        \cline{2-5}
         &  Vanilla & $\exp{(O(\frac{k \Lambda}{\epsilon} \log(\frac{1}{\epsilon}) \log(\frac{k}{\epsilon}) )})$  & Thm.~\ref{thm:vanilla} & \ref{subsec:vanilla}\\
        \hline
    \end{tabular}
    \caption{\small Consequences of our EPAS framework for \consprob. In the third column, only the FPT part of the running time is specified, and $\Lambda \coloneqq \lambda(\epsilon/c)$, where $\lambda(\cdot)$ is the (algorithmic) scatter dimension of the metric space, and $c$ is a constant that only depends on $z$. In the fourth row, $\Gamma$ represents the number of equivalence classes w.r.t.~ colors (see \Cref{def:fkm}). For the results marked as ``*'', \emph{if the coresets for the problem can be computed deterministically}, then our EPAS is completely deterministic. Note that the results of \cite{AbbasiClustering23} only work for the vanilla version, and the FPT factor of their randomized EPAS is $\exp(O(\frac{k \Lambda}{\epsilon} \log^2(\frac{k}{\epsilon}) ))$; thus our EPAS shaves off a $\log k$ factor from the exponent.}
    \label{tab:your_label}
\end{table}

\subsubsection{EPASes for Bounded Scatter Dimension} \label{subsubsec:scatterdim}
Although, the problems mentioned here implicitly impose a soft constraint of selecting distinct centers in the solution, this can be circumvented by the method of color coding (see~\Cref{ss:app:con}). This technique preprocesses the input instance in FPT time such that the facilities have one of the $k$ colors and for a fixed optimal center set, all the centers have distinct colors. 

\paragraph*{EPASes for Capacitated,  Matroid, and Fault Tolerant $(k, z)$-Clustering:} As a consequence of our framework, we obtain the first EPASes for Capacitated $(k,z)$-Clustering in metric spaces of bounded  scatter dimension. Prior to this, EPASes for \emph{Capacitated $k$-Median/Means} were known only in Euclidean Spaces~\cite{Cohen-AddadL19Capacitated, BhattacharyaJK18}.
Furthermore, our results extend to  Matroid and Fault tolerant $(k, z)$-Clustering.



\begin{restatable}{theorem}{capkm} \label{thm:capmatfault}
    For any fixed $z \ge 1$, the following problems admit an EPAS in metric spaces of  $\epsilon$-scatter dimension bounded by $\lambda(\epsilon)$ running in time $\exp{(O(\frac{k \Lambda}{\epsilon} \log(\frac{1}{\epsilon}) \log(\frac{k \Lambda}{\epsilon}) )} \cdot |\cI|^{O(1)}$, where  $\Lambda \coloneqq \lambda({\epsilon}/{c})$, and $c$ is a constant that only depends on $z$: (i) Capacitated $(k, z)$-Clustering, (ii)  Matroid $(k, z)$-Clustering, and (iii) Fault tolerant $(k, z)$-Clustering.
\end{restatable}



\paragraph*{EPASes for metric spaces of bounded highway dimension.}

As referenced earlier, it has been observed that metric spaces of bounded highway dimension (BHD) also have bounded scatter dimension. Assuming this result, our framework implies an EPAS for \consprob~for any fixed $z \ge 1$ in BHD metric spaces; although we refrain from stating the result as a formal corollary since a formal proof of the claim that BHD metric spaces have bounded scatter dimension is not available in the public domain. 

We find this result interesting for a couple of reasons. Firstly, we are not aware of coresets of size $O_{\varepsilon, k}(1)$ for \consprob~in BHD metric spaces, i.e., coresets whose size is independent of $n$. Therefore, this would be the first EPAS for many instances of \consprob~in BHD metric spaces, to the best of our knowledge.  Secondly, our result in particular implies an EPAS for Capacitated $(k, z)$-Clustering in BHD metrics. This result is in stark contrast with Capacitated $k$-Center, for which the result of \cite{FeldmannV22} rules out such an EPAS (or even a PAS, i.e., an algorithm running in time $f(k) \cdot |\cI|^{g(\epsilon)}$) in BHD metrics. 



\subsubsection{EPASes for Bounded Algorithmic Scatter Dimension} \label{subsubsec:algscatter}
We now show applications of our framework to obtain EPAS for metrics of  bounded algorithmic scatter dimension for certain special cases.

\paragraph*{Fair \probname.}
Our framework extends to handle  $(\alpha,\beta)$-fair clustering\footnote{In this version, points are divided into multiple (potentially overlapping) groups, and we are additionally given upper and lower bounds on the number of points of each group that can belong to each of the $k$ clusters. The goal is to find a minimum-cost clustering of points satisfying these fairness constraints. A formal definition is given in \Cref{subsec:fairclustering}.}~\cite{BandyapadhyayFS24,BhattacharyaJK18}).  Again, this problem has a soft constraint of having distinct centers in $X$, but note that if the clusters corresponding to multiple copies of a center are ''fair", then they can be merged together by keeping just one copy of the center.
 
\begin{restatable}{theorem}{fairkm} \label{thm:fair}
     $(\alpha,\beta)$-Fair $(k, z)$-Clustering admits an EPAS in metric spaces of algorithmic $\epsilon$-scatter dimension bounded by $\lambda(\epsilon)$ running in time $\exp{(O(\frac{k \Gamma \Lambda}{\epsilon} \log(\frac{1}{\epsilon}) \log(\frac{k \Gamma \Lambda}{\epsilon}) )}) \cdot |\cI|^{O(1)}$, where  $\Lambda \coloneqq \lambda({\epsilon}/{c})$, and $c$ is a constant that only depends on $z$.
\end{restatable}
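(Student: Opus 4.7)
The plan is to instantiate the general framework of Theorem~\ref{thm:introinformal} (formally, \Cref{thm:main}) for the $(\alpha,\beta)$-fair variant; this reduces the task to exhibiting the three sub-routines $\mathcal{A}_C$, $\mathcal{A}_A$, $\mathcal{A}_B$ with the right running-time guarantees, after verifying that $(\alpha,\beta)$-fair $(k,z)$-clustering fits the \consprob\ template. A crucial structural observation is that fairness constraints only restrict which points may be \emph{assigned} to a cluster and impose no requirement on where centers may be opened. Consequently, the ball-intersection oracle $\mathcal{A}_B$ is identical to the one used in the vanilla case: given a collection of requests, we need only find any facility point lying in the intersection of the prescribed balls, and this is supplied directly by the bounded algorithmic scatter dimension of the underlying metric, with no dependence on the fairness structure.

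For the coreset oracle $\mathcal{A}_C$, the plan is to invoke an existing construction for constrained clustering, specialized to $(\alpha,\beta)$-fair objectives, which yields a color-preserving weighted set whose size is polynomial in $k$, $\Gamma$, $\log n$, and $1/\epsilon$. This fits within the $(k\log n/\epsilon)^{O(1)}$ budget demanded by the framework (modulo the $\Gamma$ factor that we carry through into the final exponent). Color-preservation is essential, because the assignment oracle must be able to enforce the per-color lower and upper bounds when restricted to the coreset.

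The substantive new piece is the assignment oracle $\mathcal{A}_A$. Given the weighted coreset $Z$ and a candidate set $C$ of $k$ centers, we need to compute (or report infeasibility of) a near-optimal $z$-cost assignment of points of $Z$ to $C$ subject to the $(\alpha,\beta)$ constraints for each of the $\Gamma$ color equivalence classes in each of the $k$ clusters. The natural model is a min-cost transportation instance on a tripartite network: sources are the points of $Z$ (with supply equal to their coreset weight), intermediate nodes are $(\text{cluster},\text{color-class})$ pairs carrying the lower/upper capacities read off from $\alpha,\beta$, and sinks are the $k$ clusters; arc costs are $z$-th powers of distances. Because the constraint matrix is of transportation type, the LP relaxation has an (essentially) integral optimum solvable in polynomial time, and both feasibility detection and cost computation fall out of the same flow computation. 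The parameter $\Gamma$ enters here through the $\Theta(k\Gamma)$ intermediate nodes and through the color-dependent size of the coreset itself.

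Plugging $\mathcal{A}_C$, $\mathcal{A}_A$, $\mathcal{A}_B$ into \Cref{thm:main} then yields the EPAS. The running time follows by substituting $\coresetsize=(k\Gamma\log n\cdot\epsilon^{-1})^{O(1)}$ into the generic bound from the framework: the $\log(k\coresetsize)$ term in the exponent collapses to $\log(k\Gamma/\epsilon)$ after absorbing $\log\log n$ into the polynomial factor, producing the stated $\exp\bigl(O\bigl(\tfrac{k\Gamma\Lambda}{\epsilon}\log(\tfrac{1}{\epsilon})\log(\tfrac{k\Gamma\Lambda}{\epsilon})\bigr)\bigr)\cdot|\cI|^{O(1)}$. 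The step I expect to be the most delicate is the interface between fractional coreset weights and the integrality of the fair assignment: one must argue either that the transportation-type polytope yields an optimum that is itself feasible on the original (integral) instance up to a $(1+\epsilon)$-loss, or that a rounding step can be carried out without violating any $(\alpha,\beta)$ bound by more than a controllable amount. Propagating this cleanly through the coreset's $(1+\epsilon)$-guarantee, while keeping $\mathcal{A}_A$'s runtime polynomial, is the main technical hurdle the proof must address.
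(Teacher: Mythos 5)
Your overall plan coincides with the paper's: both instantiate the general framework (\Cref{thm:main}) by supplying the three oracles, with the ball-intersection oracle unchanged from the vanilla case and the coreset taken from prior work. However, there is a genuine gap in your assignment oracle $\mathcal{A}_A$. The $(\alpha,\beta)$-fairness constraints are \emph{ratio} constraints, $\alpha_i \le |\{x \in P_i : f(x)=c\}| / |\{x \in P : f(x)=c\}| \le \beta_i$, in which the denominator (the total size of cluster $c$) is itself a decision variable; moreover the groups $P_i$ may overlap, so a point contributes simultaneously to several numerators. You therefore cannot ``read off'' fixed lower/upper capacities on $(\text{cluster},\text{color-class})$ nodes, and the resulting constraint system is not of transportation type: neither polynomial-time solvability nor integrality of the LP optimum follows. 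The paper instead invokes the assignment algorithm of \cite{BandyapadhyayFS24}, which runs in time $(k\Gamma)^{O(k\Gamma)} \cdot n^{O(1)}$ --- FPT, not polynomial --- essentially by enumerating (or doing dynamic programming over) the coloring-constraint matrices $M \in \mathbb{N}^{k\times\ell}$ restricted to the $\Gamma$ equivalence classes. This is exactly where the extra factor of $\Gamma$ multiplying $\tfrac{k\Lambda}{\epsilon}$ in the exponent of the stated running time comes from; if your polynomial-time flow oracle existed, that factor would not be needed, which is a further sign the claim is too strong.

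A secondary imprecision: the coreset the paper uses is a \emph{universal} coreset in the sense of \Cref{def:universalcoresets}, i.e., it preserves $\wcost(\cdot,X,M)$ for \emph{every} center set $X$ and \emph{every} coloring constraint $M$, not merely a ``color-preserving'' weighted set. This uniformity over $M$ is what lets the framework's consistency argument go through when the near-optimal assignment on the coreset induces a different color profile than the optimal assignment on $P$. Finally, the integrality concern you flag at the end is not where the difficulty lies --- the coreset weights are integral and the assignment algorithm of \cite{BandyapadhyayFS24} handles weighted integral instances exactly --- so the proposal spends its caution in the wrong place while asserting without justification the step that actually fails.
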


\paragraph*{Faster EPASes for Vanilla $(k, z)$-Clustering.} 
Since Vanilla (a.k.a. uncapacitated/Voronoi) $(k, z)$-Clustering is a special case of Well constrained $(k, z)$-Clustering, we obtain an EPAS for this variant with running time $\exp{(O(\frac{k \Lambda}{\epsilon} \log(\frac{1}{\epsilon}) \log(\frac{k \Lambda}{\epsilon}) )} \cdot |\cI|^{O(1)}$. We note that the running time of the EPAS from \cite{AbbasiClustering23} is $\exp(O(\frac{k \Lambda}{\epsilon} \log^2(\frac{k}{\epsilon}) )) \cdot |\cI|^{O(1)}$. Thus, note that our running time is already better than that of \cite{AbbasiClustering23} by a factor of $\log k$ in the exponent, but we incur an additional $\log \Lambda$ factor. However, in \Cref{subsec:vanilla}, we describe a simple modification to our framework, where we replace a key deterministic branching step by a randomized one, leading to an improvement in the running time. Specifically, compared to the EPAS of \cite{AbbasiClustering23}, our resulting running time \emph{matches} the dependence on $\epsilon$, and \emph{shaves off} a $\log k$ factor from the exponent. More formally, we prove the following theorem in \Cref{subsec:vanilla}.

\begin{restatable}{theorem}{vanillathm} \label{thm:vanilla}
    For any fixed $z \ge 1$,  $(k, z)$-Clustering admits an EPAS in metric spaces of algorithmic $\epsilon$-scatter dimension bounded by $\lambda(\epsilon)$ running in time $\exp{(O(\frac{k \Lambda}{\epsilon} \log(\frac{1}{\epsilon}) \log(\frac{k}{\epsilon}) )} \cdot |\cI|^{O(1)}$, where  $\Lambda \coloneqq \lambda({\epsilon}/{c})$, and $c$ is a constant that only depends on $z$.
\end{restatable}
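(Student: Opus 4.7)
The plan is to retain the coreset-based framework that proves \Cref{thm:cap}, and replace only the single branching step in which the framework deterministically enumerates the next coreset ``witness'' point. The savings come from a simple observation: in vanilla $(k,z)$-Clustering the Voronoi assignment is always feasible, so once a $k$-tuple of centers is committed, the cost can be evaluated directly on the coreset without invoking the $\findassgn$ oracle, and the algorithm only needs to locate \emph{any} $k$-tuple of centers whose Voronoi cost on $(Y,w)$ is within a $(1+\epsilon)$ factor of the optimum.

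Concretely, I would first invoke the coreset oracle $\mathcal{A}_C$ to compute $(Y,w)$ of size $\coresetsize = (k\log n/\epsilon)^{O(1)}$, and then grow a tree of partial solutions in which every node carries (i) a set of requests passed to $\ballint$, and (ii) the centers committed so far. In \Cref{thm:cap} each branching level must enumerate every candidate next witness $p \in Y$ (after discretizing radii in geometric steps), which yields an effective branching factor of $O(k\Lambda/\epsilon)$ and hence the $\log(k\Lambda/\epsilon)$ factor in the exponent of the running time. The modification is to replace this enumeration by a randomized guess: sample a single $p \in Y$ with probability proportional to its $w$-weighted $z$-th power of the distance to the centers committed so far.

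A standard $D^z$-sampling-style argument then shows that, conditioned on the current partial solution lying on a successful path, each still-uncovered optimal cluster contributes an $\Omega(1/k)$ fraction of the residual $(k,z)$-cost on $(Y,w)$; moreover, for vanilla clustering \emph{any} coreset point inside that cluster suffices as a witness to continue the scatter-dimension recursion, because we do not need to distinguish between alternative witnesses when we evaluate the final solution against the Voronoi cost. Consequently each randomized branch succeeds with probability $\Omega(\epsilon/k)$, as opposed to $\Omega(\epsilon/(k\Lambda))$ for a uniform guess. Since the depth of the branching tree is still $O\lr{\tfrac{k\Lambda}{\epsilon}\log(1/\epsilon)}$ (this depth is controlled by the algorithmic scatter dimension, not by how we pick witnesses), boosting the per-trial success probability to a constant requires $\lr{\tfrac{k}{\epsilon}}^{O\lr{k\Lambda/\epsilon\cdot \log(1/\epsilon)}} = \exp\lr{O\lr{\tfrac{k\Lambda}{\epsilon}\log(1/\epsilon)\log(k/\epsilon)}}$ independent repetitions, which matches the claimed bound; correctness of each successful run is inherited from the proof of \Cref{thm:cap}, with $\findassgn$ replaced by direct evaluation of $\wcost$ under Voronoi assignment.

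The main obstacle, I expect, will be justifying rigorously the claim that within a sampled uncovered cluster \emph{any} point can serve as the next scatter witness: if the scatter-dimension recursion inside \Cref{thm:cap} implicitly requires a \emph{specific} witness within the cluster, then this $\log \Lambda$ cannot be shaved by a single uniform guess. In that event I would fall back on two-stage sampling --- first pick a cluster with probability proportional to its residual $D^z$-mass (giving an $\Omega(1/k)$ event), and then, inside the cluster, maintain a short list of plausible scatter-defining points whose length is $O(1)$ per branching level thanks to the algorithmic bound $\lambda(\epsilon/\cst)$ applied only to that cluster. This would keep the per-step success probability at $\Omega(\epsilon/k)$ while charging the $\Lambda$ factor only once, to the depth of the tree, thereby preserving the $\log(k/\epsilon)$ improvement.
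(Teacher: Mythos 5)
Your proposal follows essentially the same route as the paper: replace the deterministic branching over witness pairs by sampling a coreset point with probability proportional to $w(p)\,\d(p,X)^z$ together with a uniform guess of its cluster index among the at most $k$ balls $B_i$ containing it, yielding per-step success probability $\Omega(\epsilon/k)$ over a recursion of depth $O(\tfrac{k\Lambda}{\epsilon}\log(\tfrac{1}{\epsilon}))$, and hence the claimed bound. One intermediate claim should be restated, though: what is actually needed is not that each uncovered optimal cluster carries an $\Omega(1/k)$ fraction of the residual cost (which is false in general), but that the set $W$ of \emph{unhappy} points --- those with $\d(p,X)>(1+\epsilon)\d(p,O)$ --- satisfies $W\subseteq \bigcup_{i\in[k]} B_i$ and carries an $\Omega(\epsilon)$ fraction of the current cost $\sum_{p\in Y} w(p)\d(p,X)^z$; the $\Omega(\epsilon/k)$ then follows, and your fallback two-stage scheme is unnecessary, since any unhappy point paired with its optimal cluster index is a valid witness for the consistency argument.
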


\subsection{Technical Overview} \label{subsec:overview}

We start with an overview of the main ideas behind the EPAS of \Cref{thm:introinformal}, including the new challenges faced to handle \emph{constrained} clustering, as compared to prior work, and how we handle them. This is discussed in \Cref{subsubsec:techniques}. Then, in \Cref{subsubsec:algoscatchallenges}, we discuss why we cannot extend the result to \emph{algorithmic} scatter dimension. Finally, in \Cref{subsubsec:comparison} we compare our results with another orthogonal approach based on the work of \cite{Huang0L025}.

\subsubsection{Our Techniques} \label{subsubsec:techniques}
We briefly describe the technical ideas used to obtain the result, focusing mainly on constrained $k$-Median, for simplicity. Our framework is roughly follows the outline of the framework of~\cite{AbbasiClustering23} proposed  for Voronoi clustering problems.
On a high level, their framework is as follows. 
Recall that, in their problem, a point is assigned to the closest center, unlike our problem, where a point can be assigned to multiple centers. 
Let $\cC=\{\cC_1,\dots\cC_k\}$ be an optimal clustering of the coreset along with the corresponding centers $O=\{o_1,\dots,o_k\}$, and let $\opt$ be the optimal cost.
The main idea of their framework is to maintain a collection of $k$ request sets $(Q_1,\dots,Q_k)$ and a set $X=(x_1,\dots,x_k)$ of centers \emph{satisfying} the request sets. More specifically, each request set $Q_i$ is a collection of point-distance pairs $(p,\alpha) \in Y \times \mathbb{R}$, representing the demand that $p$ wants a center within distance $\alpha$ from itself. Additionally, the framework maintains an invariant that for $(p,\alpha)\in Q_i$, it must be that $p \in \cC_i$, i.e., $p$ belongs to $i^{th}$ cluster in the optimal clustering $\cC$.
The idea is to add requests to these request sets to force the algorithm to refine its solution $X$ until it becomes a $(1+\epsilon)$-approximate solution.
To argue this refinement, they show that if the cost of the current clustering $\cC'$ is larger than $(1+\epsilon) \cdot \opt$, then there must exist an \emph{unhappy}  point $p$:  $p \in \cC_i$, i.e., $p$ is assigned to $o_i$ in the optimal solution, such that $\d(p, x_i) > (1+\epsilon/3) \cdot \d(p, o_i)$, based on the fact that $p$ is assigned to the closest center. This gives a handle to refine $x_i$ by adding the request $(p,\tfrac{d(p,X)}{1+\epsilon})$ to $Q_i$, and by guessing $i \in [k]$ correctly.\footnote{One may view this as the algorithm trying to simulate $o_i$ by refining $x_i$.} Now, if the aspect ratio of the distances in $Q_i$ are bounded, then it can be shown that such a refinement can occur only bounded number of times, when the metric space has bounded algorithmic scatter dimension. Towards this, they introduce a concept of  \emph{upper bounds}  to initialize each $Q_i$, which again, is based on Vornoi-assignment.  
 However, we face multiple challenges when we try to extend their framework to our problems.
\begin{itemize}
    \item \emph{Upper bounds: } 
    As mentioned above, the upper bounds are inherently based on Vornoi properties,  and it is not clear how to design them for constrained problems.
    \item \emph{Assigning a point to multiple centers: } This distinction highlights a fundamental difference between our setting and the one considered in~\cite{AbbasiClustering23}. In their problem, a point $p$ is always assigned to a (closest) single center. This allows them to guess the index of its center $o_i$ in the optimal solution, and then use its current distance to refine $x_i$, as mentioned above.
    On the other hand, in our setting, each point $p$ may be assigned to multiple centers.
    In fact, the bottleneck arises in a subtle way even when a point is assigned to only one (not necessarily a closest one) center. This is because, for constrained clustering, it is possible that $p$ is assigned to $x_j \in X$ in $\cC'$, while it is assigned to $o_i$ in $\cC$, for $i \neq j$, and it holds that $\d(p,x_j)>(1+\epsilon/3)\d(p,o_i)$ due to the cost of $(X,\cC')$. On the other hand, it can happen that $\d(p,x_i)$ is much smaller than $\d(p,x_j)$ (the worst case is when $\d(p,x_i)=0$, while $\d(p,x_j) > (1+\epsilon/3)\d(p,o_i)$),
which means that adding a request $(p,\tfrac{\d(p,x_i)}{1+\epsilon})$ does not refine $x_i$, as $x_i$ might have already satisfied the request. This can potentially make the algorithm run forever.
\end{itemize}

We now highlight our ideas to tackle these challenges.

\vspace{.1cm}
\noindent\emph{\textbf{Bounding distance aspect ratio of requests.}}
Towards this goal, we exploit the existence of small sized coresets for the problem:  we use the \emph{leader guessing} technique, which is a usual starting point in obtaining FPT-approximations for many clustering problems~\cite{Cohen-AddadG0LL19Tight,Cohen-AddadL19Capacitated}.
Let $(Y,w)$ be the coreset, and $(O, f^*)$ be its optimal solution.
More specifically, for each optimal center $o_i \in O$, we guess a point $p_i \in P$ with $f^*(p,o_i)>0$  and its approximate distance $\vartheta_i \approx \d(p_i,o_i)$\footnote{Approximate up to a factor of $(1+\epsilon)$.}, such  that $p_i$ is closest to  $o_i$ (ties broken arbitrarily). Since the size of the coreset is bounded, this guessing amounts to FPT many branches. 
Next, we add requests $(p_i,\vartheta_i)$ to $Q_i$ for all $i \in [k]$, and using algorithm $\mathcal{A}_B$, we compute a feasible set of centers $X$ satisfying all $Q_i$s. This means that, for each $i \in [k]$, center $x_i \in X$ satisfies $\d(p_i, x_i) \le (1+\epsilon)\d(p,o_i)$.
Finally, we use algorithm $\mathcal{A}_A$ to compute a near-optimal assignment $f$ of points in $Y$ to $X$.
This gives a baseline of minimum distances in each of the $k$ request sets maintained by the algorithm, and  hence, allows us to bound the distance aspect ratio of requests in each $Q_i$ by $O(1/\epsilon)$, whereas the loose upper bound approach of~\cite{AbbasiClustering23}  bounds it only by $O(k/\epsilon)$. This helps us in obtaining improved running time for uncapacitated $(k,z)$-clustering, as described above. 

\vspace{.1cm}
\noindent\emph{\textbf{Handling non-Voronoi assignments.}}
To tackle the multiple assignments to (non-Voronoi) centers issue, we consider a different line of argument. Specifically, we do the following. Let $(X,f)$ be the current solution whose cost is more than $(1+\epsilon)$ times the optimal cost. Let $(O,f^*)$ be a fixed optimal solution. 
Let $\hat{f}^*$ be the assignment of points in $P$ to $X$ based on $f^*$, i.e.,  $\hat{f}^*(p,x_i):=f^*(p,o_i)$, for all $i \in [k]$. 
Our crucial observation is that there must exist a pair $(p,i)$, called unhappy pair, such that $f^*(p,x_i)>0$, however $d(p,x_i) > (1+\epsilon/3)d(p,o_i)$; otherwise, this implies that the cost $(X,\hat{f}^*)$ is at most $(1+\epsilon)$ times optimal cost, and hence a contradiction to the cost of $(X,f)$, since $f$ is the best possible assignment on $X$. 
Since, the size of coreset $Y$ is small, we can  guess such unhappy pair $(p,i)$ using at most $|Y|\cdot k$ many branches.
Next, we add the request $(p, \frac{\d(p, x_i)}{1+\epsilon/3})$ to request set $Q_i$. Finally, we recompute $X$ using $\mathcal{A}_B$ to additionally satisfy the new request, and repeat the whole process, if the cost of the recomputed $X$ is still larger than $(1+\epsilon)\opt$.  
From the algorithmic point of view, the key step is to always maintain a set of centers that not only satisfies all the request sets, but also satisfies the additional constraints of the problem. For example, in the case of capacitated clustering, in every iteration,  we maintain a center $x_i$ that has the maximum capacity satisfying the request set $Q_i$, for every $i \in [k]$. This allows us to leverage the scatter dimension bound, while maintaining a feasible solution during the execution of the algorithm.
Furthermore, we show that these ideas allow us to bound the distances aspect ratio of requests in each $Q_i$ by $O(1/\epsilon)$. This, in turn, allows us to bound the number of requests that can be added to each $Q_i$ in metric spaces with bounded scatter dimension.
Thus, this leads to a recursive algorithm with branching factor $|Y|\cdot k $ and depth $O(\frac{k}{\epsilon} \log(\frac{1}{\epsilon})  \cdot \lambda(\frac{\epsilon}{c}))$, where $\lambda(\cdot)$ is the algorithmic scatter dimension of the metric space, and $c$ is a constant. 
Since $|Y| = \lr{\frac{k \log n}{\epsilon}}^{O(1)}$, this leads to the claimed running time by employing the standard trick, namely, $(\log n)^{t} = t^{O(t)} \cdot n^{O(1)}$.

\subsubsection{Challenges of Extending to Algorithmic Scatter Dimension} \label{subsubsec:algoscatchallenges}
Having presented the technical overview of our EPAS for metrics of bounded scatter dimension, we now revisit the issue of why this framework cannot be extended to metrics of bounded algorithmic scatter dimension. 
The notion of algorithmic scatter dimension is tied to a specific ball-intersection algorithm that returns a center satisfying a given set of distance constraints. Its boundedness guarantee applies only to this particular procedure; alternative algorithms may result in unbounded behavior.

In constrained clustering, however, center selection is subject to additional feasibility requirements. For example, capacitated clustering enforces global capacity constraints, while matroid constraints restrict the center set to be an  independent set. The ball-intersection algorithm underlying algorithmic scatter dimension is oblivious to such constraints, and its output may violate feasibility.
Furthermore, many constraints impose exclusivity or soft restrictions on center selection, such as selecting each facility at most once. Modifying the ball-intersection algorithm to respect these constraints may invalidate the boundedness guarantees of algorithmic scatter dimension.

These issues are intrinsic to constrained clustering. To overcome them, we base our framework on the purely combinatorial notion of $\varepsilon$-scatter dimension. This relaxation allows us to reason about arbitrary feasible center constructions while still exploiting bounded scattering behavior.

\subsubsection{Comparison to the work of  Huang et al.~\cite{Huang0L025}.} \label{subsubsec:comparison}  
We note that a recent work of Huang et al.~\cite{Huang0L025} gives a coreset framework for a number of constrained $(k, z)$-clustering objectives, including capacities and fairness, in several metric spaces. For continuous Euclidean, planar, and minor-free metrics, they design coresets for these objectives whose size is \emph{independent} of $\log n$. In \Cref{subsec:remarks}, we sketch an approach to obtain EPASes in the corresponding settings via essentially brute forcing on such coresets. However, this approach results in substantially worse running times as compared to our approach (e.g., $2^{\Omega_\epsilon(k^3 \log k)} \cdot \mathcal{I}^{O(1)}$ vs.~$2^{O_\epsilon(k \log k)} \cdot \mathcal{I}^{O(1)}$, see ~\Cref{subsec:capacitated} for Capacitated setting). Furthermore, although their framework designs coresets for many metric spaces that have bounded scatter dimension; it does not appear to give such a result for \emph{all metric spaces of bounded scatter dimension}. For example, metrics of bounded highway dimension (which have bounded scatter dimension), do not seem to be captured by their framework. Therefore, in our opinion, their result along with brute force enumeration
does not satisfactorily resolve \Cref{ques}.

\begin{table}[!ht]
    \centering
    \begin{tabular}{|c|c|c|}
    \hline
        \textbf{} & \textbf{This paper} & \textbf{Present}   \\ \hline
        FPT factor & \cellcolor[HTML]{EFEFEF} $2^{O_\epsilon(k\log k)}$ & $2^{O_\epsilon(k^3\log k)}$ \text{(follows from \cite{Huang0L025})} \\ \hline
    \end{tabular}
    \caption{\small EPAS running time  for Uniform capacitated $k$-Median in high dimensional Euclidean spaces.}\label{tab:capkmed}
\end{table}

\section{Preliminaries} \label{sec:prelims}


We consider a metric (clustering) space $M=(P,F,\d)$, where $P$ is a set of $n$ points/clients, and $F$ is a (possibly infinite) set of facilities, and $\d$ is a metric over $P \cup F$. Let $\Delta_M$ be the distance aspect ratio of $M$; for brevity we use $\Delta$ instead of $\Delta_M$ when $M$ is clear from the context. Furthermore, a class $\mathcal{M}$ of metric spaces is a potentially infinite set of metric spaces. Note that, in case $F$ is infinite, the set $F$ may not be explicitly provided in the input. Additionally, points in $P$ have weights defined by the weight function $w: P \rightarrow \mathbb{R}_{\ge 0}$.

For a real $B> 1$ and $\delta >0$, we denote by $[B]_\delta$, the set containing $(1+\delta)^j$, for integers $0 \le j \le \frac{2 \log B}{\log (1+\delta)}$.

\subsection{Scatter Dimension}\label{sec:scatter-dim}
In this section, we revise useful facts related to  scatter dimension that were introduced in \cite{AbbasiClustering23}. For a comprehensive background on scatter dimension, see~\cite{AbbasiClustering23}. 

\begin{definition}[$\epsilon$-Scatter Dimension]\label{def:sd}
    Given a class $\cM$ of finite metric spaces, a metric space $M=(P,F,d) \in \cM$, and $\epsilon \in (0,1)$, an \emph{$\epsilon$-scattering sequence} in $M$ is a sequence $(x_1,p_1),\cdots, (x_\ell,p_\ell)$ of center-point pairs $(x_i,p_i) \in F \times P$ such that $d(x_i,p_i)> (1+\epsilon)$ for all $i \in [\ell]$, and $d(x_j,p_i)  \le 1$ for all $1\le i < j \le \ell$. The \emph{$\epsilon$-scatter dimension} of $M$ is the length of the longest $\epsilon$-scattering sequence present in $M$. The \emph{$\epsilon$-scatter dimension} of $\cM$ is the supremum of the $\epsilon$-scatter dimension over all $M \in \cM$.
\end{definition}

In this paper, we consider the following equivalent definition of $\epsilon$-scatter dimension, that is helpful in analyzing the algorithms.
\begin{definition}[$\epsilon$-Scatter Dimension (alternate definition)]\label{def:sdeq}
    Given a class $\cM$ of finite metric spaces, a metric space $M=(P,F,d) \in \cM$, and $\epsilon \in (0,1)$, an \emph{$\epsilon$-scattering sequence} in $M$ is a sequence  $(x_1,p_1,\alpha_1),\cdots, (x_\ell,p_\ell,\alpha_\ell)$ of center-point-radius triples $(x_i,p_i,\alpha_i) \in F \times P \times \mathbb{R}_{\geq 0}$ such that $d(x_i,p_i)> (1+\epsilon)\alpha_i$ for all $i \in [\ell]$, and $d(x_j,p_i)  \le r_i$ for all $1\le i < j \le \ell$.
    The \emph{$\epsilon$-scatter dimension} of $M$ is $\lambda_{M}(\varepsilon)$ if any $\epsilon$-scattering sequence contains at most $\lambda_{{M}}(\varepsilon)$ many triples per radius value. The \emph{$\epsilon$-scatter dimension} of $\cM$ is the supremum of the $\epsilon$-scatter dimension over all $M \in \cM$.
\end{definition}

\begin{definition}[Ball Intersection Problem]\label{def:ballint}
    In the \emph{Ball intersection problem}, we are given a metric space $M=(P,F,\d)$ from a metric class $\mathcal{M}$, a finite set  $Q \subsetneq P \times \mathbb{R}_{\ge 0}$ of point-distance constraints, called \emph{request set}, and a an error parameter $\eta >0$. The goal is to either (i) find a center $x \in F$ such that for all $(p,\alpha) \in Q$ if $\d(p,x) \le (1+\eta)\alpha$; or (ii) report  failure indicating that there does not exist a center $x^\star \in F$ such that for all $(p, \alpha) \in Q$, $\d(p, x^\star) \le \alpha$.

    We say $\mathcal{M}$ \emph{admits a ball intersection algorithm} if there is an algorithm that correctly solves the ball intersection problem for every metric space $M \in \mathcal{M}$, and runs in time polynomial in the size\footnote{When $F$ is infinite, then the size of $M$ is polynomial in $|P|$ and the space of storing a point.} of $M$ and $1/\eta$.
\end{definition}

\begin{definition}[Algorithmic $\varepsilon$-Scatter Dimension]\label{def:algsd}
	Let $\mathcal{M}$ be a class of metric spaces with ball intersection algorithm $\ballint_\mathcal{M}$. Let $M \in \mathcal{M}$ and $\varepsilon \in (0,1)$.  A \emph{$(\ballint_\mathcal{M},\varepsilon)$-scattering sequence} is a sequence $(x_1,p_1,\alpha_1),\dots,(x_\kappa,p_\kappa,\alpha_\kappa)$, where $\kappa$ is some positive integer, and for $i \in [\kappa], x_i \in F$, $p_i \in P$ and $\alpha_i \in \mathbb{R}_+$ such that
	\begin{itemize}
		\item $x_i = \ballint_\mathcal{M}(M,\{(p_1,\alpha_1),\dots,(p_{i-1},\alpha_{i-1})\},\varepsilon/2)$ \quad $\forall 2 \le i \le \kappa$
		\item  $\d(x_i,p_i) > (1+\varepsilon)\alpha_i$ \qquad $\forall i \in [\kappa]$
	\end{itemize}
	The \emph{algorithmic $(\varepsilon,\ballint_\mathcal{M})$-scatter dimension} of $\mathcal{M}$ is $\lambda_{\mathcal{M}}(\varepsilon)$ if any $(\ballint_\mathcal{M},\varepsilon)$-scattering sequence contains at most $\lambda_{\mathcal{M}}(\varepsilon)$ many triples per radius value. The \emph{algorithmic $\varepsilon$-scatter dimension} of $\mathcal{M}$ is the minimum algorithmic $(\varepsilon,\ballint_\mathcal{M})$-scatter dimension over any ball intersection algorithm $\ballint_\mathcal{M}$ for $\mathcal{M}$.
\end{definition}

The following lemma from~\cite{AbbasiClustering23} relates $\varepsilon$-scatter dimension of a class $\cM$ of finite metric spaces to its algorithmic $\varepsilon$-scatter dimension. The proof constructs a ball intersection algorithm that simply searches for a feasible center exhaustively. This allows us to bound $(\epsilon,\ballint)$-scatter dimension of $\cM$ for any ball intersection algorithm \ballint\ admitted by $\cM$. 
\begin{lemma}[\cite{AbbasiClustering23}]\label{lem:algvsnonalg}
    Every class $\cM$ of finite and explicitly given metric spaces with bounded $\varepsilon$-scatter dimension $\lambda(\epsilon)$ has algorithmic $\varepsilon$-scatter dimension $\lambda(\epsilon)$. In fact, for any ball intersection algorithm \ballint\ admitted by $\cM$, algorithmic $(\epsilon,\ballint)$-scatter dimension of $\cM$ is bounded by $\lambda(\epsilon)$. 
\end{lemma}

In this paper, we  design an EPAS framework for \probname\ in  metric spaces of bounded $\varepsilon$-scatter dimension. But the analysis of our framework (and algorithms) is based on its algorithmic $\varepsilon$-scatter dimension, which is also bounded due to the above lemma. Towards this, we will construct a suitable ball intersection algorithm that guarantees the constraints of the problem are satisfied. For this, we use the following lemma that is handy for bounding the number of iterations of our algorithm.
\begin{lemma}[\cite{AbbasiClustering23, DBLP:conf/stacs/Gadekar025}]\label{lem:len of scattering}
	For any class $\mathcal{M}$ of metric spaces with algorithmic $\varepsilon$-scatter dimension $\lambda(\varepsilon)$, there
	exists a \ballint\ algorithm $\mathcal{C}_{\mathcal{M}}$ with the following property. Given $\varepsilon \in (0,1)$, a constant $t\ge 1$, and $a_i >0, \tau_i\ge 2$ for $i \in [t]$, any $(\mathcal{C}_{\mathcal{M}}, \varepsilon)$-scattering contains $O(\sum_{i\in [t]}\lambda(\sfrac{\varepsilon}{2}) (\log\tau_{i})/\varepsilon)$ many triples whose radii lie in the interval $\cup_{i \in [t]}[a_i, \tau_i a_i]$.
\end{lemma}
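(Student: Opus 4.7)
The plan is to invoke the standard argument from \cite{AbbasiClustering23,DBLP:conf/stacs/Gadekar025}: take any ball intersection algorithm $\ballint_\mathcal{M}$ that realizes the algorithmic $(\varepsilon/2)$-scatter dimension $\lambda(\varepsilon/2)$ of $\mathcal{M}$ (such an algorithm exists since Definition~\ref{def:algsd} takes the minimum over all ball intersection algorithms), and build $\mathcal{C}_\mathcal{M}$ on top of it by discretizing input radii on a geometric scale. The discretization collapses all radii in $[a_i,\tau_i a_i]$ into just $O((\log\tau_i)/\varepsilon)$ buckets; the scatter-dimension bound of $\ballint_\mathcal{M}$ then caps each bucket at $\lambda(\varepsilon/2)$ triples, and summing over $i\in[t]$ yields the claim.

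Concretely, on input $(M,Q,\eta)$ with $Q=\{(p_j,\alpha_j)\}_j$, the algorithm $\mathcal{C}_\mathcal{M}$ rounds each $\alpha_j$ \emph{up} to the nearest power of $(1+\eta/4)$, obtaining $\alpha'_j$ with $\alpha_j\le\alpha'_j\le(1+\eta/4)\alpha_j$, and then returns $\ballint_\mathcal{M}(M,\{(p_j,\alpha'_j)\}_j,\eta/2)$. Any center feasible for the original $Q$ is also feasible for the (looser) rounded request, so $\ballint_\mathcal{M}$ cannot spuriously report $\bot$; and any returned $x$ satisfies $\d(p_j,x)\le(1+\eta/2)\alpha'_j\le(1+\eta/2)(1+\eta/4)\alpha_j\le(1+\eta)\alpha_j$ for $\eta\in(0,1)$. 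Hence $\mathcal{C}_\mathcal{M}$ is a valid ball intersection algorithm.

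Next I would show that any $(\mathcal{C}_\mathcal{M},\varepsilon)$-scattering sequence $(x_i,p_i,\alpha_i)_{i=1}^{\kappa}$ pulls back, when the radii are replaced by their rounded counterparts $\alpha'_i$, to a genuine $(\ballint_\mathcal{M},\varepsilon/2)$-scattering sequence. Indeed, the construction invokes $\ballint_\mathcal{M}$ with inner error $\varepsilon/4=(\varepsilon/2)/2$, matching the requirement in Definition~\ref{def:algsd} at parameter $\varepsilon/2$; and the violation clause holds because $(1+\varepsilon/2)\alpha'_i\le(1+\varepsilon/2)(1+\varepsilon/8)\alpha_i<(1+\varepsilon)\alpha_i<\d(x_i,p_i)$, where the middle inequality is a direct calculation valid for $\varepsilon\in(0,1)$. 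Consequently, among the triples in the original $(\mathcal{C}_\mathcal{M},\varepsilon)$-scattering sequence, at most $\lambda(\varepsilon/2)$ can share any fixed rounded-radius value $\alpha'$.

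Finally, any original radius in $[a_i,\tau_i a_i]$ has rounded counterpart in $[a_i,(1+\varepsilon/8)\tau_i a_i]$, which contains at most $O(\log_{1+\varepsilon/8}\tau_i)=O((\log\tau_i)/\varepsilon)$ distinct powers of $(1+\varepsilon/8)$. Multiplying by $\lambda(\varepsilon/2)$ triples per bucket and summing over $i\in[t]$ delivers the stated bound. The only delicate point is to balance the two slack factors---the discretization factor $(1+\varepsilon/8)$ and the inner error $(1+\varepsilon/4)$---so that both $\mathcal{C}_\mathcal{M}$ itself meets the ball-intersection guarantee with slack $(1+\eta)$ and the induced rounded sequence still satisfies the $(1+\varepsilon/2)$-violation clause of a $(\ballint_\mathcal{M},\varepsilon/2)$-scattering; the choices above suffice, and any comparably small constant multiples of $\varepsilon$ in the two places would work equally well.
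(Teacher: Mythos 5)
The paper imports this lemma from \cite{AbbasiClustering23, DBLP:conf/stacs/Gadekar025} and gives no proof of its own, so there is nothing internal to compare against; judged on its own, your reconstruction is correct and is exactly the standard argument from those works: wrap an optimal \ballint\ algorithm with geometric rounding of the radii, check that the wrapper still meets the $(1+\eta)$ ball-intersection guarantee, pull the $(\mathcal{C}_\mathcal{M},\varepsilon)$-scattering back to a $(\ballint_\mathcal{M},\varepsilon/2)$-scattering on the rounded radii, and multiply the per-radius bound $\lambda(\varepsilon/2)$ by the $O((\log\tau_i)/\varepsilon)$ distinct rounded values per interval. Your constant bookkeeping ($(1+\varepsilon/2)(1+\varepsilon/8)\le 1+\varepsilon$ and $(1+\eta/2)(1+\eta/4)\le 1+\eta$ for parameters in $(0,1)$) checks out, and the only loose end --- that the inner algorithm realizing $\lambda(\varepsilon/2)$ is fixed per $\varepsilon$ while the lemma asks for one algorithm uniform in $\varepsilon$ --- is a definitional subtlety inherited from the cited papers rather than a gap in your argument.
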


\section{Framework}
In this section, we present our EPAS framework for constrained \probname. 

\subsection{Formal Setup of the Framework.} \label{subsec:formalsetup} 
In this paper, we consider \probname\ with \emph{assignment constraints}, called \emph{constrained $(k,z)$-Clustering}. The general setup is as follows. Let $z\ge 1$ be a fixed integer. We are given an instance $\cI$ of constrained $(k,z)$-Clustering over a metric space $M=(P,F,\d)$ along with weight function $w: P \rightarrow \mathbb{R}_{\ge 0}$,
and an integer $k \ge 1$. 
The goal is to select a set $X \subseteq F$ of size $k$ and an assignment $f: P \times X \rightarrow \mathbb{R}_{\ge 0}$ with $\sum_{x \in X} f(p,x)=w(p)$ for every $p\in P$, ensuring that $(X,f)$ is feasible. We refer to such $X$ as a \emph{feasible} set of centers and
such an $f$ as a \emph{feasible assignment from $P$ to $X$}.  
The cost of pair $(X, f)$ --- which we also refer to as a solution---is defined as $\cost(P, X, f) \coloneqq \sum_{p \in P} \sum_{x \in X} f(p,x )\cdot \d(p, x)^z$. Furthermore, the cost of a feasible set of $k$ centers $X \subseteq F$ is defined as $\cost(P, X) \coloneqq \min_{f} \cost(P, X, f)$, where the minimum is taken over all feasible assignments $f$. If   $(f,X)$ is feasible, we refer to $(X,f)$ as a \emph{feasible solution}.
If no feasible assignment $f$ exists, or if $X$ is not feasible, then we define $\cost(P, X) \coloneqq \infty$. The goal is to find a feasible solution $(X, f)$ with minimum cost. In addition, our framework requires that the Constrained $(k,z)$-Clustering problem satisfies the following  properties.

\begin{description}[leftmargin=5mm]
    \item[Coresets.] Consider a subset of clients $Y \subseteq P$, and a weight function $w: Y \to \mathbb{N}$. For a $k$-sized subset $C \subseteq F$ of centers, we say that $f: Y \times C \to \mathbb{R}_{\ge 0}$ is a \emph{feasible} assignment from $Y$ to $C$ if (i) for each $p \in Y$, $\sum_{c \in C} f(p, c) = w(p)$, and (ii) $f$ satisfies additional problem-specific constraints\footnote{Note that this definition of a feasible assignment naturally generalizes the definition of an assignment from an unweighted set $P$ to $X$, given earlier: if $f: P \to X$ is a feasible assignment, then construct an assignment $f': P \times X \to \LR{0, 1}$ by letting $f'(p, c) = 1$ iff $f(p) = c$. }. We define the weighted-cost of $(Y, C, f)$ as $\wcost(Y, X, f) \coloneqq \sum_{p \in Y} \sum_{c \in C} f(p, c) \cdot \d(p, c)^z$. As before, we define the weighted-cost of $X$ as $\wcost(Y, C) \coloneqq \min_{f} \wcost(Y, C, f)$, where the min is taken over all feasible assignments $f: Y \times C \to \mathbb{N}$, and if no such feasible assignment exists, then $\wcost(Y, C, f) \coloneqq \infty$.
    
    For an $0 < \epsilon \le 1$, we say that $(Y, w)$ is an \emph{$\epsilon$-coreset} if, for each $C \subseteq F$ of size $k$, it holds that $|\wcost(Y, C) - \cost(P, C)| \le \epsilon \cdot \cost(P, C)$. 
    

    We require that, for any $0 < \epsilon < 1$, an $\epsilon$-coreset for the Constrained $(k,z)$-clustering problem of size $\lr{\frac{k \log n}{\epsilon}}^{O(1)}$ is computable in polynomial time\footnote{For randomized algorithms, we want that the algorithm outputs an $\epsilon$-coreset with high probability.}  in \emph{general metric spaces}. We also allow the size of the coreset to depend on some problem-specific parameters (e.g., the number of colors in \emph{fair} clustering), which will be reflected in the final running time. 
    
    \item[Efficient computation of (near-) optimal assignments.] There exists an algorithm,  \findassgn\ which takes as input a weighted point set $(Y, w)$, a candidate set $X=\{x_1,\dots,x_k\} \subseteq F$ of $k$ centers, and a parameter $0 < \epsilon < 1$, and returns a feasible assignment $f: Y \times X \to \mathbb{R}_{\ge 0}$ such that, for every $p \in Y$, it holds that $\sum_{j \in [k]} f(p,x_j) = w(p)$, and    $\wcost(Y, X, f) \le (1+\epsilon) \cdot \wcost(Y, X)$, where recall that $\wcost(Y, X)$ is the minimum-cost assignment from $Y$ to $X$. If no such assignment exists, then the algorithm returns $\bm{\bot}$. We say that $f$ \emph{assigns} $p \in Y$ to $x \in X$ if $f(p,x)>0$, and denote  $f(p,\cdot):= \{x \in X \vert f(p,x)>0\}$, the set of centers assigned to $p$ in $X$. 
    We require that the algorithm runs in time FPT in $k, \epsilon$, and other problem-specific parameters. Note that, when this algorithm is used with the original (unweighted) point set $P$ with unit-weight function $\mathbbm{1}$, it returns a feasible assignment $f: P \times X \to \LR{0, 1}$, which can be interpreted as a feasible assignment $f: P \to X$ in a natural way. Thus, it suffices to compute a set of centers $X$, since a near-optimal assignment from either coreset $(Y, w)$ or the original point set $P$ to $X$ can be computed using \findassgn\ algorithm at the overhead of FPT time. We denote by $\assgntime(\cI)$ as the running time of \findassgn\ on input instance $\cI$.\footnote{For simplicity, we hide the dependence of the running time on $\epsilon$.\label{foot:hideeps}}

    \item[(Feasible) Ball intersection algorithm.] Let $Q = \LR{(p, r): p \in P, r \in \mathbb{R}_{\ge 0} }$ be a finite set of \emph{requests}.
    We say a center $x \in F$ \emph{satisfies} $Q$ if for each $(p, r) \in Q$, it holds that $\d(p, x) \le (1+\epsilon)r$.
    We assume that there exists an algorithm \genballint, which accepts $k$ requests, $Q_1,\dots,Q_k$, and returns $X=\{x_1,\dots,x_k\}, x_i \in F$ such that for all $i\in[k]$, $x_i$ satisfies request set $Q_i$, or outputs $\bm{\bot}$ if no such center set exists. In case there is a satisfying center set, then we refer by $\ballint_i, i \in [k]$ as the execution of \genballint\ that outputs $x_i \in X$, instead of the whole set $X$. Furthermore, \genballint\ satisfies the following conditions.
   
    \begin{itemize}


        \item     The returned center set $X$ must be a feasible.
        \item \emph{Compatibility with \findassgn.} Let $X$ be set of $k$ centers obtained from \genballint\ satisfying request sets $Q_1,\dots,Q_k$, and let $f$ be a near-optimal assignment obtained from \findassgn\ subroutine for $X$. Then, for every feasible solution $(Y,g)$  for the problem, there exists a bijection $b: Y \leftrightarrow X$ such that that $(X,g')$ is also a feasible solution, where $g'(p,x):=g(p,f(y))$, for $p \in P, x \in X$. 
    \end{itemize}

    We require that this algorithm runs in time FPT in $k, \epsilon$, and the scatter dimension of the metric space.  We denote by $\ballinttime(\cI)$ as the running time of \genballint\ on input instance $\cI$.$^{\ref{foot:hideeps}}$
\end{description}

We call such constrained $(k,z)$-Clustering problem as \emph{\consprob\ problem}. We allow each of the three algorithms to run in time FPT in $k, \epsilon$, and potentially some problem-specific parameters. We use $|\cI|$ to denote the size of the instance.\footnote{When $F$ is infinite, then $|\cI|$ is polynomial in input parameters other than $F$ and the space of storing a point.}


\subsection{Algorithm} \label{subsec:epas}

Our main result is the following.
\begin{restatable}{theorem}{mainthm} \label{thm:main}
Let $\cM$ be a class of metric spaces with algorithmic $\epsilon$-scatter dimension bounded by $\lambda(\epsilon)$. 
There is a EPAS that given an instance $\cI$ of \consprob\ over a metric space in $\cM$ runs in time $\epastime$, where $\coresetsize$ is the size of coreset for $\cI$. Furthermore, the running time of the EPAS is $\exp{(O(\frac{k \lambda(\sfrac{\epsilon}{60z})}{\epsilon}) \log(\frac{1}{\epsilon}) \log(\frac{k \lambda(\sfrac{\epsilon}{60z})}{\epsilon}) )}) \cdot |\cI|^{O(1)}$
%
when $\ell=(\tfrac{k}{\epsilon}\log|\cI|)^{O(1)}$.
\noindent Additionally, if coresets  for $\cI$ can be constructed deterministically, then the EPAS is deterministic.
\end{restatable}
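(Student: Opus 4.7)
\medskip
\noindent\textbf{Proof Plan for \Cref{thm:main}.}
The plan is to follow the outline sketched in the technical overview, making the key steps precise. First, using the coreset algorithm $\mathcal{A}_C$, we compute an $\epsilon$-coreset $(Y,w)$ of size $\ell = (k \log |\cI|/\epsilon)^{O(1)}$, so it suffices to find a set of centers $X \subseteq F$ with $\wcost(Y, X) \le (1+O(\epsilon)) \cdot \wcost(Y, \opt)$; a corresponding assignment can be recovered via $\findassgn$. Standard guessing lets us assume an estimate $B$ of $\opt$ with $\opt \le B \le (1+\epsilon)\opt$. Let $\cC = (\cC_1, \ldots, \cC_k)$ be an (unknown) optimal clustering of $Y$ induced by optimal centers $O = (o_1, \ldots, o_k)$. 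The algorithm maintains $k$ request sets $Q_1, \ldots, Q_k \subseteq Y \times \mathbb{R}_{\ge 0}$ and a tentative center tuple $X = (x_1, \ldots, x_k)$ computed by running \ballint\ on each $Q_i$; throughout, we maintain the invariant that every $(p, \alpha) \in Q_i$ corresponds to a point $p \in \cC_i$ with $\alpha$ being a $(1+\epsilon)$-approximation of $\d(p, o_i)$.

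\smallskip
The algorithm proceeds in two phases, each realised by branching. In the \emph{leader guessing} phase, for each $i \in [k]$ we guess the point $p_i \in \cC_i$ minimizing $\d(p_i, o_i)$ and an estimate $\alpha_i \in [B]_\epsilon$ of $\d(p_i, o_i)$ (rounded to the nearest power of $(1+\epsilon)$); we add $(p_i, \alpha_i)$ to $Q_i$. This incurs a branching factor of $(\ell \cdot \frac{\log |\cI|}{\epsilon})^{O(k)}$. In the \emph{refinement} phase, we repeatedly invoke \ballint\ on each $Q_i$ to obtain $x_i$, then invoke \findassgn\ to compute an assignment $f: Y \to X$ of cost $\wcost(Y, X, f)$. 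If $\wcost(Y, X, f) \le (1+\epsilon) B$, we are done; otherwise, we argue that there must exist an \emph{unhappy pair} $(p, i)$, meaning $p \in \cC_i$ with $\d(p, x_i) > (1+\epsilon/3) \cdot \d(p, o_i)$. We guess such a pair at a branching cost of $\ell \cdot k$, add the new request $(p, \d(p, x_i)/(1+\epsilon/3))$ to $Q_i$, and recurse.

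\smallskip
The main obstacle, and the heart of the argument, is bounding the depth of the recursion. To invoke \Cref{lem:len of scattering}, we need to show that within each $Q_i$ the radii of requests all lie in an interval of bounded aspect ratio. The crucial observation is that for any unhappy pair $(p, i)$ added later, since $p_i \in \cC_i$ minimizes $\d(\cdot, o_i)$ within $\cC_i$ we have $\d(p, o_i) \ge \d(p_i, o_i) \approx \alpha_i$; moreover, since $(Y,C)$ contributes at most $B \le (1+\epsilon)\opt$ to the cost under any feasible clustering, one shows $\d(p, o_i) \le O(B^{1/z}/w(p)^{1/z})$, which after the leader scaling and an averaging argument implies the new radius lies within a factor $O(1/\epsilon)$ of $\alpha_i$. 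Consequently the sequence of center/request pairs arising in $Q_i$ is a $(\ballint, \epsilon/c)$-scattering whose radii lie in an $O(1/\epsilon)$-aspect-ratio window, so \Cref{lem:len of scattering} bounds its length by $O(\lambda(\epsilon/c)\cdot \epsilon^{-1} \log(1/\epsilon))$. Summing over $i$ gives a recursion depth of $O(\frac{k}{\epsilon} \lambda(\epsilon/c) \log(1/\epsilon))$.

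\smallskip
Putting this together, the total branching is $(\ell k)^{O(k \lambda(\epsilon/c)\epsilon^{-1}\log(1/\epsilon))}$, and each leaf invokes \findassgn\ and \ballint\ a polynomial number of times, yielding the claimed running time $\epastime$. Using $\ell = (k \log|\cI|/\epsilon)^{O(1)}$ together with the standard identity $(\log |\cI|)^t = t^{O(t)} \cdot |\cI|^{O(1)}$ converts the $\log|\cI|$ factor inside the exponent into the advertised $\exp(O(\frac{k \lambda(\epsilon/(60z))}{\epsilon} \log(1/\epsilon) \log(\frac{k \lambda(\epsilon/(60z))}{\epsilon})))$ form. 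Finally, since the only randomized step is the coreset computation, the whole EPAS is deterministic whenever $\mathcal{A}_C$ is, completing the proof.
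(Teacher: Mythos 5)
Your overall architecture matches the paper's: coreset, leader guessing, iterative refinement via unhappy pairs, and a scattering-length bound to control recursion depth. The gap is in the single step you yourself identify as ``the heart of the argument'': why the radii added to each $Q_i$ lie in an $O(1/\epsilon)$-aspect-ratio window. Your proposed justification --- a per-point cost bound $\d(p,o_i) \le O(B^{1/z}/w(p)^{1/z})$ combined with ``leader scaling and an averaging argument'' --- does not work. The quantity $B^{1/z}$ has no relation to the leader distance $\alpha_i \approx \d(p_i,o_i)$, which can be arbitrarily small (even zero) relative to $\opt^{1/z}$; a cluster whose leader sits essentially on top of $o_i$ but which contains points at distance $\Theta(\opt^{1/z})$ would force radii of unbounded aspect ratio into $Q_i$ under your scheme, and no averaging over the cluster repairs this. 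The paper's argument is not cost-based at all: it is a triangle-inequality/locality argument (Claim \ref{lem:witnesslemma}(d)) showing that any point $p$ with $\d(p,x_i) > (1+\epsilon)\d(p,o_i)$ must satisfy $\d(p,p^*_i) \le \frac{\c}{\epsilon}\tilde r_i$ --- because $x_i$ and $o_i$ both lie within $O(\tilde r_i)$ of the leader $p^*_i$, so from any point much farther than $\tilde r_i/\epsilon$ away they are indistinguishable up to a $(1+\epsilon)$ factor. Combined with the lower bound $\d(p,x_i) \ge \tilde r_i$ (your first inequality, which is correct), this confines the new radii to $[\tilde r_i, O(\tilde r_i/\epsilon)]$.

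A second, related problem: you propose to branch over all pairs in $Y \times [k]$ at cost $\ell k$, but the recursion-depth bound must hold along \emph{every} root-to-leaf path of the recursion tree, not only the ``consistent'' one where the guessed pair is genuinely unhappy. If an arbitrary $p \in Y$ may be added to $Q_i$, then along incorrect branches the radii $\d(p,x_i)/(1+\epsilon)$ range over the full aspect ratio of the metric, \Cref{lem:len of scattering} only gives depth $O(k\lambda\log\Delta/\epsilon)$ with $\Delta = \poly(|\cI|)$, and the resulting $(\ell k)^{O(k\lambda\log|\cI|/\epsilon)}$ bound is XP rather than FPT. The paper resolves both issues at once by restricting the branching in \Cref{algo:apx:for} to $p \in B_i \setminus \ball(x_i,r'_i)$ where $B_i = \ball(p'_i, \c r'_i/\epsilon)$: this syntactically forces every added radius into $[r'_i, 7r'_i/\epsilon]$ on every path (\Cref{cl:rec:depth}), while Claim \ref{lem:witnesslemma}(c)--(d) guarantees that the genuinely unhappy pair is never pruned away by this restriction. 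You need both halves of that argument; neither is supplied by the cost-averaging route you sketch.
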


The pseudo-code of our main algorithm yielding \Cref{thm:main} is described in~\Cref{algo:main}. This algorithm is based on the ideas mentioned in~\Cref{ss: contri}, and it uses the recursive algorithm~\Cref{algo:apx} as a subroutine. \Cref{algo:main}
takes as input, an instance $\cI$ of \consprob, a ball intersection algorithm \ballint, an assignment algorithm \findassgn, and a guess $\cG$ of optimal cost. Note that the guess, $\cG$ of optimal cost, can be computed efficiently, up to a multiplicative factor of $(1+\epsilon)$.
The main task of the subroutine \Cref{algo:apx} is to compute a feasible solution $(X,f)$ for the set of requests $(Q_1,\dots,Q_k)$ it receives, where $X$ is a set of centers and $f$ is a feasible assignment of clients to those centers. If the resulting solution has cost close to the guessed optimum, the algorithm terminates and returns it. Otherwise, it identifies an unhappy pair $(p,i)$ with respect to $(X,f)$, modifies the request set $Q_i$ accordingly, and recurses.
\begin{algorithm}[h]
	\caption{Approximation Scheme for \consprob}
	\label{algo:main}
	\begin{algorithmic}[1]
		\Statex \textbf{Input:} Instance $\cI$ of \consprob\ in metric space $M=(P,F, \d) \in \cM$, $\varepsilon \in (0, 1/2]$,  \text{Ball intersection} algorithm \ballint, \findassgn\ Algorithm and   $\opt \le \mathcal{G} \le (1+\epsilon) \cdot \opt$.
		\Statex \textbf{Output:} A feasible solution $(S \subseteq F,f_S)$ such that $\cost(P,S,f_S) \le (1+30\varepsilon) \opt$.    
		\State $(Y, w) \gets$ coreset for $\cI$ \label{algo:main:coreset}
        \For{every tuple $(p'_1, p'_2, \ldots, p'_k) \in Y^k$ and  distances $(r'_1, r'_2, \ldots, r'_k) \in [\Delta]_{\epsilon}^k$}\label{algo:main:for}
            \State Let $Q_i = \LR{(p'_i, r'_i)}$ for all $i \in [k]$\label{algo:main:initQ}
			\State Let  $\cB=\{B_1,\dots, B_k\}$, where $B_i = \ball(p'_i,\sfrac{\c r'_i}{\epsilon})$ 
            \State $S \gets \epasalgo((Q_1,\dots, Q_k),\cB)$\label{algo:main:call}
            \State \textbf{If} $S \neq \bot$ \textbf{then} \Return $(S,\findassgn(P,S,\epsilon))$\label{algo:main:ret}
     
        \EndFor
        \State \Return $\bot$
	\end{algorithmic}
\end{algorithm}
\begin{algorithm}[h]
	\caption{$\epasalgo((Q_1,\dots,Q_k), \cB)$}
	\label{algo:apx}
	\begin{algorithmic}[1]
        \State Let $X = (x_1,\dots,x_k) \gets \genballint(Q_1,\dots,Q_k)$ \label{algo:apx:ballint}
        \Statex \Return $\bot$ if no such $X$ was found
        \State $f\gets \findassgn(Y,X,(Q_1,\dots,Q_k),\epsilon)$, \Return $\bot$ if it fails
        \State \textbf{If} {$\wcost(Y, X, f) \le (1+5\epsilon) \cdot \mathcal{G}$} \textbf{then} \Return $X$
        \For{every index $i \in [k]$ and point $p \in B_i \setminus \ball(x_i,r'_i)$}\label{algo:apx:for}
            \State \Return $\epasalgo((Q_1,\dots, Q_i \cup \{(p,\frac{\d(p,x_i)}{1+\epsilon})\},\dots, Q_k),\cB)$ if not $\bot$\label{algo:apx:addQi}
        \EndFor
        \State \Return $\bot$
	\end{algorithmic}
\end{algorithm}

\subsection{Analysis}
In this section, we analyze the running time and correctness of~\Cref{algo:main}.
First, in~\Cref{ss:runtime}, we show that~\Cref{algo:main} has bounded running time (\Cref{lem:main time}), and later, in~\Cref{ss:correctness}, we show that  there exists an iteration of \textbf{for} loop in~\Cref{algo:main:for} of~\Cref{algo:main} that does not return $\bot$ (\Cref{lem:correctness}). Finally, we use these results to prove~\Cref{thm:main}. 

For keeping the analysis clean, we show the proof for \conskmed. We remark that the same analysis can be extended easily to \consprob, by replacing $\epsilon$ with $\epsilon/z$.
Let $\cI$ be an instance of \consprob\ that is given as an input to~\Cref{algo:main}, and let $(Y,w)$ be the corresponding coreset for $\cI$ obtained in~\Cref{algo:main:coreset}. 
\subsubsection{Running time}\label{ss:runtime}
In this section, we bound the running time of~\Cref{algo:main}.
\begin{lemma}\label{lem:main time}
    \Cref{algo:main} terminates within time $\algotime$.
\end{lemma}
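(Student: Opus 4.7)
The plan is to bound the total running time by multiplying three quantities: (i) the number of iterations of the outer \textbf{for} loop in \Cref{algo:main}, (ii) the size of the recursion tree built by a single invocation of \epasalgo, and (iii) the per-node work, which is dominated by one call each to \ballint\ and \findassgn\ plus polynomial overhead.

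First I would estimate (i). The outer loop ranges over $Y^k \times [\Delta]_{\epsilon}^k$, where $|Y| = \coresetsize$ and $|[\Delta]_\epsilon| = O(\log \Delta / \epsilon)$. Since $\log \Delta = O(\log |\cI|)$ and $\coresetsize = (k \log |\cI| /\epsilon)^{O(1)}$, this factor is absorbed into $(k\coresetsize)^{O(k)} \cdot \poly(|\cI|)$, well within the claimed bound.

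The main task is to bound (ii), the size of the recursion tree of \epasalgo. Each recursive call in \Cref{algo:apx:addQi} augments exactly one request set $Q_i$ by appending a new pair $(p, \d(p, x_i)/(1+\epsilon))$; the branching factor at a node is therefore at most $k \cdot |Y| = k\coresetsize$ (over choices of $(i, p)$), so it remains to bound the depth. The depth bound is where I would do the real work. Fix an index $i$ and view the sequence of requests appended to $Q_i$ across a root-to-leaf path, together with the centers $x_i$ returned by \ballint\ in \Cref{algo:apx:ballint} at consecutive levels. By the specification of \ballint\ and the failure condition driving the recursion in \Cref{algo:apx:for}, this sequence is (after prepending the initial pair $(p'_i, r'_i)$) exactly a $(\ballint, \epsilon)$-scattering sequence in the sense of \Cref{def:algsd}, with error parameter $\epsilon$. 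I would then verify that all radii appearing in $Q_i$ along this path lie in an interval of bounded ratio. On the one hand, every new request has radius $\d(p, x_i)/(1+\epsilon) \ge r'_i$, because $p \notin \ball(x_i, r'_i)$ and the new radius is strictly larger than $r'_i$; on the other hand, $p \in B_i = \ball(p'_i, 6 r'_i / \epsilon)$, so the triangle inequality combined with the guarantee $\d(p'_i, x_i) \le (1+\epsilon) r'_i$ yields $\d(p, x_i)/(1+\epsilon) = O(r'_i/\epsilon)$. Hence all radii in $Q_i$ lie in an interval of the form $[r'_i, \tau r'_i]$ with $\tau = O(1/\epsilon)$. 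Applying \Cref{lem:len of scattering} with $t = 1$ gives at most $O(\lambda(\epsilon/2) \log(1/\epsilon)/\epsilon)$ requests per $Q_i$, and summing over $i \in [k]$ yields a depth bound of $O(k \lambda(\epsilon/2) \log(1/\epsilon)/\epsilon)$.

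Combining the branching factor $k\coresetsize$ with this depth gives a recursion tree of size $(k\coresetsize)^{O(k \lambda(\epsilon/2) \log(1/\epsilon)/\epsilon)} = 2^{O((k/\epsilon) \log(k\coresetsize) \cdot \lambda(\epsilon/2) \log(1/\epsilon))}$. Multiplying by the per-node cost $\poly(|\cI|) \cdot \assgntime(\cI) \cdot \ballinttime(\cI)$ and by the outer-loop count from (i) gives exactly the claimed running time $\algotime$. The one step I expect to require genuine care is showing that each $Q_i$ is a valid scattering sequence and, in particular, justifying that the upper envelope $\tau = O(1/\epsilon)$ on the radii is preserved under all augmentations throughout the recursion; this is the place where the definition of $B_i$ with the constant $\c/\epsilon$ in \Cref{algo:main} is essential and must be used exactly.
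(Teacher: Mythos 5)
Your proposal follows essentially the same route as the paper's proof: bound the outer loop by $(\coresetsize\,[\Delta]_\epsilon)^k$, bound the recursion depth per coordinate by viewing the modifiers added to $Q_i$ along a root-to-leaf path (together with the centers returned by \ballint) as an algorithmic scattering sequence whose radii are confined to an interval of ratio $O(1/\epsilon)$, apply \Cref{lem:len of scattering}, and multiply by the branching factor $k\coresetsize$ and the per-node cost. The only (harmless) discrepancy is the scattering parameter: since each new radius is set to exactly $\d(p,x_i)/(1+\epsilon)$, the strict violation $\d(x_i,p_i)>(1+\varepsilon)\alpha_i$ required by \Cref{def:algsd} holds for $\varepsilon=\epsilon/2$ but not for $\varepsilon=\epsilon$ as you state, which is why the paper works with a $(\ballint,\epsilon/2)$-scattering and the stated bound involves $\lambda(\epsilon/2)$.
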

For an execution of $\epasalgo((Q_1,\dots,Q_k), \cB)$, we call the first argument $(Q_1,\dots,Q_k)$ of the execution as the  \emph{request set} of the execution.
Towards this, we show the following claim.
\begin{claim}\label{cl:rec:depth}
    The depth of the recursion of~\Cref{algo:main} is at most $\pathlen$.
\end{claim}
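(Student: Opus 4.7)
The plan is to bound the recursion depth of \epasalgo\ by controlling, for each fixed index $i \in [k]$, the number of requests that can ever be added to $Q_i$ along any root-to-leaf path of the recursion tree. Since every recursive call adds exactly one request to exactly one $Q_i$ at~\Cref{algo:apx:addQi}, the depth along any path equals $\sum_{i \in [k]} (|Q_i| - 1)$ where $|Q_i|$ is measured at the leaf; so a bound of $\sclen$ per request set will yield the desired $\pathlen$ bound after summing over $i$.

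Fixing such a path and an index $i$, I would enumerate the requests of $Q_i$ in the order they were added as $(p_1, r_1), \dots, (p_s, r_s)$, where $(p_1, r_1) = (p'_i, r'_i)$ is the initial request introduced in~\Cref{algo:main:initQ}, and let $\tilde{x}_j$ denote the center returned by $\ballint$ at~\Cref{algo:apx:ballint} in the recursive call in which $(p_{j+1}, r_{j+1})$ is about to be appended. The main structural observation to establish is that the sequence $(\tilde{x}_j, p_{j+1}, r_{j+1})_{j=1}^{s-1}$ is a scattering sequence in the sense of~\Cref{def:algsd}: from the explicit definition of the new request at~\Cref{algo:apx:addQi}, we have $r_{j+1} = \d(p_{j+1}, \tilde{x}_j)/(1+\epsilon)$, so $\d(\tilde{x}_j, p_{j+1}) = (1+\epsilon) r_{j+1}$, which strictly exceeds the required scattering threshold.

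Next, I would bound the aspect ratio of the radii $\{r_j\}_{j=1}^s$. For $j \ge 2$, the point $p_j$ is chosen from $B_i \setminus \ball(\tilde{x}_{j-1}, r'_i)$ in the for-loop at~\Cref{algo:apx:for}, so $\d(p_j, \tilde{x}_{j-1}) > r'_i$ and $\d(p_j, p'_i) \le 6 r'_i/\epsilon$ (because $p_j \in B_i = \ball(p'_i, 6r'_i/\epsilon)$). Moreover, $\tilde{x}_{j-1}$ must satisfy the ever-present initial request $(p'_i, r'_i)$ up to factor $(1+\epsilon)$, so $\d(p'_i, \tilde{x}_{j-1}) \le (1+\epsilon) r'_i$. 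A triangle inequality then yields $\d(p_j, \tilde{x}_{j-1}) \le 6r'_i/\epsilon + (1+\epsilon) r'_i = O(r'_i/\epsilon)$, and hence every $r_j$ lies in an interval of the form $[r'_i/2,\, O(r'_i/\epsilon)]$ of aspect ratio $O(1/\epsilon)$; the same bound holds trivially for $r_1 = r'_i$. Plugging this into~\Cref{lem:len of scattering} with a single interval of aspect ratio $\tau = O(1/\epsilon)$ gives $s - 1 = O(\lambda(\epsilon/2) \log(1/\epsilon)/\epsilon) = \sclen$, and summing over $i \in [k]$ finishes the argument.

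The main obstacle I anticipate is reconciling the exact $\epsilon$ factors between the algorithm, which invokes $\ballint$ with tolerance $\epsilon$ and produces radii satisfying $\d(\tilde{x}_j, p_{j+1}) = (1+\epsilon) r_{j+1}$ (equality, not strict), and~\Cref{def:algsd}, which requires $\ballint$ with tolerance $\epsilon/2$ and a strict $(1+\epsilon)$ gap. I would handle this by either running $\ballint$ throughout with tolerance $\epsilon/c$ for a suitable constant $c$, or by reinterpreting the produced sequence as a $(\ballint, \epsilon')$-scattering sequence for $\epsilon' = \Theta(\epsilon)$; either option only affects the constants hidden in the $O(\cdot)$ and preserves the claimed $\pathlen$ bound.
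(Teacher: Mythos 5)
Your proposal is correct and follows essentially the same route as the paper's proof: decompose a root-to-leaf path by the index $i$ of the modified request set, observe that the triples $(\tilde{x}_j, p_{j+1}, r_{j+1})$ form a $(\ballint,\Theta(\epsilon))$-scattering sequence because $\d(\tilde{x}_j,p_{j+1})=(1+\epsilon)r_{j+1}>(1+\epsilon/2)r_{j+1}$, bound the radii in $[\Omega(r'_i), O(r'_i/\epsilon)]$ via the membership $p_j\in B_i\setminus\ball(\tilde x_{j-1},r'_i)$ and the ever-satisfied initial request, and apply \Cref{lem:len of scattering} with aspect ratio $O(1/\epsilon)$. The $\epsilon$-tolerance wrinkle you flag is resolved in the paper exactly as in your second option, by reading the sequence as an $(\epsilon/2)$-scattering.
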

\begin{proof}
Consider the execution of an iteration of the \textbf{for} loop of~\Cref{algo:main} with tuple $(p'_1,\dots,p'_k)$ and distances $(r'_1,\dots,r'_k)$ and the corresponding invocation of \epasalgo\ with arguments $(Q_1,\dots,Q_k)$ and $\cB$. Let $\cT$ be the corresponding recursion tree with root node $t_0$. For any node $t \in \cT$, we denote by $(Q^t_1,\cdots,Q^t_k)$ as the request set with which it is invoked during the execution.
Furthermore, we call the request set corresponding to root node $t_0$ as the \emph{initial} request set of $\cT$.
Notice that the request set of any non-root node $t \in \cT$ differs with the request set of its parent node  $t'$ in exactly one request. More specifically, there exists $j \in [k]$ such that $Q^t_{j} = Q^{t'}_{j} \cup \{(p,\alpha) \}$, for some $(p,\alpha) \in Y \times [\Delta]_{\epsilon}$, and $Q^t_i = Q^{t'}_i$, for $i \neq j$. 
In this case, we say that $t$ has a \emph{modified} $Q_j$, and that $(p,\alpha)$ as the \emph{modifier} of $t$.
Consider a path $\cP$ from the root of $\cT$, and for $i \in [k]$, let $T_i = \{t^1_i,t^2_i,\dots\} \subseteq \cP$ be the nodes in $\cP$ that have modified $Q_i$.
Let  $x^1_i,x^2_i,\dots$ be the centers computed for cluster $i$  at nodes $t^1_i,t^2_i,\dots$, respectively (see~\Cref{algo:apx:ballint}), and let
$(p^1_i,\alpha^1_i),(p_i^2,\alpha^2_i),\dots$ be the modifiers of
nodes $t^1_i,t^2_i,\dots$, respectively (see~\Cref{algo:apx:addQi}). That is, $x^j_i$ is the center computed  at $t^j_i \in \cP$ for cluster $i$. Since, for $j>1$, node $t^j_i$ computes
$X^j=(x^j_1,\dots,x^j_k)$ using \genballint\ such that $x^j_i$  satisfies the cluster constraint  containing $(p^1_i,\alpha^1_i),\dots,(p_i^{j-1},\alpha^{j-1}_i)$ requests, i.e., it holds that $\d(x^j_i,p^{j'}_i) \le (1+\epsilon)\alpha^{j'}_i$, for $j' < j$. Furthermore, since node $t^j_i$ adds $\left(p^j_i, \frac{\d(p^j_i,x^j_i)}{(1+\epsilon)}\right)$ to $Q_i$, it holds that $\alpha^j_i = \frac{\d(p^j_i,x^j_i)}{(1+\epsilon)}$. Hence, $\d(p^j_i,x^j_i)> (1+\tfrac{\epsilon}{2})\alpha^j_i$.
Therefore, the sequence $(x^1_i,p^1_i,\alpha^1_i),(x^2_i,p^2_i,\alpha^2_i),\dots$ is a $(\ballint_i, \tfrac{\epsilon}{2})$-algorithmic 
scattering. Finally, since $p^j_i \in \ball(p'_i,\sfrac{\c r'_i}{\epsilon}) \setminus \ball(x_i,r'_i)$, where $(p'_i,r'_i)=Q^{t_0}_i$ is the $i^{th}$ request in the initial request set of $\cT$, we have that, $\alpha^j_i=\frac{d(p^j_i,x^j_i)}{1+\epsilon} \ge r'_i$. On the other hand,
    \[
    \alpha^j_i=\frac{d(p^j_i,x^j_i)}{1+\epsilon} \le \frac{d(p^j_i,p'_i) + d(p'_i, x^j_i)}{1+\epsilon} < \frac{\c r'_i}{\epsilon} + d(p'_i, x^j_i) \le \frac{7r'_i}{\epsilon},
    \]
where the last inequality follows since $ d(p'_i, x^j_i) \le (1+\epsilon)r'_i$, due to the initial request $(p'_i,r'_i) \in Q^{t_0}_i$. Thus, the radii in the requests  $(p^1_i,\alpha^1_i),(p_i^2,\alpha^2_i),\dots$ lie in the interval $\left[{r'_i},{7r'_i}/{\epsilon}\right]$. Hence, using~\Cref{lem:len of scattering}, the length of  $(\ballint_i, \epsilon/2)$-algorithmic scattering $(x^1_i,p^1_i,\alpha^1_i),(x^2_i,p^2_i,\alpha^2_i),\dots$ is bounded by $\sclen$, since the $(\ballint_i, \epsilon/2)$-algorithmic dimension of $\cM$ is bounded by $\lambda(\epsilon)$ due to~\Cref{lem:algvsnonalg}. Since every node in $\cP$ has a modified $Q_j, j\in[k]$, we have that the length of the path $\cP$ is bounded by $\pathlen$.
\end{proof}

Now, we bound the running time of~\Cref{algo:main} as follows.

\noindent\textit{Proof of~\Cref{lem:main time}.}
The \textbf{for} loop (\Cref{algo:main:for}) runs for at most $(\coresetsize\cdot [\Delta]_\epsilon)^k$ iterations. Since, every iteration of the \textbf{for} loop generates a recursion tree with depth $\pathlen$ due to~\Cref{cl:rec:depth}, and each node in the tree has $k\coresetsize$ many children, the total running time of~\Cref{algo:main} is bounded by
$
(\coresetsize [\Delta]_\epsilon)^k \cdot (k\coresetsize)^{\pathlen}\poly(|\cI|)\cdot \assgntime(\cI)\cdot\ballinttime(\cI)$, which is $\epastime,
$
since $\Delta = \poly(|\cI|)$.
$\hfill\square$

\subsubsection{Correctness}\label{ss:correctness}
In this section, we show that~\Cref{algo:main} never returns  $\bot$. 
Specifically, we show the following guarantee.
\begin{lemma}\label{lem:correctness}
     There exists an iteration of \textbf{for} loop (\Cref{algo:main:for}) of~\Cref{algo:main} whose \epasalgo\ (\Cref{algo:main:call})  never returns $\bot$.
\end{lemma}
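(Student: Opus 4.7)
The plan is a ``correct guess + inductive invariant'' argument. Fix any optimum $(O=(o_1,\ldots,o_k),f^*)$ of the coreset instance, let $\copt$ be its cost, and let $\cC=(\cC_1,\ldots,\cC_k)$ be the induced optimal partition of $Y$; the coreset guarantee gives $\copt\in[(1-\epsilon),(1+\epsilon)]\cdot\opt$. For every $i$ let the ``leader'' $p^*_i\in\cC_i$ be a coreset point of the $i$-th cluster that is closest to $o_i$, and let $r^*_i$ be the smallest number in $[\Delta]_\epsilon$ with $r^*_i\ge \d(p^*_i,o_i)$, so that $\d(p^*_i,o_i)\le r^*_i\le (1+\epsilon)\d(p^*_i,o_i)$. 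Since $\Delta=\poly(|\cI|)$, the outer for loop of \Cref{algo:main} enumerates the ``lucky'' iteration $(p'_i,r'_i)=(p^*_i,r^*_i)$ for all $i$, and I will show that the resulting call to $\epasalgo$ returns a non-$\bot$ answer.

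I would proceed by strong induction on the recursion tree, maintaining the invariant that $O$ \emph{satisfies} every request set on the current root-to-node path, i.e., $\d(p,o_i)\le\alpha$ for every $(p,\alpha)\in Q_i$. The invariant holds at the root by the choice of $r^*_i$. Granted the invariant at a node, $o_i$ is a valid witness for $\ballint(Q_i,\epsilon)$, so that call returns some $x_i$ and does not yield $\bot$. The ``swap'' assignment $g$ sending each $q\in\cC_j$ to $x_j$ is feasible---for the clustering problems in this paper, feasibility depends only on the partition of points together with the multiset of opened centers (capacities, or per-group counts in fair clustering), and is preserved when an optimum center is replaced by any center of the same ``type'', which $\ballint$ can be required to enforce---hence $\findassgn$ also does not yield $\bot$. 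Finally, when we descend to a child by appending $(p,\d(p,x_i)/(1+\epsilon))$ to $Q_i$, the invariant is preserved exactly when $(p,i)$ is an \emph{unhappy pair}, meaning $p\in\cC_i$ and $\d(p,x_i)>(1+\epsilon)\d(p,o_i)$; in that case $\d(p,o_i)<\d(p,x_i)/(1+\epsilon)$ and the new request is satisfied by $o_i$.

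It remains to show that whenever the termination test $\wcost(Y,X,f)\le(1+5\epsilon)\cG$ fails, some unhappy pair $(p,i)$ also lies in $B_i\setminus\ball(x_i,r'_i)$ and is therefore enumerated by the inner for loop. I would prove this by bounding the cost of the swap assignment $g$ cluster by cluster. Points $q\in\cC_j\cap\ball(x_j,r'_j)$ contribute $\d(q,x_j)^z\le (r^*_j)^z\le (1+\epsilon)^z\d(q,o_j)^z$, since $p^*_j$ is the closest coreset point of $\cC_j$ to $o_j$. Points $q\in\cC_j$ with $q\notin B_j$ satisfy $\d(q,p^*_j)>6r'_j/\epsilon$, forcing $\d(q,o_j)=\Omega(r'_j/\epsilon)$, so the triangle inequality $\d(q,x_j)\le \d(q,o_j)+O(r'_j)$ again yields $\d(q,x_j)^z\le (1+\epsilon)^z\d(q,o_j)^z$ (the constant $c=6$ in $B_j=\ball(p^*_j,c r'_j/\epsilon)$ is tuned precisely for this). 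Hence, if no unhappy pair existed in $\cC_i\cap B_i\setminus\ball(x_i,r'_i)$ for any $i$, summing would give $\wcost(Y,X,g)\le (1+\epsilon)^z\copt\le (1+\epsilon)^{z+1}\cG$; composing with the $\findassgn$ slack then yields $\wcost(Y,X,f)\le (1+\epsilon)^{z+2}\cG\le (1+5\epsilon)\cG$ for $\epsilon\le 1/2$ and $z=1$ (the general $z$ case by rescaling $\epsilon\to\epsilon/z$), contradicting the failure of the test. Any unhappy pair further satisfies $\d(p,x_i)>(1+\epsilon)\d(p,o_i)\ge (1+\epsilon)\d(p^*_i,o_i)\ge r^*_i$, hence $p\notin\ball(x_i,r^*_i)$ and the pair is indeed enumerated. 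Together with the depth bound of \Cref{cl:rec:depth}, this completes the proof.

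The main obstacle is packaging the four slack sources---the coreset guarantee, the $\findassgn$ approximation, the triangle inequality for $q\notin B_j$, and the rounding of $r^*_j$---into a single accounting that stays below the $(1+5\epsilon)\cG$ budget with the stated constants $c=6$ and the $\Theta(1/\epsilon)$ scaling of $B_j$. Verifying feasibility of the swap assignment $g$ in the non-uniform capacitated and fair variants also requires a small problem-specific argument about which ``type'' $\ballint$ must preserve, but the structural skeleton above is uniform across the problems in our framework.
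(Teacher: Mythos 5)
Your proposal is correct and follows essentially the same route as the paper: guess the leaders and their rounded radii, maintain the invariant that the optimal centers satisfy all current requests, and show that whenever the cost test fails there is an ``unhappy'' pair that necessarily lies in $B_i\setminus\ball(x_i,r'_i)$ (your cluster-by-cluster contrapositive accounting is just a repackaging of the paper's witness claim, which instead shows the set $W$ of unhappy pairs is nonempty and that every element of $W$ automatically satisfies the two geometric conditions). The constants and slack composition you outline match the paper's.
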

\begin{proof}
 For any feasible solution $(X,f)$, we say that $f$ \emph{assigns} $p \in Y$ to $x \in X$ if $f(p,x)>0$, and denote  $f(p,\cdot):= \{x \in X \vert f(p,x)>0\}$, the set of centers assigned to $p$ in $X$. 
Let $\cO=(O=(o_1,\dots,o_k),f^*)$ be an optimal solution to coreset $(Y,w)$ with cost $\opt$\footnote{For brevity, we abuse \opt\ to denote the cost of an optimal solution in coreset rather than that of the original instance. This is fine, since the cost of optimal solution in coreset is same as that in the input instance, up to a multiplicative factor of $(1+\epsilon)$.}.
For $i\in [k]$, let $p^*_i= \arg\min_{p \in P} \{d(p,o_i) \mid f^* \text{ assigns $p$ to $o_i$}\}$, be the \emph{leader} of (the cluster corresponding to) $o_i$, by  breaking ties arbitrarily. Similarly, let $r_i=d(p^*_i,o_i)$ be the distance of $o_i$ to its leader $p^*_i$. Now, we fix the iteration of the \textbf{for} loop of~\Cref{algo:main} corresponding to $(p'_1,\dots,p'_k)=(p^*_1,\dots,p^*_k)$ and $(r'_1,\dots,r'_k)=(\tilde{r}_1,\dots,\tilde{r}_k) \in [\Delta]^k_{\epsilon}$, where $r_i \le \tilde{r}_i< (1+\epsilon)r_i, i \in [k]$. Let $\cT$ be the recursion tree of \epasalgo\ corresponding to this iteration with root $t_0$. 
\begin{observation}\label{obs:rootnode}
    At the root node $t_0$ of $\cT$, we have that,  for $i \in [k]$, $Q_i = (p^*_i,\tilde{r}_i)$, such that  $r_i \le \tilde{r}_i< (1+\epsilon)r_i$. 
\end{observation}

We will show that there is at least one path  $\cP$ in $\cT$ such that no nodes in $\cP$ return $\bot$. Towards this, we define the following notion of consistency for a node in $\cT$.
\begin{definition}[Consistency]
	A node $t \in \cT$ with first argument  $(Q_1,\dots,Q_k)$  is said to be \emph{consistent} with $O$ if for any request $(p,\alpha)\in Q_{i},i\in[k]$,
	we have $\d(p,o_{i})\le \alpha$.
\end{definition}

Notice that if a node $t \in \cT$ is consistent with $O$, then it successfully finds $X$ using \ballint\ in~\Cref{algo:apx:ballint}, since $o_i$ always satisfies all the requests in $Q_i, i\in [k]$.  Furthermore, since $f^*$ is feasible, \findassgn\ also finds a feasible assignment.
Our aim, now, is to show that there is at least one path $\cP$ in $\cT$ such that all nodes in $\cP$ are consistent. This means that none of the nodes in $\cP$ return $\bot$, as desired.

Towards this, note that the root node $t_0 \in \cT$ has  first argument as $((p'_1,\tilde{r}_1),\dots,(p'_k,\tilde{r}_k))$ and hence, is consistent with $O$, since for $(p'_i,\tilde{r}_i) \in Q_i$, we have that $\d(p'_i,o_i) =r_i \le  \tilde{r}_i$. Now, we show inductively that given a non-leaf node $t \in \cT$ that is consistent with $O$, there exists at least one child $t'$ of $t$ that is consistent with $O$.
Let $(X,f)$ be the solution computed by node $t$ (such solution exists since $t$ is consistent with $O$). 
Then, the following claim is a key for consistency of $t'$.

\begin{restatable}{claim}{witnesslemma} \label{lem:witnesslemma}
	If $\wcost(Y, X, f) > (1+5\epsilon)\cdot \mathcal{G}$, then there exists a point $p \in Y$ and an index $i \in [k]$ satisfying the following properties.
	\begin{enumerate}[label=(\alph*)]
		\item $f^*$ assigns $p$ to $i$, 
		\item $\d(p, x_i) > (1+\epsilon) \cdot \d(p, o_i)$,
		\item $\d(p, x_i) \ge \tilde{r}_i $, and 
		\item $\d(p, p^*_i) \le \frac{\c }{\epsilon} \tilde{r}_i$.
	\end{enumerate}
\end{restatable}
\begin{proof}
    Let $W\coloneqq \{(p,i) \in Y \times [k]: i \in f^*(p,\cdot) \text{ and } \d(p, x_i) > (1+\epsilon) \cdot \d(p, o_i)\}$.
    First, we show that $W$ is non-empty. Suppose for the sake of contradiction that $W$ is empty. Since, $\sum_{j \in [k]}f^*(p,o_j) = \sum_{j \in [k]}f^*(p,x_j)  =w(p)$, this means that, for each $p \in Y$ and all $i \in [k]$ such that $f^*$ assigns $p$ to $o_i$, it holds that $\d(p, x_i) \le (1+\epsilon) \cdot \d(p, o_i)$. 
    Then, by the properties of \genballint\ and \findassgn\ algorithms,
    and due to their compatibility, it follows that $(X,f^*)$ is a feasible solution. Furthermore,  for all $p \in Y$,  we have,
    \begin{align*}
        \wcost(p,X,f^*) &= \sum_{i \in f^*(p,\cdot)} f^*(p,x_i) \cdot d(p,i) \\
        &\le \sum_{i \in f^*(p,\cdot)} (1+\epsilon)\cdot d(p,o_i)\\
        &= (1+\epsilon)\cdot\wcost(p,O,f^*)
    \end{align*}
    This means, $\wcost(Y,X,f^*) = \sum_{p \in Y} \wcost(p,X,f^*) \le (1+\epsilon)  \sum_{p \in Y} \wcost(p,O,f^*)=(1+\epsilon) \cdot \wcost(Y,O,f^*)$.
     Therefore, we have
   \begin{align*}
       \wcost(Y, X, f) &\le (1+\epsilon)\cdot\cost(Y, X, f^*) & \text{due to \findassgn}\\
       &\le (1+\epsilon)^2 \cdot \wcost(Y,O,f^*)\\
       &\le (1+\epsilon)^3 \cdot \wcost(P,O,f^*) & \text{due to coreset}\\       
       &\le (1+\epsilon)^3 \cdot \cG & \text{since } \cG \ge \opt\\
       &\le (1+5\epsilon)\cdot \cG & \text{since } \epsilon\le 1/2
   \end{align*}
    This contradicts the premise of the claim.    Therefore, $W \neq \emptyset$. 
	
	Next, consider any pair $(p, i) \in W$, and note that, properties $(a)$ and $(b)$ are satisfied for $p$ and $i$, by definition. Next we show that $(p,i)$ also satisfy properties $(c)$ and $(d)$.
    For $(c)$, note that,
    \[
    \d(p,x_i) > (1+\epsilon)\cdot \d(p, o_i) = (1+\epsilon)r_i > \tilde{r_i},
    \]
    as desired, since  $r_i \le \tilde{r}_i < (1+\epsilon)r_i$ due to~\Cref{obs:rootnode}.   
    Now, suppose the property $(d)$ does not hold, i.e., $\d(p, p^*_i) > \frac{\c \tilde{r}_i}{\epsilon}$. 
	Then, we obtain the following by triangle inequality.
	\begin{equation}
		\d(p, x_i) \le \d(p, p^*_i) + \d(p^*_i, x_i) \le \d(p, p^*_i) + (1+\epsilon) \tilde{r}_i \le (1+\tfrac{\epsilon}{4}) \cdot \d(p, p^*_i) \label{eqn:currentcenter}
	\end{equation}
	Similarly, we obtain the following by triangle inequality.
	\begin{equation}
		\d(p, o_i) \ge \d(p, p^*_i) - \d(p^*_i, o_i) \ge \d(p, p^*_i) - \tilde{r}_i \ge (1-\tfrac{\epsilon}{\c}) \cdot \d(p, p^*_i) \label{eqn:optimalcenter}
	\end{equation}
	Then, by combining equations (\ref{eqn:currentcenter}) and (\ref{eqn:optimalcenter}), we obtain the following.
	\begin{align*}
		\d(p, x_i) \le \frac{(1+\frac{\epsilon}{4})}{(1-\frac{\epsilon}{6})} \cdot \d(p, o_i) \le 
        (1+\epsilon) \cdot \d(p, o_i),
	\end{align*}
	contradicting the fact that $(p, i) \in W$. This shows that, for each pair $(p, i) \in W$, properties $(c)$ and $(d)$ hold. Since $W$ is non-empty, this completes the proof of the lemma. 
\end{proof}

Fix some $(p_t,i_t) \in Y \times [k]$ satisfying~\Cref{lem:witnesslemma} and note that $p_t \in B_{i_t} \setminus \ball(x_{i_t},\tilde{r}_{i_t})$. Hence, consider the corresponding iteration of the \textbf{for} loop to $(p_t,i_t)$ in~\Cref{algo:apx:for}, and let $t' \in \cT$ be the corresponding child of $t$ in this iteration. We claim that $t'$ is consistent with $O$.
Towards this, let $(Q_1,\dots,Q_k)$ and  $(Q'_1,\dots,Q'_k)$  be  first arguments at $t$ and $t'$, respectively. Then, note that for $i \neq i_t$, we have, $Q'_i = Q_i$, and $Q'_{i_t}= Q_{i_t} \cup \{(p_t,\alpha_t)\}$, where $\alpha_t=\frac{\d(p_t,x_{i_t})}{1+\epsilon}$.
Hence, for any $i \neq i_t$, we have that for $(p,\alpha) \in Q'_i=Q_i$, it holds that $\d(p,o_i) \le \alpha$ due to the induction hypothesis at $t$. Similarly, for $i_t$, we have for any $(p,\alpha) \in Q'_{i_t} \setminus \{(p_{t},\alpha_t)\}$, it holds that $\d(p,o_i) \le \alpha$ due to the induction hypothesis at $t$. Now, consider $(p_t,\alpha_t) \in Q'_{i_t}$, and note that $\alpha_{t} =\frac{\d(p_t,x_{i_t})}{1+\epsilon} > \d(p_t,o_{i_t})$ since $\d(p_t, x_{i_t}) > (1+\epsilon) \cdot \d(p_t, o_{i_t})$ from~\Cref{lem:witnesslemma}, finishing the proof of the claim.
Finally, note that since none of the nodes of path $\cP$ return $\bot$, the root node $t_0$ 
never returns $\bot$, finishing the proof of the lemma.

\end{proof}
Now, we are ready to finish the proof of~\Cref{thm:main}.

\noindent \paragraph*{Proof of~\Cref{thm:main}.} 
Suppose~\Cref{algo:main} terminates with a solution $(S,f_S)$. Then, we have that  $\wcost(Y,S) \le (1+5\epsilon)\cdot \cG  \le (1+5\epsilon)(1+\epsilon)\opt$. Therefore, we have
\begin{align*}
    \wcost(P,S,f_S) &\le (1+\epsilon)\cost(P,S) &\text{due to \findassgn}\\
    &\le (1+\epsilon) (1+2\epsilon)\cost(Y,S) & \text{due to coreset}\\
     &\le (1+\epsilon)^2 (1+2\epsilon)(1+5\epsilon) \opt  \\
        &\le (1+30\epsilon)\opt,
\end{align*}
as desired.

Now, \Cref{lem:main time} says that~\Cref{algo:main} terminates within time $\epastime$, and~\Cref{lem:correctness} implies that there is an iteration of \textbf{for} loop in~\Cref{algo:main:for} of~\Cref{algo:main} that does not return $\bot$. Hence, \Cref{algo:main} terminates with a desired solution in time $\epastime$. 

$\hfill\square$

\section{Applications of the Framework} \label{sec:applications}

In this section, we discuss some applications of the framework.
Key constructions in these applications are designing \genballint\ and \findassgn\ subroutines that satisfy the requirements of our framework (\Cref{subsec:formalsetup}). We divide our applications based on constraints on center selection.

\subsection{Constrained Center Selection}\label{ss:app:con}
In these problems, there are hard constraints on choosing centers as a solution. This forces designing \genballint\ (and \ballint) algorithm slightly challenging compared to vanilla clustering. For example,in the framework of~\cite{AbbasiClustering23}, it is absolutely fine for the ball intersection algorithm to chose the same center for different request sets during the execution, since the centers do not have any constraints. However, this is not allowed in problems such as capacitated clustering and fault tolerant clustering, and hence need to be addressed, depending on the problems.

\paragraph*{Preprocessing.} Towards this, we first transform the input instance $\cI$ in metric space $M=(P,F,d)$ to another instance $\cI'$ in metric space $M'=(P,F',d)$ such that the facility set partitioned into $k$ sets, i.e., $F' = \dot{\bigcup}_{t \in [k]} F'_t$, and ask for a solution  $S$ to $\cI$ to satisfy $|S \cap F_t| =1$, for all $t \in [k]$, in addition to the existing constraints of $\cI$. Finally, we want that the optimal cost of $\cI'$ is equal to that of $\cI$. This can be achieved using \emph{color coding} technique, as follows. For each $c \in F$, we assign a color $i \in [k]$ chosen uniformly at random. A \emph{good coloring} corresponds to a coloring in which all centers of a fixed optimal solution $O \subseteq F$ of size $k$ receive distinct colors. Note that this happens with probability $1/e^k$, and this procedure can be derandomized in time $2^{O(k)} n\log n$ by using standard tools from parameterized complexity~\cite{CyganFKLMPPS15}. Thus, now we assume that by inuring an additive factor of $2^{O(k)} n\log n$ and a multiplicative factor of  $2^{O(k)} \log n$ in time, the input instance has facility set partitioned into $k$ parts such that $O$ is colorful, and we want to find a \emph{colorful} set of $k$ centers that is feasible for the given constraints.

\subsubsection{Capacitated Clustering} \label{subsec:capacitated}

For capacitated $k$-median, a coreset of size $\lr{\frac{k \log n}{\epsilon}}^{O(1)}$ is known~\cite{Cohen-AddadL19Capacitated}. Note that this coreset is for general metric spaces, and even works with non-uniform capacities. Further, it is possible to generalize these coresets for $(k, z)$-clustering for any fixed $z \ge 1$, where the size of the coreset  depends on $z$.

The \texttt{Assignment} subroutine in this case is simply the minimum-cost maximum flow problem, which is solvable in polynomial time, and produces integral flows.

The \genballint\ algorithm is composed of $k$ ball intersection algorithms $\{\ballint_i\}_{i \in [k]}$, one  for every request set in $\{Q_i\}_{i \in [k]}$. $\ballint_i, i \in [k]$ on request set $Q_i$, selects a center in $F_i$  with maximum capacity that satisfies all the requests in $Q_i$. This ensures that, assuming all the guesses were correct, for each $i \in [k]$, at the end of the algorithm, we will have selected a center $x_i$ such that (i) for each $p \in Y$, if $f^*$ assigns $p$ to $i$, then $\d(p, x_i) \le (1+\epsilon) \cdot \d(p, o_i)$, and (ii) capacity of $x_i$ is at least that of $o_i$, and hence  \genballint\ is compatible with \findassgn. This guarantees that there is a feasible assignment from $Y$ to $X$ of cost at most $(1+O(\epsilon))$ times $\opt$. 

\textbf{A note on uniform capacities in continuous $\mathbb{R}^d$.}
When capacities are uniform, then the problem can also be defined over continuous Euclidean space, where a facility can be opened at any point in $\mathbb{R}^d$. Now, it turns out that the algorithmic scatter dimension of continuous Euclidean space is $O(\tfrac{1}{\epsilon^4}\log (\tfrac{1}{\epsilon}))$, whereas the non-algorithmic scatter dimension is $1/\epsilon^{O(d)}$. However, our framework can be adapted to work with algorithmic scatter dimension to yield EPAS for high dimensional continuous Euclidean space as follows. The \genballint\ subroutine uses the underlying \ballint\ algorithm admitted by the continuous Euclidean space for every request set $Q_i, i \in [k]$. 
However, it could happen that \genballint\ selects multiple copies of the same center in its solution $X$, and hence the assignment subroutine (minimum cost max flow subroutine) might fail. To counter this issue, we can adapt the minimum cost max flow subroutine by treating the duplicate centers in $X$ as distinct copies to obtain a feasible assignment, if there exists one. Now, since the algorithmic scatter dimension of this space is  $O(\tfrac{1}{\epsilon^4}\log (\tfrac{1}{\epsilon}))$, the depth of the recursion of our algorithm is also  $O(\tfrac{1}{\epsilon^4}\log (\tfrac{1}{\epsilon}))$, yielding EPAS running time. However, as mentioned before, it could happen that the final solution $X$ returned by our algorithm has multiple copies of the same facility. To handle this, we replace the duplicates with arbitrarily close facilities to the original facility. Note that, such facilities exist since we are in continuous Euclidean space, and it  incurs only negligible loss in the approximation factor.

\subsubsection{Matroid Clustering} \label{subsec:matroid}
In this variant of \probname, we are additionally given a matroid $\mathbb{M} = (F, \mathcal{I})$, where $\mathcal{I}$ is a collection of independent sets satisfying the matroid axioms\footnote{(1) $\emptyset \in \mathcal{I}$, (2) If $A \in \mathcal{I}$, then for every $B \subseteq A$, it holds that $B \in \mathcal{I}$, and (3) For every $A, B \in \mathcal{I}$ with $|A| < |B|$, there exists some $b \in B \setminus A$ such that $A \cup \LR{b} \in \mathcal{I}$}. The goal is to find a subset $C \subseteq F$ of centers that minimizes the $(k, z)$-clustering cost, subject to the constraint that $C$ forms an independent set in the matroid, i.e., $C \in \mathcal{I}$. We assume that the matroid $\mathcal{M}$ is provided via an independence oracle, i.e., given a subset $S \subseteq F$, the oracle reports whether $S \in \mathcal{I}$ in a single step. We assume that the rank of $\mathcal{M}$ is $k$, i.e., we are looking for a subset of facilities that is a base in the matroid $\mathcal{M}$ -- this is without loss of generality, since we can always truncate the matroid to rank $k$ if necessary (the independence oracle of the truncated matroid can be easily simulated given the oracle for $\mathcal{M}$). As mentioned earlier, using  color coding, we partition $F$ into $k$ colors such that $O$ is colorful. This is nothing but  creating a partition matroid $\mathbb{P}$ on $F$. Overall, this means, we are looking for a set of centers in the intersection of $\mathbb{M}$ and $\mathbb{P}$.

The \genballint\ algorithm is simply the polynomial time algorithm for the classic problem of matroid intersection~\cite{edmonds2003submodular} of $\mathbb{M}$ and $\mathbb{P}$.

The \findassgn\ simply assigns a point completely to the closest center in $X$. In this case, both \genballint\ and \findassgn\ are compatible with each other.

\subsubsection{Fault-tolerant Clustering} \label{subsec:fault}
We consider (uniform) fault-tolerant setting, which is defined as follows.
\begin{definition} \label{def:fkm}
Let $z \ge 1$ be a fixed constant.
    The input consists of a metric space $\mathcal{M} = (P \cup F, d)$, and positive integers $k\ge 1$ and $\ell \le k$. The goal is to find $k$ centers $X \subseteq F$ to minimize $\cost(X) := \sum_{p \in P} \d_\ell(p,X)$, where, for a point $p \in P$, $\d_\ell(p,X) := \min_{X\ \subseteq X, |X'|=\ell} \sum_{x'\in X'} (\d(p,x'))^z$, is   the sum of cost of serving $p$ by $\ell$ nearest centers in $X$.
\end{definition}

Now we show how fault-tolerant clustering is capture by our framework. Let $X$ be a set of $k$ centers, and let $X_p$ be the set of $\ell$ closest centers in $X$ to $p \in P$. Then, we have that $d_\ell(p,X) = \sum_{x \in X_p} d(p, x)$, and $\cost(X) = \sum_{p \in P} \sum_{x \in X_p} d(p, x)$. Let $f:P \times X \rightarrow \mathbb{R}_{\ge 0}$ be the corresponding assignment function, i.e., $f(p,x) =1$ if  $x \in X_p$, and $0$ otherwise.
In our setting, we will instead work with normalized point-cost $d'_\ell(p,X)$, which is defined as $d'_\ell(p,X) := \sum_{x \in X_p} \tfrac{d(p, x)}{\ell}$, and the normalized total cost $\cost'(X):=\sum_{p \in P} \d'_\ell(p,X)$. Let $f':P \times X \rightarrow \mathbb{R}_{\ge 0}$ be the corresponding assignment function, i.e.,  $f(p,x) =1/\ell$ if  $x \in X_p$, and $0$ otherwise. Then, note that $\sum_{x \in X} f'(p,x)=1$, for every $p \in P$, as required. Moreover, 
\[\cost(X) = \sum_{p \in P} \sum_{x \in X} f(p,x) (d(p,x))^z = \ell \cdot\big( \sum_{p \in P} \sum_{x \in X} f'(p,x) (d(p,x))^z\big) = \ell \cdot \cost'(X).\]

Thus, for any center set $X$, the cost of $f'$ is exactly $\ell$ times that of $f'$. Moreover, $f'$ satisfies the condition of our problem. Hence, in our framework, we instead work with $f'$ and the estimate of cost of $(O,f')$. 

Fault-tolerant $(k,z)$-Clustering is known to admit coreset of size $O(m+(\log n + k +\epsilon^{-1})k^2 \epsilon^{-2z})$ in general metrics due to~\cite{Huang0L025}. As mentioned earlier, we assume that the facilities are colored in $k$ colors such that $O$ is colorful.

For every request set $Q_i, i \in [k]$, the \genballint\ subroutine returns a feasible facility of color $i$.

Given a set $X$ of $k$ centers, the \texttt{Assignment} subroutine in this case is the following:  for a point $p \in Y$ with weight $w(p)$, let $X_p \subseteq X$ be the set of $\ell$ closest centers in $X$ to $p$. Then, assign weight $w(p)/\ell$ to each center in $X_p$. In this case, both \genballint\ and \findassgn\ are compatible with each other.

\subsection{Unconstrained Center Selection} \label{ss:app:un}
In this section, we describe some applications of our framework which does not enforce restrictions on the center selection, however, there may be restrictions on how points are assigned to the chosen centers. Therefore, in these applications, our framework yields EPAS for metrics for bounded algorithm $\epsilon$-scatter dimension, since we can use the best \ballint\ admitted by the metric in our framework.

\subsubsection{\texorpdfstring{$(\alpha, \beta)$}{(alpha, beta)}-Fair Clustering and Variants} \label{subsec:fairclustering}

We begin with the definition of $(\alpha, \beta)$-fair $k$-median problem that has been studied in \cite{CKLV17,BravermanCJKST022,BandyapadhyayFS24,Huang0L025}. 

\begin{definition} \label{def:fkm}
    The input consists of a metric space $\mathcal{M} = (P \cup F, d)$, where $P = P_1 \cup P_2 \cup \ldots P_\ell$, where $P_i$'s are (not necessarily disjoint) \emph{groups} of clients in $P$. Further, the input also consists of two fairness vectors $\alpha, \beta \in [0, 1]^\ell$. The objective is to select a set $X \subseteq F$ of $k$ centers, and an assignment $f: P \to X$ such that $f$ satisfies the following fairness constraints:   \begin{align*}
        \alpha_i \le \frac{|\LR{x \in P_i: f(x) = c}|}{|\LR{x \in P: f(x) = c}|} \le \beta_i \qquad \forall c \in X,  \forall i \in [\ell]
    \end{align*}
    And further, $\cost(P, X) \coloneqq \sum_{p \in P} \d(p, f(p))$ is minimized among all such assignments. In this problem, $\Gamma$ denotes the number of \emph{equivalence classes} of points w.r.t.~groups, i.e., the number of distinct subsets of groups that a point can belong.
\end{definition}
The following definitions pertaining to coresets for $(\alpha, \beta)$-fair clustering were introduced in \cite{BandyapadhyayFS24}.
\begin{definition}[Coloring Constraint and Universal Coresets, \cite{BandyapadhyayFS24}] \label{def:universalcoresets}
    Fix an instance $(\mathcal{M} = (P \cup F, \d), (P_1, P_2, \ldots, P_\ell), \alpha, \beta, k)$ of the $(\alpha, \beta)$-fair $k$-median problem.
    \begin{itemize}
        \item A \emph{coloring constraint} is a $k \times \ell$ matrix $M$ containing non-negative entries. 
        \item Let $(Y, w)$ be a weighted set of points. Then, for a set of $k$ centers $X = \LR{c_1, c_2, \ldots, c_k}$ and a coloring constraint $M \in \mathbb{N}^{k \times \ell}$, $\wcost(Y, X, M)$ is defined as the minimum value of $\sum_{p \in Y} \sum_{i \in [k]} f(p, c_i) \cdot \d(p, c_i)$, over all assignments $f: Y \times X \to \mathbb{N}$ satisfying the following constraints. 
        \begin{itemize}
            \item For each $p \in P$, $\sum_{i \in [k]} f(p, c_i) = w(p)$, and
            \item For each $i \in [k]$ and each group $j \in [\ell]$, $\sum_{p \in P_j} f(p, c_i) = M_{ij}$.
        \end{itemize}
        If no such assignment exists, then $\wcost(Y, X, M)$ is defined as $\infty$. Further, when $(Y, w)$ is the original set of points $P$ with unit weight function $\mathbbm{1}$, then $\wcost(P, X, M)$ is written as $\cost(P, X, M)$.
        \item A \emph{universal coreset} is a pair $(Y, w)$, such that for every set $X$ of $k$ centers and every coloring constraint $M$, the following holds:
        $$(1-\epsilon) \cdot \cost(P, X, M) \le \wcost(Y, X, M) \le (1+\epsilon) \cdot \cost(P, X, M).$$
    \end{itemize}
\end{definition}
Universal coresets of size $\Gamma \lr{\frac{k \log n}{\epsilon}}^{O(1)}$ in general metrics were designed in \cite{BandyapadhyayFS24}, and the extensions to $(k, z)$-clustering variant (which is defined by naturally modifying the definitions above by including the $z$-th power of distances) with improved size were given in \cite{Huang0L025}. For our purpose, the former guarantee on the size suffices. Further, \cite{BandyapadhyayFS24} also designed an algorithm that takes input a weighted point set $(Y, w)$, and a set $X = \LR{c_1, c_2, \ldots, c_k}$ of $k$ centers, runs in time $(k\Gamma)^{O(k \Gamma)} \cdot n^{O(1)}$, and returns an an assignment $f: Y \times X \to \mathbb{N}$, such that the cost $\sum_{p \in Y} \sum_{i \in [k]} f(p, c_i) \cdot \d(p, c_i)$ is minimized across all assignments that respect the $\alpha, \beta$ fairness constraints as above. Further, it is easy to check that this algorithm does not depend on the objective function -- in fact it can compute an optimal assignment for any $(k, z)$-clustering objective.
For \genballint, we use the underlying \ballint\ algorithm to compute a feasible center for every request set $Q_i, i \in [k]$.
Thus, we have all the ingredients to plug into our framework for $(\alpha, \beta)$-fair $(k, z)$-clustering in metric spaces of bounded scatter dimension for any fixed $z \ge 1$, and obtain an EPAS with running time $\Gamma^{O(\Gamma k \lambda(\epsilon) \log(\frac{1}{\epsilon}) (\log k + \log(\lambda(\epsilon))))} \cdot (|\cI|)^{O(1)}$.

\paragraph*{Further implications.} \cite{BandyapadhyayFS24} discuss the following variants of constrained clustering in their paper.
\begin{description}
    \item[Lower Bounded Clustering.] Here, the points have only a single color, but the size of each cluster must be at least $L$, where $L$ is a non-negative integer given in the input.
    \item[$\ell$-Diversity Clustering.] Here, the points are again classified in one of $\ell$ groups, $P_1, P_2, \ldots, P_\ell$ (a point can belong to multiple groups), and each cluster $A$ should satisfy that $|A \cap P_\ell| \le \frac{|A|}{\ell}$. 
    \item[Chromatic Clustering.] The setting is similar to above, but each cluster can contain at most one point of each group.
\end{description}
\cite{BandyapadhyayFS24} discuss in detail the modifications (if any) to the universal coreset and the FPT algorithm for finding an optimal assignment required to handle each of these variants, and we direct the reader to these discussions. Since these are the only two ingredients required in our problem, our framework implies an EPAS for all of these variants in metric spaces of bounded scatter dimension.

\subsubsection{Vanilla (Voronoi) Clustering} \label{subsec:vanilla}

Our framework also naturally for vanilla clustering, implying an EPAS with running time $2^{O_\epsilon(k \log k)} \cdot (|\cI|)^{O(1)}$. Notice that this already improves upon the $2^{O_\epsilon(k \log^2 k)} \cdot n^{O(1)}$ running time from \cite{AbbasiClustering23}, where $O_\epsilon(\cdot)$ notation hides the dependence on $\epsilon$. However, the dependence on $\epsilon$ may be worse for metric spaces with super-exponential scattter dimension. Here, we describe a simple modification for choosing a point of the coreset in the algorithm (as opposed to the for loop of line 4 in \Cref{algo:apx}) that, (i) has improved dependence on $k$ as described above, and (ii) matches the dependence on $\epsilon$ as in \cite{AbbasiClustering23}, thus improving the running time as compared to \cite{AbbasiClustering23} in all regimes.

To this end, first we make the following observation: in Voronoi clustering, without loss of generality, it can be assumed that for each $p \in Y$, the entire weight $w(p)$ is assigned to a single cluster-center $x_i \in X$ that is closest to $p$. 
Then, in \Cref{algo:apx} instead of the for loop of lines 4 to 7: we sample a point $p \in \bigcup_{i \in [k]} B_i$ according to the following distribution: $\Pr(p = a) = \frac{w(a) \d(a, X)}{\sum_{p \in \cup_i B_i} w(p) \d(p, X) }$ for $a \in \cup_{i \in [k]} B_i$. Let $p$ be the sampled point. Next, we also sample an index $i$ uniformly at random from the set $\LR{j \in [k]: p \in B_i}$. Then, we make a similar recursive call to the same algorithm by adding the request $(p, \frac{\d(p, x_i)}{1+\epsilon/3})$ to $Q_i$, similar to line 5. The crucial observation is the following, which is inspired from similar claims from \cite{AbbasiClustering23,DBLP:conf/stacs/Gadekar025}.

\begin{claim}
    Let $X$ be a solution such that $\cost(Y, X) > (1+5\epsilon) \cdot \mathcal{G}$, and let $W \coloneqq \{p \in Y: \d(p, X) > (1+\epsilon) \cdot \d(p, O)\}$. Then, (i) $W \subseteq \bigcup_{i \in [k]} B_i$, and (ii) $\frac{\sum_{p \in W} w(p) \d(p, X)}{\sum_{p \in Y} w(p) \d(p, X)} \ge \frac{\epsilon}{10}$.
\end{claim}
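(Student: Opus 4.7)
The plan is to handle the two parts separately, each reusing the machinery already developed in Section~\ref{subsec:epas}.

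For part (i), I will argue by contradiction in the same pattern as the witness lemma (Claim~\ref{lem:witnesslemma}). Fix any $p \in W$; in the Voronoi setting the optimal assignment $f^*$ sends $p$ to its nearest center $o_i$ in $O$, so $\d(p, O) = \d(p, o_i)$, and I claim $p \in B_i$. If instead $\d(p, p^*_i) > 6\tilde{r}_i/\epsilon$, the same two triangle-inequality computations that appear in the proof of Claim~\ref{lem:witnesslemma} (using $\d(p^*_i, x_i) \le (1+\epsilon)\tilde{r}_i$ from the initial request in $Q_i$ and $\d(p^*_i, o_i) \le \tilde{r}_i$) give $\d(p, x_i) \le (1+\epsilon/4) \d(p, p^*_i)$ and $\d(p, o_i) \ge (1-\epsilon/6) \d(p, p^*_i)$, which combine to $\d(p, x_i) \le (1+\epsilon) \d(p, o_i)$. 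Since $\d(p, X) \le \d(p, x_i)$ and $\d(p, o_i) = \d(p, O)$, this contradicts $p \in W$, establishing $W \subseteq \bigcup_i B_i$.

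For part (ii), I will split the denominator by $W$ versus $Y \setminus W$. Writing $A \coloneqq \sum_{p \in W} w(p) \d(p, X)$ and $B \coloneqq \sum_{p \in Y \setminus W} w(p) \d(p, X)$, so that the denominator equals $A+B = \cost(Y, X)$, the definition of $W$ gives $B \le (1+\epsilon) \cost(Y, O) \le (1+\epsilon) \mathcal{G}$, using $\mathcal{G} \ge \opt = \cost(Y, O)$. The hypothesis $A+B > (1+5\epsilon)\mathcal{G}$ then forces $A > 4\epsilon \mathcal{G}$. Because $A/(A+B)$ is increasing in $A$ for fixed $B$, the worst case pins $A$ at $4\epsilon\mathcal{G}$ and $B$ at $(1+\epsilon)\mathcal{G}$, yielding a ratio of at least $\tfrac{4\epsilon}{1+5\epsilon} \ge \tfrac{\epsilon}{10}$ for $\epsilon \le 1/2$.

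I do not expect a substantial obstacle here. Part (i) is a near-verbatim translation of the triangle-inequality reasoning in Claim~\ref{lem:witnesslemma}, exploiting the Voronoi-specific simplification that $f^*$ is the Voronoi partition of $O$, so there is no need to guess which cluster a witness belongs to. Part (ii) is a routine ``mass concentrated on bad points'' calculation; the only mild care needed is to upper-bound the denominator via $B \le (1+\epsilon)\mathcal{G}$ rather than by trying to bound $\cost(Y, X)$ a priori, since the hypothesis only provides a \emph{lower} bound on the total cost.
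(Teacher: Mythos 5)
Your proposal is correct and follows essentially the same route as the paper: part (i) is the same triangle-inequality argument recycled from Claim~\ref{lem:witnesslemma} (you correctly negate $p\in B_i$ as $\d(p,p^*_i)>6\tilde r_i/\epsilon$, whereas the paper's text has a small slip writing $\d(p,x_i)$ there), and part (ii) uses the identical decomposition $\cost(Y,X)=C_W+C_H$ with $C_H\le(1+\epsilon)\opt$, merely phrased as a direct worst-case bound $\tfrac{4\epsilon}{1+5\epsilon}\ge\tfrac{\epsilon}{10}$ instead of the paper's proof by contradiction. No gaps beyond the $(1+O(\epsilon))$-level imprecision between $\cost(Y,O)$, $\opt$, and $\mathcal{G}$ that the paper itself also glosses over.
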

\begin{proof}
    Consider  $X$ and the corresponding set $W$ as in the statement. Let $H \coloneqq Y \setminus W$, and note that for all points $p \in H$, $\d(p, X) \le (1+\epsilon) \cdot \d(p, O)$. For any $T \subseteq Y$, let $C_T \coloneqq \sum_{p \in T} w(p) \d(p, X)$. 

    First, let us suppose that there exists a point $p \in W$ such that $\d(p, x_i) > \frac{6r'_i}{\epsilon}$ for all $i \in [k]$. Then, by arguments similar to that in \Cref{lem:witnesslemma}, we can argue that $\d(p, X) < (1+\epsilon) \d(p, O)$, contradicting the assumption that $p \in W$. This shows part (i).

    Let us now suppose for contradiction for part (ii) that $\frac{C_W}{C_Y} < \frac{\epsilon}{10}$.
    Then, consider:
    \begin{align*}
        C_H = \sum_{p \in H} w(p) \d(p, X) &\le (1+\epsilon) \cdot \sum_{p \in H} w(p) \d(p, O) \le (1+\epsilon) \opt
    \end{align*}
    Now, note that $C_Y = C_H + C_W \le (1+\epsilon) \opt + \frac{\epsilon}{10} C_Y$, which implies that $(1-\frac{\epsilon}{10}) C_Y \le (1+\epsilon) \opt \implies C_Y \le \frac{1+\epsilon}{1-\epsilon/10} \opt \le (1+2\epsilon) \opt \le (1+2\epsilon) \mathcal{G}$, contradicting the hypothesis. 
\end{proof}
Assuming this claim, we know that the probability that the point $p$ sampled according to the aforementioned distribution belongs to $W$ is at least $\frac{\epsilon}{10}$. Conditioned on this event, with probability at least $\frac{1}{k}$, the sampled cluster index $i$ in the aforementioned way is same as that of $p$ in the optimal solution $O$. Hence, the state of the algorithm remains consistent with $O$. The depth of the recursion remains bounded by $O(k \log(\frac{1}{\epsilon}) \cdot \lambda(\epsilon))$, which implies that we obtain an algorithm with running time $2^{O(k \log(\frac{1}{\epsilon}) \lambda(\epsilon/c) \cdot \log(\frac{k}{\epsilon}))} \cdot (|\cI|)^{O(1)}$, for some constant $c$ that depends only on $z$. 


\subsection{Additional Remarks} \label{subsec:remarks}

\paragraph*{Handling Outliers.} Jaiswal and Kumar~\cite{Jaiswal023} (also \cite{DabasGI25}), extending the work of Agrawal et al.~\cite{AgrawalISX23} gave a general \emph{additive-$\epsilon$ approximation-preserving} reduction from constrained $(k, z)$-clustering with $m$ outliers to its analogous constrained version \emph{without} outliers that has a multiplicative overhead of $\lr{\frac{k+m}{\epsilon}}^{O(m)} \cdot n^{O(1)}$. By plugging in our EPASes for constrained clustering into this result, we also obtain EPASes for the corresponding outlier versions. 

\paragraph*{Relations to Improved Coresets for Constrained Clustering.} For most known metric spaces with bounded scatter dimension, the recent work of Huang et al.~\cite{Huang0L025} gives coresets of size $\text{poly}(k, \epsilon)$ for a large class of constrained $(k, z)$-clustering problems, including capacitated clustering, $(\alpha, \beta)$-fair clustering, fault-tolerant clustering, and more. To put this work into a proper context w.r.t.~our EPASes, two remarks are in order.

Firstly, we note that such coresets can be used to obtain EPASes for the corresponding problems\footnote{Although the following argument is very simple and should be folklore, we are unable to find a reference that explicitly describes how coresets of size independent of $\log n$ may be used to obtain EPASes for \emph{constrained} $(k, z)$-clustering. Therefore, we provide the argument for the sake of completeness in order to perform a fair comparison with our results.}. 
Let $(Y, w)$ be the coreset of size $\text{poly}(k, \epsilon)$. 
We start, as usual, with the color coding step, incurring a $O(2^{O(k)} |\cI|^{O(1)})$ factor. Then, for each $i \in [k]$, we guess the cost of the $i$th optimal cluster, up to a factor of $(1+\varepsilon)$; call this value $\tilde{C}_i$.
Next, for each $p \in Y$ and each $i \in [k]$, we guess (up to a factor of $(1+\varepsilon)$) the portion of weight of $p$ assigned to the $i$th cluster, call this $f_i(p)$. Note that the number of guesses is bounded by $(\frac{\log n}{\epsilon})^{k \cdot |Y|}$, which is FPT in $k$ and $\varepsilon$. Fix one such guess. Then, for each cluster $i \in [k]$, let 
$\hat{F}_i$ be the subset of centers $c$ of color $i$ such that $\sum_{p \in Y} \d(p, c) \cdot f_i(p) \approx_{1+\varepsilon} \tilde{C}_i$. 
Then, we select a ``best center'' from $\hat{F}_i$ depending upon the underling constraint (for example, for capacitated setting, we select the largest capacity center in $\hat{F}_i$).
This yields us a candidate center set $X$. Finally, compute an optimal assignment using the assignment algorithm in time FPT in $k$ and $\epsilon$. Finally, we return the minimum-cost solution found over all guesses. Although this strategy also yields an EPAS for many of the constrained problems in metric spaces of bounded scatter dimension, we note that the running time is $(k|Y|/\epsilon)^{O(k |Y|)} \cdot n^{O(1)}$, and the size of coresets is $\Omega_{\epsilon}(k^2)$. Therefore, the dependence on $k$ of the running time is $2^{\Omega_\epsilon(k^3 \log k)} \cdot n^{O(1)}$, which is much worse compared to our running time of $2^{O_\epsilon(k \log k)} \cdot n^{O(1)}$. Secondly, although the result of \cite{Huang0L025} gives coresets of size independent of $\log n$ in most known metric spaces of bounded scatter dimension, it is \emph{not known} to yield such coresets for all metric spaces of bounded scatter dimension. Thus, this result does not answer the open question of \cite{AbbasiClustering23} in its entirety.

Secondly, we note that plugging in the improved coresets into our algorithm improves the dependence on $\epsilon$ in running time of the algorithm -- a factor of $O(\log(\lambda(\epsilon)))$ in the exponent is replaced by a factor of $O(\log(\frac{1}{\epsilon}))$. Since the scatter dimension $\lambda(\epsilon
)$ of all of these classes of metric spaces is bounded by $2^{1/\epsilon^{O(1)}}$, this results in an improved dependence on $\epsilon$.

\bibliographystyle{alpha}
\bibliography{references}

@article{Har-PeledM05,
	author = {Sariel Har{-}Peled and Soham Mazumdar},
	bibsource = {dblp computer science bibliography, https://dblp.org},
	biburl = {https://dblp.org/rec/journals/algorithmica/Har-PeledM05.bib},
	date-added = {2025-03-24 12:19:36 +0530},
	date-modified = {2025-03-24 12:19:36 +0530},
	doi = {10.1007/S00453-004-1123-0},
	journal = {Algorithmica},
	number = {3},
	pages = {147--157},
	timestamp = {Fri, 30 Nov 2018 13:28:56 +0100},
	title = {Fast Algorithms for Computing the Smallest k-Enclosing Circle},
	url = {https://doi.org/10.1007/s00453-004-1123-0},
	volume = {41},
	year = {2005},
	bdsk-url-1 = {https://doi.org/10.1007/s00453-004-1123-0}}

@inproceedings{edmonds2003submodular,
  title={Submodular functions, matroids, and certain polyhedra},
  author={Edmonds, Jack},
  booktitle={Combinatorial Optimization—Eureka, You Shrink! Papers Dedicated to Jack Edmonds 5th International Workshop Aussois, France, March 5--9, 2001 Revised Papers},
  pages={11--26},
  year={2003},
  organization={Springer}
}

@inproceedings{CharikarKMN01,
	author = {Moses Charikar and Samir Khuller and David M. Mount and Giri Narasimhan},
	bibsource = {dblp computer science bibliography, https://dblp.org},
	biburl = {https://dblp.org/rec/conf/soda/CharikarKMN01.bib},
	booktitle = {Proceedings of the Twelfth Annual Symposium on Discrete Algorithms (SODA)},
	date-added = {2025-03-24 12:19:36 +0530},
	date-modified = {2025-03-24 12:19:36 +0530},
	pages = {642--651},
	publisher = {{ACM/SIAM}},
	timestamp = {Fri, 07 Dec 2012 17:02:08 +0100},
	title = {Algorithms for facility location problems with outliers},
	url = {http://dl.acm.org/citation.cfm?id=365411.365555},
	year = {2001},
	bdsk-url-1 = {http://dl.acm.org/citation.cfm?id=365411.365555}}

@article{feldman2013turning,
	author = {Feldman, Dan and Schmidt, Melanie and Sohler, Christian},
	date-added = {2025-03-24 12:19:36 +0530},
	date-modified = {2025-03-24 12:19:36 +0530},
	journal = {SIAM Journal on Computing},
	number = {3},
	pages = {601--657},
	publisher = {SIAM},
	title = {Turning big data into tiny data: Constant-size coresets for k-means, PCA, and projective clustering},
	volume = {49},
	year = {2020}}

@inproceedings{sohler2018strong,
	author = {Sohler, Christian and Woodruff, David P},
	booktitle = {2018 IEEE 59th Annual Symposium on Foundations of Computer Science (FOCS)},
	date-added = {2025-03-24 12:19:36 +0530},
	date-modified = {2025-03-24 12:19:36 +0530},
	organization = {IEEE},
	pages = {802--813},
	title = {Strong coresets for k-median and subspace approximation: Goodbye dimension},
	year = {2018}}

@inproceedings{feldman2011unified,
	author = {Feldman, Dan and Langberg, Michael},
	booktitle = {Proceedings of the forty-third annual ACM symposium on Theory of computing (STOC)},
	date-added = {2025-03-24 12:19:36 +0530},
	date-modified = {2025-03-24 12:19:36 +0530},
	pages = {569--578},
	title = {A unified framework for approximating and clustering data},
	year = {2011}}

@article{cohen2021near,
	author = {Cohen-Addad, Vincent and Feldmann, Andreas Emil and Saulpic, David},
	date-added = {2025-03-24 12:19:36 +0530},
	date-modified = {2025-03-24 12:19:36 +0530},
	journal = {Journal of the ACM},
	number = {6},
	pages = {1--34},
	publisher = {ACM New York, NY},
	title = {Near-linear time approximation schemes for clustering in doubling metrics},
	volume = {68},
	year = {2021}}

@inproceedings{FeldmannV22,
  author       = {Andreas Emil Feldmann and
                  Tung Anh Vu},
  title        = {Generalized k-Center: Distinguishing Doubling and Highway Dimension},
  booktitle    = {Graph-Theoretic Concepts in Computer Science - 48th International
                  Workshop, {WG} 2022, T{\"{u}}bingen, Germany, June 22-24, 2022,
                  Revised Selected Papers},
  series       = {Lecture Notes in Computer Science},
  volume       = {13453},
  pages        = {215--229},
  publisher    = {Springer},
  year         = {2022},
  url          = {https://doi.org/10.1007/978-3-031-15914-5\_16},
  doi          = {10.1007/978-3-031-15914-5\_16},
  timestamp    = {Tue, 18 Oct 2022 22:16:58 +0200},
  biburl       = {https://dblp.org/rec/conf/wg/FeldmannV22.bib},
  bibsource    = {dblp computer science bibliography, https://dblp.org}
}

@article{KumarSS10,
	author = {Amit Kumar and Yogish Sabharwal and Sandeep Sen},
	date-added = {2025-03-24 12:19:36 +0530},
	date-modified = {2025-03-24 12:19:36 +0530},
	journal = {J. {ACM}},
	number = {2},
	pages = {5:1--5:32},
	title = {Linear-time approximation schemes for clustering problems in any dimensions},
	volume = {57},
	year = {2010}}

@inproceedings{DBLP:conf/stoc/Har-PeledM04,
	author = {Sariel Har{-}Peled and Soham Mazumdar},
	booktitle = {Proceedings of the 36th Annual {ACM} Symposium on Theory of Computing, Chicago, IL, USA, June 13-16, 2004},
	date-added = {2025-03-24 12:19:36 +0530},
	date-modified = {2025-03-24 12:19:36 +0530},
	pages = {291--300},
	publisher = {{ACM}},
	title = {On coresets for k-means and k-median clustering},
	year = {2004}}

@inproceedings{DBLP:conf/stoc/HuangV20,
	author = {Lingxiao Huang and Nisheeth K. Vishnoi},
	bibsource = {dblp computer science bibliography, https://dblp.org},
	biburl = {https://dblp.org/rec/conf/stoc/HuangV20.bib},
	booktitle = {Proccedings of the 52nd Annual {ACM} {SIGACT} Symposium on Theory of Computing, {STOC} 2020, Chicago, IL, USA, June 22-26, 2020},
	date-added = {2025-03-24 12:19:36 +0530},
	date-modified = {2025-03-24 12:19:36 +0530},
	doi = {10.1145/3357713.3384296},
	pages = {1416--1429},
	publisher = {{ACM}},
	timestamp = {Tue, 09 Jun 2020 13:03:16 +0200},
	title = {Coresets for clustering in Euclidean spaces: importance sampling is nearly optimal},
	url = {https://doi.org/10.1145/3357713.3384296},
	year = {2020},
	bdsk-url-1 = {https://doi.org/10.1145/3357713.3384296}}

@inproceedings{Cohen-AddadG0LL19Tight,
	author = {Vincent Cohen{-}Addad and Anupam Gupta and Amit Kumar and Euiwoong Lee and Jason Li},
	bibsource = {dblp computer science bibliography, https://dblp.org},
	biburl = {https://dblp.org/rec/conf/icalp/Cohen-AddadG0LL19.bib},
	booktitle = {46th International Colloquium on Automata, Languages, and Programming (ICALP)},
	date-added = {2025-03-24 12:19:36 +0530},
	date-modified = {2025-03-24 12:19:36 +0530},
	doi = {10.4230/LIPIcs.ICALP.2019.42},
	pages = {42:1--42:14},
	publisher = {Schloss Dagstuhl - Leibniz-Zentrum f{\"{u}}r Informatik},
	series = {LIPIcs},
	timestamp = {Wed, 18 Aug 2021 17:14:27 +0200},
	title = {Tight {FPT} Approximations for $k$-Median and $k$-Means},
	url = {https://doi.org/10.4230/LIPIcs.ICALP.2019.42},
	volume = {132},
	year = {2019},
	bdsk-url-1 = {https://doi.org/10.4230/LIPIcs.ICALP.2019.42}}

@inproceedings{braverman2021coresets,
	author = {Braverman, Vladimir and Jiang, Shaofeng H-C and Krauthgamer, Robert and Wu, Xuan},
	booktitle = {Proceedings of the 2021 ACM-SIAM Symposium on Discrete Algorithms (SODA)},
	date-added = {2025-03-24 12:19:36 +0530},
	date-modified = {2025-03-24 12:19:36 +0530},
	organization = {SIAM},
	pages = {2679--2696},
	title = {Coresets for clustering in excluded-minor graphs and beyond},
	year = {2021}}

@inproceedings{DBLP:conf/icml/BakerBHJK020,
	author = {Daniel N. Baker and Vladimir Braverman and Lingxiao Huang and Shaofeng H.{-}C. Jiang and Robert Krauthgamer and Xuan Wu},
	bibsource = {dblp computer science bibliography, https://dblp.org},
	biburl = {https://dblp.org/rec/conf/icml/BakerBHJK020.bib},
	booktitle = {Proceedings of the 37th International Conference on Machine Learning, {ICML} 2020, 13-18 July 2020, Virtual Event},
	date-added = {2025-03-24 12:19:36 +0530},
	date-modified = {2025-03-24 12:19:36 +0530},
	pages = {569--579},
	publisher = {{PMLR}},
	series = {Proceedings of Machine Learning Research},
	timestamp = {Wed, 04 Aug 2021 14:24:01 +0200},
	title = {Coresets for Clustering in Graphs of Bounded Treewidth},
	url = {http://proceedings.mlr.press/v119/baker20a.html},
	volume = {119},
	year = {2020},
	bdsk-url-1 = {http://proceedings.mlr.press/v119/baker20a.html}}

@book{CyganFKLMPPS15,
	author = {Marek Cygan and Fedor V. Fomin and Lukasz Kowalik and Daniel Lokshtanov and D{\'{a}}niel Marx and Marcin Pilipczuk and Michal Pilipczuk and Saket Saurabh},
	bibsource = {dblp computer science bibliography, https://dblp.org},
	biburl = {https://dblp.org/rec/books/sp/CyganFKLMPPS15.bib},
	date-added = {2025-03-24 12:19:36 +0530},
	date-modified = {2025-03-24 12:19:36 +0530},
	doi = {10.1007/978-3-319-21275-3},
	isbn = {978-3-319-21274-6},
	publisher = {Springer},
	timestamp = {Sun, 25 Oct 2020 22:32:21 +0100},
	title = {Parameterized Algorithms},
	url = {https://doi.org/10.1007/978-3-319-21275-3},
	year = {2015},
	bdsk-url-1 = {https://doi.org/10.1007/978-3-319-21275-3}}

@inproceedings{AbbasiClustering23,
	author = {Fateme Abbasi and Sandip Banerjee and Jaroslaw Byrka and Parinya Chalermsook and Ameet Gadekar and Kamyar Khodamoradi and D{\'{a}}niel Marx and Roohani Sharma and Joachim Spoerhase},
	bibsource = {dblp computer science bibliography, https://dblp.org},
	biburl = {https://dblp.org/rec/conf/focs/AbbasiBBCGKMSS23.bib},
	booktitle = {64th {IEEE} Annual Symposium on Foundations of Computer Science, {FOCS} 2023, Santa Cruz, CA, USA, November 6-9, 2023},
	date-added = {2025-03-24 12:19:36 +0530},
	date-modified = {2025-03-24 12:19:36 +0530},
	doi = {10.1109/FOCS57990.2023.00085},
	pages = {1377--1399},
	publisher = {{IEEE}},
	timestamp = {Tue, 02 Jan 2024 15:09:54 +0100},
	title = {Parameterized Approximation Schemes for Clustering with General Norm Objectives},
	url = {https://doi.org/10.1109/FOCS57990.2023.00085},
	year = {2023},
	bdsk-url-1 = {https://doi.org/10.1109/FOCS57990.2023.00085}}

@inproceedings{BourneufP25,
	author = {Romain Bourneuf and Marcin Pilipczuk},
	booktitle = {Proceedings of the 2025 Annual ACM-SIAM Symposium on Discrete Algorithms (SODA) (to appear)},
	date-added = {2025-03-24 12:19:24 +0530},
	date-modified = {2025-03-24 12:19:24 +0530},
	title = {Bounding $\varepsilon$-scatter dimension via metric sparsity},
	year = {2025}}

@inproceedings{Cohen-AddadL19Capacitated,
	author = {Vincent Cohen{-}Addad and Jason Li},
	bibsource = {dblp computer science bibliography, https://dblp.org},
	biburl = {https://dblp.org/rec/conf/icalp/Cohen-AddadL19.bib},
	booktitle = {46th International Colloquium on Automata, Languages, and Programming, {ICALP} 2019, July 9-12, 2019, Patras, Greece},
	date-added = {2025-03-24 12:19:24 +0530},
	date-modified = {2025-03-24 12:19:24 +0530},
	doi = {10.4230/LIPICS.ICALP.2019.41},
	pages = {41:1--41:14},
	publisher = {Schloss Dagstuhl - Leibniz-Zentrum f{\"{u}}r Informatik},
	series = {LIPIcs},
	timestamp = {Wed, 21 Aug 2024 22:46:00 +0200},
	title = {On the Fixed-Parameter Tractability of Capacitated Clustering},
	url = {https://doi.org/10.4230/LIPIcs.ICALP.2019.41},
	volume = {132},
	year = {2019},
	bdsk-url-1 = {https://doi.org/10.4230/LIPIcs.ICALP.2019.41}}

@article{BandyapadhyayFS24,
	author = {Sayan Bandyapadhyay and Fedor V. Fomin and Kirill Simonov},
	bibsource = {dblp computer science bibliography, https://dblp.org},
	biburl = {https://dblp.org/rec/journals/jcss/BandyapadhyayFS24.bib},
	date-added = {2025-03-24 12:19:24 +0530},
	date-modified = {2025-03-24 12:19:24 +0530},
	doi = {10.1016/J.JCSS.2024.103506},
	journal = {J. Comput. Syst. Sci.},
	pages = {103506},
	timestamp = {Thu, 22 Aug 2024 20:25:18 +0200},
	title = {On coresets for fair clustering in metric and Euclidean spaces and their applications},
	url = {https://doi.org/10.1016/j.jcss.2024.103506},
	volume = {142},
	year = {2024},
	bdsk-url-1 = {https://doi.org/10.1016/j.jcss.2024.103506}}

@article{AgrawalISX23,
	author = {Akanksha Agrawal and Tanmay Inamdar and Saket Saurabh and Jie Xue},
	bibsource = {dblp computer science bibliography, https://dblp.org},
	biburl = {https://dblp.org/rec/journals/jair/AgrawalISX23.bib},
	date-added = {2025-03-24 12:19:24 +0530},
	date-modified = {2025-03-24 12:19:24 +0530},
	doi = {10.1613/JAIR.1.14883},
	journal = {J. Artif. Intell. Res.},
	pages = {143--166},
	timestamp = {Tue, 02 Jan 2024 13:40:38 +0100},
	title = {Clustering what Matters: Optimal Approximation for Clustering with Outliers},
	url = {https://doi.org/10.1613/jair.1.14883},
	volume = {78},
	year = {2023},
	bdsk-url-1 = {https://doi.org/10.1613/jair.1.14883}}

@inproceedings{KrishnaswamyLS18,
	author = {Ravishankar Krishnaswamy and Shi Li and Sai Sandeep},
	bibsource = {dblp computer science bibliography, https://dblp.org},
	biburl = {https://dblp.org/rec/conf/stoc/KrishnaswamyLS18.bib},
	booktitle = {Proceedings of the 50th Annual {ACM} {SIGACT} Symposium on Theory of Computing, {STOC} 2018, Los Angeles, CA, USA, June 25-29, 2018},
	date-added = {2025-03-24 12:19:24 +0530},
	date-modified = {2025-03-24 12:19:24 +0530},
	doi = {10.1145/3188745.3188882},
	pages = {646--659},
	publisher = {{ACM}},
	timestamp = {Thu, 29 Apr 2021 15:44:28 +0200},
	title = {Constant approximation for k-median and k-means with outliers via iterative rounding},
	url = {https://doi.org/10.1145/3188745.3188882},
	year = {2018},
	bdsk-url-1 = {https://doi.org/10.1145/3188745.3188882}}

@article{DBLP:journals/tcs/KhullerPS00,
  author       = {Samir Khuller and
                  Robert Pless and
                  Yoram J. Sussmann},
  title        = {Fault tolerant K-center problems},
  journal      = {Theor. Comput. Sci.},
  volume       = {242},
  number       = {1-2},
  pages        = {237--245},
  year         = {2000},
  url          = {https://doi.org/10.1016/S0304-3975(98)00222-9},
  doi          = {10.1016/S0304-3975(98)00222-9},
  timestamp    = {Sun, 19 Jan 2025 14:12:44 +0100},
  biburl       = {https://dblp.org/rec/journals/tcs/KhullerPS00.bib},
  bibsource    = {dblp computer science bibliography, https://dblp.org}
}

@InProceedings{10.1007/978-3-642-13036-6_19,
author="Byrka, Jaroslaw
and Srinivasan, Aravind
and Swamy, Chaitanya",
editor="Eisenbrand, Friedrich
and Shepherd, F. Bruce",
title="Fault-Tolerant Facility Location: A Randomized Dependent LP-Rounding Algorithm",
booktitle="Integer Programming and Combinatorial Optimization",
year="2010",
publisher="Springer Berlin Heidelberg",
address="Berlin, Heidelberg",
pages="244--257",
abstract="We give a new randomized LP-rounding 1.725-approximation algorithm for the metric Fault-Tolerant Uncapacitated Facility Location problem. This improves on the previously best known 2.076-approximation algorithm of Swamy {\&} Shmoys. To the best of our knowledge, our work provides the first application of a dependent-rounding technique in the domain of facility location. The analysis of our algorithm benefits from, and extends, methods developed for Uncapacitated Facility Location; it also helps uncover new properties of the dependent-rounding approach.",
isbn="978-3-642-13036-6"
}

@article{DBLP:journals/talg/HajiaghayiHLLS16,
  author       = {Mohammad Taghi Hajiaghayi and
                  Wei Hu and
                  Jian Li and
                  Shi Li and
                  Barna Saha},
  title        = {A Constant Factor Approximation Algorithm for Fault-Tolerant \emph{k}-Median},
  journal      = {{ACM} Trans. Algorithms},
  volume       = {12},
  number       = {3},
  pages        = {36:1--36:19},
  year         = {2016},
  url          = {https://doi.org/10.1145/2854153},
  doi          = {10.1145/2854153},
  timestamp    = {Thu, 29 Apr 2021 15:44:29 +0200},
  biburl       = {https://dblp.org/rec/journals/talg/HajiaghayiHLLS16.bib},
  bibsource    = {dblp computer science bibliography, https://dblp.org}
}

@inproceedings{DBLP:conf/soda/KrishnaswamyKNSS11,
  author       = {Ravishankar Krishnaswamy and
                  Amit Kumar and
                  Viswanath Nagarajan and
                  Yogish Sabharwal and
                  Barna Saha},
  editor       = {Dana Randall},
  title        = {The Matroid Median Problem},
  booktitle    = {Proceedings of the Twenty-Second Annual {ACM-SIAM} Symposium on Discrete
                  Algorithms, {SODA} 2011, San Francisco, California, USA, January 23-25,
                  2011},
  pages        = {1117--1130},
  publisher    = {{SIAM}},
  year         = {2011},
  url          = {https://doi.org/10.1137/1.9781611973082.84},
  doi          = {10.1137/1.9781611973082.84},
  timestamp    = {Sat, 30 Sep 2023 09:57:09 +0200},
  biburl       = {https://dblp.org/rec/conf/soda/KrishnaswamyKNSS11.bib},
  bibsource    = {dblp computer science bibliography, https://dblp.org}
}

@article{DBLP:journals/talg/Swamy16,
  author       = {Chaitanya Swamy},
  title        = {Improved Approximation Algorithms for Matroid and Knapsack Median
                  Problems and Applications},
  journal      = {{ACM} Trans. Algorithms},
  volume       = {12},
  number       = {4},
  pages        = {49:1--49:22},
  year         = {2016},
  url          = {https://doi.org/10.1145/2963170},
  doi          = {10.1145/2963170},
  timestamp    = {Tue, 06 Nov 2018 12:51:20 +0100},
  biburl       = {https://dblp.org/rec/journals/talg/Swamy16.bib},
  bibsource    = {dblp computer science bibliography, https://dblp.org}
}

@inproceedings{DBLP:conf/soda/Chen08,
  author       = {Ke Chen},
  editor       = {Shang{-}Hua Teng},
  title        = {A constant factor approximation algorithm for \emph{k}-median clustering
                  with outliers},
  booktitle    = {Proceedings of the Nineteenth Annual {ACM-SIAM} Symposium on Discrete
                  Algorithms, {SODA} 2008, San Francisco, California, USA, January 20-22,
                  2008},
  pages        = {826--835},
  publisher    = {{SIAM}},
  year         = {2008},
  url          = {http://dl.acm.org/citation.cfm?id=1347082.1347173},
  timestamp    = {Fri, 07 Dec 2012 17:02:08 +0100},
  biburl       = {https://dblp.org/rec/conf/soda/Chen08.bib},
  bibsource    = {dblp computer science bibliography, https://dblp.org}
}

@article{KhullerS00,
	author = {Samir Khuller and Yoram J. Sussmann},
	bibsource = {dblp computer science bibliography, https://dblp.org},
	biburl = {https://dblp.org/rec/journals/siamdm/KhullerS00.bib},
	date-added = {2025-03-24 12:19:24 +0530},
	date-modified = {2025-03-24 12:19:24 +0530},
	doi = {10.1137/S0895480197329776},
	journal = {{SIAM} J. Discret. Math.},
	number = {3},
	pages = {403--418},
	timestamp = {Sat, 25 Apr 2020 13:56:55 +0200},
	title = {The Capacitated \emph{K}-Center Problem},
	url = {https://doi.org/10.1137/S0895480197329776},
	volume = {13},
	year = {2000},
	bdsk-url-1 = {https://doi.org/10.1137/S0895480197329776}}

@inproceedings{CyganHK12,
	author = {Marek Cygan and MohammadTaghi Hajiaghayi and Samir Khuller},
	bibsource = {dblp computer science bibliography, https://dblp.org},
	biburl = {https://dblp.org/rec/conf/focs/CyganHK12.bib},
	booktitle = {53rd Annual {IEEE} Symposium on Foundations of Computer Science, {FOCS} 2012, New Brunswick, NJ, USA, October 20-23, 2012},
	date-added = {2025-03-24 12:19:24 +0530},
	date-modified = {2025-03-24 12:19:24 +0530},
	doi = {10.1109/FOCS.2012.63},
	pages = {273--282},
	publisher = {{IEEE} Computer Society},
	timestamp = {Thu, 23 Mar 2023 23:57:52 +0100},
	title = {{LP} Rounding for k-Centers with Non-uniform Hard Capacities},
	url = {https://doi.org/10.1109/FOCS.2012.63},
	year = {2012},
	bdsk-url-1 = {https://doi.org/10.1109/FOCS.2012.63}}

@inproceedings{Li16,
	author = {Shi Li},
	bibsource = {dblp computer science bibliography, https://dblp.org},
	biburl = {https://dblp.org/rec/conf/soda/Li16.bib},
	booktitle = {Proceedings of the Twenty-Seventh Annual {ACM-SIAM} Symposium on Discrete Algorithms, {SODA} 2016, Arlington, VA, USA, January 10-12, 2016},
	date-added = {2025-03-24 12:19:24 +0530},
	date-modified = {2025-03-24 12:19:24 +0530},
	doi = {10.1137/1.9781611974331.CH56},
	pages = {786--796},
	publisher = {{SIAM}},
	timestamp = {Thu, 29 Apr 2021 15:44:28 +0200},
	title = {Approximating capacitated \emph{k}-median with {(1} + {$\epsilon$})\emph{k} open facilities},
	url = {https://doi.org/10.1137/1.9781611974331.ch56},
	year = {2016},
	bdsk-url-1 = {https://doi.org/10.1137/1.9781611974331.ch56}}

@article{Li17,
	author = {Shi Li},
	bibsource = {dblp computer science bibliography, https://dblp.org},
	biburl = {https://dblp.org/rec/journals/talg/Li17.bib},
	date-added = {2025-03-24 12:19:24 +0530},
	date-modified = {2025-03-24 12:19:24 +0530},
	doi = {10.1145/2983633},
	journal = {{ACM} Trans. Algorithms},
	number = {2},
	pages = {22:1--22:18},
	timestamp = {Thu, 29 Apr 2021 15:44:29 +0200},
	title = {On Uniform Capacitated \emph{k}-Median Beyond the Natural {LP} Relaxation},
	url = {https://doi.org/10.1145/2983633},
	volume = {13},
	year = {2017},
	bdsk-url-1 = {https://doi.org/10.1145/2983633}}

@inproceedings{AdamczykBMM019,
	author = {Marek Adamczyk and Jaroslaw Byrka and Jan Marcinkowski and Syed Mohammad Meesum and Michal Wlodarczyk},
	bibsource = {dblp computer science bibliography, https://dblp.org},
	biburl = {https://dblp.org/rec/conf/esa/AdamczykBMM019.bib},
	booktitle = {27th Annual European Symposium on Algorithms, {ESA} 2019, September 9-11, 2019, Munich/Garching, Germany},
	date-added = {2025-03-24 12:19:24 +0530},
	date-modified = {2025-03-24 12:19:24 +0530},
	doi = {10.4230/LIPICS.ESA.2019.1},
	pages = {1:1--1:14},
	publisher = {Schloss Dagstuhl - Leibniz-Zentrum f{\"{u}}r Informatik},
	series = {LIPIcs},
	timestamp = {Sun, 25 Jul 2021 11:45:52 +0200},
	title = {Constant-Factor {FPT} Approximation for Capacitated k-Median},
	url = {https://doi.org/10.4230/LIPIcs.ESA.2019.1},
	volume = {144},
	year = {2019},
	bdsk-url-1 = {https://doi.org/10.4230/LIPIcs.ESA.2019.1}}

@inproceedings{badoiu-etal:approximate-clustering-coresets,
	author = {Mihai Bad\u{o}iu and Sariel Har{-}Peled and Piotr Indyk},
	bibsource = {dblp computer science bibliography, https://dblp.org},
	biburl = {https://dblp.org/rec/conf/stoc/BadoiuHI02.bib},
	booktitle = {Proc. 34th Annual {ACM} Symposium on Theory of Computing (STOC'02)},
	date-added = {2025-03-24 12:19:24 +0530},
	date-modified = {2025-03-24 12:19:24 +0530},
	doi = {10.1145/509907.509947},
	pages = {250--257},
	publisher = {{ACM}},
	timestamp = {Tue, 06 Nov 2018 11:07:04 +0100},
	title = {Approximate clustering via core-sets},
	url = {https://doi.org/10.1145/509907.509947},
	year = {2002},
	bdsk-url-1 = {https://doi.org/10.1145/509907.509947}}

@inproceedings{cohen-addad-etal21:coreset-framework,
	author = {Vincent Cohen{-}Addad and David Saulpic and Chris Schwiegelshohn},
	bibsource = {dblp computer science bibliography, https://dblp.org},
	biburl = {https://dblp.org/rec/conf/stoc/Cohen-AddadSS21.bib},
	booktitle = {Proc. 53rd Annual {ACM} {SIGACT} Symposium on Theory of Computing (STOC'21)},
	date-added = {2025-03-24 12:19:24 +0530},
	date-modified = {2025-03-24 12:19:24 +0530},
	doi = {10.1145/3406325.3451022},
	pages = {169--182},
	publisher = {{ACM}},
	timestamp = {Sat, 08 Jan 2022 02:24:27 +0100},
	title = {A new coreset framework for clustering},
	url = {https://doi.org/10.1145/3406325.3451022},
	year = {2021},
	bdsk-url-1 = {https://doi.org/10.1145/3406325.3451022}}

@article{ding2020unified,
	author = {Ding, Hu and Xu, Jinhui},
	date-added = {2025-03-24 12:19:24 +0530},
	date-modified = {2025-03-24 12:19:24 +0530},
	journal = {Algorithmica},
	number = {4},
	pages = {808--852},
	publisher = {Springer},
	title = {A unified framework for clustering constrained data without locality property},
	volume = {82},
	year = {2020}}

@article{ostrovsky2013effectiveness,
	author = {Ostrovsky, Rafail and Rabani, Yuval and Schulman, Leonard J and Swamy, Chaitanya},
	date-added = {2025-03-24 12:19:24 +0530},
	date-modified = {2025-03-24 12:19:24 +0530},
	journal = {J.ACM},
	number = {6},
	pages = {1--22},
	publisher = {ACM New York, NY, USA},
	title = {The effectiveness of {Lloyd}-type methods for the $k$-means problem},
	volume = {59},
	year = {2013}}

@article{matouvsek2000approximate,
	author = {Matou{\v{s}}ek, Jir{\i}},
	date-added = {2025-03-24 12:19:24 +0530},
	date-modified = {2025-03-24 12:19:24 +0530},
	journal = {Discrete \& Computational Geometry},
	number = {1},
	pages = {61--84},
	publisher = {Springer},
	title = {On approximate geometric $k$-clustering},
	volume = {24},
	year = {2000},
doi          = {10.1007/S004540010019}}

@inproceedings{awasthi2015hardness,
	author = {Awasthi, Pranjal and Charikar, Moses and Krishnaswamy, Ravishankar and Sinop, Ali Kemal},
	booktitle = {31st International Symposium on Computational Geometry (SoCG'15)},
	date-added = {2025-03-24 12:19:24 +0530},
	date-modified = {2025-03-24 12:19:24 +0530},
	organization = {Schloss Dagstuhl-Leibniz-Zentrum fuer Informatik},
	title = {The Hardness of Approximation of Euclidean $k$-Means},
	year = {2015}}

@inproceedings{braverman2019coresets,
	author = {Braverman, Vladimir and Jiang, Shaofeng H-C and Krauthgamer, Robert and Wu, Xuan},
	booktitle = {International Conference on Machine Learning (ICML'19)},
	date-added = {2025-03-24 12:19:24 +0530},
	date-modified = {2025-03-24 12:19:24 +0530},
	organization = {PMLR},
	pages = {744--753},
	title = {Coresets for ordered weighted clustering},
	year = {2019}}

@inproceedings{Cohen-AddadG0LL19,
	author = {Vincent Cohen{-}Addad and Anupam Gupta and Amit Kumar and Euiwoong Lee and Jason Li},
	bibsource = {dblp computer science bibliography, https://dblp.org},
	biburl = {https://dblp.org/rec/conf/icalp/Cohen-AddadG0LL19.bib},
	booktitle = {46th International Colloquium on Automata, Languages, and Programming, {ICALP} 2019, July 9-12, 2019, Patras, Greece},
	date-added = {2025-03-24 12:19:24 +0530},
	date-modified = {2025-03-24 12:19:24 +0530},
	doi = {10.4230/LIPICS.ICALP.2019.42},
	pages = {42:1--42:14},
	publisher = {Schloss Dagstuhl - Leibniz-Zentrum f{\"{u}}r Informatik},
	series = {LIPIcs},
	timestamp = {Wed, 21 Aug 2024 22:46:00 +0200},
	title = {Tight {FPT} Approximations for k-Median and k-Means},
	url = {https://doi.org/10.4230/LIPIcs.ICALP.2019.42},
	volume = {132},
	year = {2019},
	bdsk-url-1 = {https://doi.org/10.4230/LIPIcs.ICALP.2019.42}}

@inproceedings{Jaiswal023,
  author       = {Ragesh Jaiswal and
                  Amit Kumar},

  title        = {Clustering What Matters in Constrained Settings: Improved Outlier
                  to Outlier-Free Reductions},
  booktitle    = {34th International Symposium on Algorithms and Computation, {ISAAC}
                  2023, December 3-6, 2023, Kyoto, Japan},
  series       = {LIPIcs},
  volume       = {283},
  pages        = {41:1--41:16},
  publisher    = {Schloss Dagstuhl - Leibniz-Zentrum f{\"{u}}r Informatik},
  year         = {2023},
  url          = {https://doi.org/10.4230/LIPIcs.ISAAC.2023.41},
  doi          = {10.4230/LIPICS.ISAAC.2023.41},
  timestamp    = {Wed, 21 Aug 2024 22:46:00 +0200},
  biburl       = {https://dblp.org/rec/conf/isaac/Jaiswal023.bib},
  bibsource    = {dblp computer science bibliography, https://dblp.org}
}

@article{BhattacharyaJK18,
  author       = {Anup Bhattacharya and
                  Ragesh Jaiswal and
                  Amit Kumar},
  title        = {Faster Algorithms for the Constrained k-means Problem},
  journal      = {Theory Comput. Syst.},
  volume       = {62},
  number       = {1},
  pages        = {93--115},
  year         = {2018},
  url          = {https://doi.org/10.1007/s00224-017-9820-7},
  doi          = {10.1007/S00224-017-9820-7},
  timestamp    = {Sun, 06 Oct 2024 21:35:12 +0200},
  biburl       = {https://dblp.org/rec/journals/mst/BhattacharyaJK18.bib},
  bibsource    = {dblp computer science bibliography, https://dblp.org}
}

@inproceedings{Huang0L025,
  author       = {Lingxiao Huang and
                  Jian Li and
                  Pinyan Lu and
                  Xuan Wu},
  title        = {Coresets for Constrained Clustering: General Assignment Constraints
                  and Improved Size Bounds},
  booktitle    = {Proceedings of the 2025 Annual {ACM-SIAM} Symposium on Discrete Algorithms,
                  {SODA} 2025, New Orleans, LA, USA, January 12-15, 2025},
  pages        = {4732--4782},
  publisher    = {{SIAM}},
  year         = {2025},
  url          = {https://doi.org/10.1137/1.9781611978322.161},
  doi          = {10.1137/1.9781611978322.161},
  timestamp    = {Tue, 28 Jan 2025 14:38:41 +0100},
  biburl       = {https://dblp.org/rec/conf/soda/Huang0L025.bib},
  bibsource    = {dblp computer science bibliography, https://dblp.org}
}

@inproceedings{CKLV17,
	author = {Flavio Chierichetti and Ravi Kumar and Silvio Lattanzi and Sergei Vassilvitskii},
	booktitle = {Advances in Neural Information Processing Systems 30: Annual Conference on Neural Information Processing Systems (NIPS'17)},
	date-added = {2025-03-24 12:19:24 +0530},
	date-modified = {2025-03-24 12:19:24 +0530},
	pages = {5029--5037},
	title = {Fair Clustering Through Fairlets},
	year = {2017}}

@inproceedings{VC20,
	author = {Vincent Cohen{-}Addad},
	booktitle = {Proceedings of the 2020 {ACM-SIAM} Symposium on Discrete Algorithms (SODA'20)},
	date-added = {2025-03-24 12:19:24 +0530},
	date-modified = {2025-03-24 12:19:24 +0530},
	pages = {2241--2259},
	publisher = {{SIAM}},
	title = {Approximation Schemes for Capacitated Clustering in Doubling Metrics},
	year = {2020}}

@inproceedings{ByrkaRU16,
	author = {Jaroslaw Byrka and Bartosz Rybicki and Sumedha Uniyal},
	bibsource = {dblp computer science bibliography, https://dblp.org},
	biburl = {https://dblp.org/rec/conf/ipco/ByrkaRU16.bib},
	booktitle = {Integer Programming and Combinatorial Optimization - 18th International Conference, {IPCO} 2016, Li{\`{e}}ge, Belgium, June 1-3, 2016, Proceedings},
	date-added = {2025-03-24 12:19:24 +0530},
	date-modified = {2025-03-24 12:19:24 +0530},
	doi = {10.1007/978-3-319-33461-5\_22},
	pages = {262--274},
	publisher = {Springer},
	series = {Lecture Notes in Computer Science},
	timestamp = {Tue, 14 May 2019 10:00:50 +0200},
	title = {An Approximation Algorithm for Uniform Capacitated k-Median Problem with 1+{\textbackslash}epsilon Capacity Violation},
	url = {https://doi.org/10.1007/978-3-319-33461-5\_22},
	volume = {9682},
	year = {2016},
	bdsk-url-1 = {https://doi.org/10.1007/978-3-319-33461-5%5C_22}}

@inproceedings{ChuzhoyR05,
	author = {Julia Chuzhoy and Yuval Rabani},
	bibsource = {dblp computer science bibliography, https://dblp.org},
	biburl = {https://dblp.org/rec/conf/soda/ChuzhoyR05.bib},
	booktitle = {Proceedings of the Sixteenth Annual {ACM-SIAM} Symposium on Discrete Algorithms, {SODA} 2005, Vancouver, British Columbia, Canada, January 23-25, 2005},
	date-added = {2025-03-24 12:19:24 +0530},
	date-modified = {2025-03-24 12:19:24 +0530},
	pages = {952--958},
	publisher = {{SIAM}},
	timestamp = {Fri, 07 Dec 2012 17:02:08 +0100},
	title = {Approximating k-median with non-uniform capacities},
	url = {http://dl.acm.org/citation.cfm?id=1070432.1070569},
	year = {2005},
	bdsk-url-1 = {http://dl.acm.org/citation.cfm?id=1070432.1070569}}

@inproceedings{DemirciL16,
	author = {H. G{\"{o}}kalp Demirci and Shi Li},
	bibsource = {dblp computer science bibliography, https://dblp.org},
	biburl = {https://dblp.org/rec/conf/icalp/DemirciL16.bib},
	booktitle = {43rd International Colloquium on Automata, Languages, and Programming, {ICALP} 2016, July 11-15, 2016, Rome, Italy},
	date-added = {2025-03-24 12:19:24 +0530},
	date-modified = {2025-03-24 12:19:24 +0530},
	doi = {10.4230/LIPICS.ICALP.2016.73},
	pages = {73:1--73:14},
	publisher = {Schloss Dagstuhl - Leibniz-Zentrum f{\"{u}}r Informatik},
	series = {LIPIcs},
	timestamp = {Thu, 29 Apr 2021 15:44:28 +0200},
	title = {Constant Approximation for Capacitated k-Median with (1+epsilon)-Capacity Violation},
	url = {https://doi.org/10.4230/LIPIcs.ICALP.2016.73},
	volume = {55},
	year = {2016},
	bdsk-url-1 = {https://doi.org/10.4230/LIPIcs.ICALP.2016.73}}

@techreport{dasgupta08hardness-2-means,
	address = {San Diego, CA},
	author = {Dasgupta, Sanjoy},
	date-added = {2025-03-24 12:19:24 +0530},
	date-modified = {2025-03-24 12:19:24 +0530},
	institution = {University of California, San Diego},
	number = {CS2008-0916},
	title = {The hardness of $k$-means clustering},
	type = {Technical Report},
	year = {2008}}

@inproceedings{BFRS15,
	author = {Jaroslaw Byrka and Krzysztof Fleszar and Bartosz Rybicki and Joachim Spoerhase},
	bibsource = {dblp computer science bibliography, https://dblp.org},
	biburl = {https://dblp.org/rec/conf/soda/ByrkaFRS15.bib},
	booktitle = {Proceedings of the Twenty-Sixth Annual {ACM-SIAM} Symposium on Discrete Algorithms, {SODA} 2015, San Diego, CA, USA, January 4-6, 2015},
	date-added = {2025-03-24 12:19:24 +0530},
	date-modified = {2025-03-24 12:19:24 +0530},
	doi = {10.1137/1.9781611973730.49},
	pages = {722--736},
	publisher = {{SIAM}},
	timestamp = {Tue, 02 Feb 2021 17:07:30 +0100},
	title = {Bi-Factor Approximation Algorithms for Hard Capacitated \emph{k}-Median Problems},
	url = {https://doi.org/10.1137/1.9781611973730.49},
	year = {2015},
	bdsk-url-1 = {https://doi.org/10.1137/1.9781611973730.49}}

@inproceedings{DBLP:conf/stacs/Gadekar025,
  author       = {Ameet Gadekar and
                  Tanmay Inamdar},
  title        = {Dimension-Free Parameterized Approximation Schemes for Hybrid Clustering},
  booktitle    = {42nd International Symposium on Theoretical Aspects of Computer Science,
                  {STACS} 2025, March 4-7, 2025, Jena, Germany},
  series       = {LIPIcs},
  volume       = {327},
  pages        = {35:1--35:20},
  publisher    = {Schloss Dagstuhl - Leibniz-Zentrum f{\"{u}}r Informatik},
  year         = {2025},
  url          = {https://doi.org/10.4230/LIPIcs.STACS.2025.35},
  doi          = {10.4230/LIPICS.STACS.2025.35},
  timestamp    = {Tue, 25 Feb 2025 17:10:40 +0100},
  biburl       = {https://dblp.org/rec/conf/stacs/Gadekar025.bib},
  bibsource    = {dblp computer science bibliography, https://dblp.org}
}

@article{CharikarGTS02,
  author       = {Moses Charikar and
                  Sudipto Guha and
                  {\'{E}}va Tardos and
                  David B. Shmoys},
  title        = {A Constant-Factor Approximation Algorithm for the k-Median Problem},
  journal      = {J. Comput. Syst. Sci.},
  volume       = {65},
  number       = {1},
  pages        = {129--149},
  year         = {2002},
  url          = {https://doi.org/10.1006/jcss.2002.1882},
  doi          = {10.1006/JCSS.2002.1882}
}

@article{kanungo2002efficient,
  title={An efficient k-means clustering algorithm: Analysis and implementation},
  author={Kanungo, Tapas and Mount, David M and Netanyahu, Nathan S and Piatko, Christine D and Silverman, Ruth and Wu, Angela Y},
  journal={IEEE transactions on pattern analysis and machine intelligence},
  volume={24},
  number={7},
  pages={881--892},
  year={2002},
  publisher={IEEE}
}

@article{BCPSS24,
  author       = {Nikhil Bansal and
                  Vincent Cohen{-}Addad and
                  Milind Prabhu and
                  David Saulpic and
                  Chris Schwiegelshohn},
  title        = {Sensitivity Sampling for k-Means: Worst Case and Stability Optimal
                  Coreset Bounds},
  journal      = {CoRR},
  volume       = {abs/2405.01339},
  year         = {2024},
  url          = {https://doi.org/10.48550/arXiv.2405.01339},
  doi          = {10.48550/ARXIV.2405.01339}
}

@inproceedings{DBLP:conf/stoc/Cohen-AddadEMN22,
  author       = {Vincent Cohen{-}Addad and
                  Hossein Esfandiari and
                  Vahab S. Mirrokni and
                  Shyam Narayanan},
  title        = {Improved approximations for Euclidean \emph{k}-means and \emph{k}-median,
                  via nested quasi-independent sets},
  booktitle    = {{STOC} '22: 54th Annual {ACM} {SIGACT} Symposium on Theory of Computing,
                  Rome, Italy, June 20 - 24, 2022},
  pages        = {1621--1628},
  publisher    = {{ACM}},
  year         = {2022},
  url          = {https://doi.org/10.1145/3519935.3520011},
  doi          = {10.1145/3519935.3520011}
}

@inproceedings{10.1145/TGOO,
author = {Thejaswi, Suhas and Gadekar, Ameet and Ordozgoiti, Bruno and Osadnik, Michal},
title = {Clustering with Fair-Center Representation: Parameterized Approximation Algorithms and Heuristics},
year = {2022},
isbn = {9781450393850},
publisher = {Association for Computing Machinery},
address = {New York, NY, USA},
url = {https://doi.org/10.1145/3534678.3539487},
doi = {10.1145/3534678.3539487},
abstract = {We study a variant of classical clustering formulations in the context of algorithmic fairness, known as diversity-aware clustering. In this variant we are given a collection of facility subsets, and a solution must contain at least a specified number of facilities from each subset while simultaneously minimizing the clustering objective (k-median or k-means). We investigate the fixed-parameter tractability of these problems and show several negative hardness and inapproximability results, even when we afford exponential running time with respect to some parameters.Motivated by these results we identify natural parameters of the problem, and present fixed-parameter approximation algorithms with approximation ratios (1 + 2 over e + ∈) and (1 + 8 over e + ∈) for diversity-aware k-median and diversity-aware k-means respectively, and argue that these ratios are essentially tight assuming the gap-exponential time hypothesis. We also present a simple and more practical bicriteria approximation algorithm with better running time bounds. We finally propose efficient and practical heuristics. We evaluate the scalability and effectiveness of our methods in a wide variety of rigorously conducted experiments, on both real and synthetic data.},
booktitle = {Proceedings of the 28th ACM SIGKDD Conference on Knowledge Discovery and Data Mining},
pages = {1749–1759},
numpages = {11},
keywords = {algorithmic fairness, clustering, fixed parameter tractability, parameterized approximation algorithms},
location = {Washington DC, USA},
series = {KDD '22}
}

@inproceedings{BravermanCJKST022,
  author       = {Vladimir Braverman and
                  Vincent Cohen{-}Addad and
                  Shaofeng H.{-}C. Jiang and
                  Robert Krauthgamer and
                  Chris Schwiegelshohn and
                  Mads Bech Toftrup and
                  Xuan Wu},
  title        = {The Power of Uniform Sampling for Coresets},
  booktitle    = {63rd {IEEE} Annual Symposium on Foundations of Computer Science, {FOCS}
                  2022, Denver, CO, USA, October 31 - November 3, 2022},
  pages        = {462--473},
  publisher    = {{IEEE}},
  year         = {2022},
  url          = {https://doi.org/10.1109/FOCS54457.2022.00051},
  doi          = {10.1109/FOCS54457.2022.00051},
  timestamp    = {Sun, 17 Dec 2023 20:56:18 +0100},
  biburl       = {https://dblp.org/rec/conf/focs/BravermanCJKST022.bib},
  bibsource    = {dblp computer science bibliography, https://dblp.org}
}

@inproceedings{NarayananN19,
  author       = {Shyam Narayanan and
                  Jelani Nelson},
  title        = {Optimal terminal dimensionality reduction in Euclidean space},
  booktitle    = {Proceedings of the 51st Annual {ACM} {SIGACT} Symposium on Theory
                  of Computing, {STOC} 2019, Phoenix, AZ, USA, June 23-26, 2019},
  pages        = {1064--1069},
  publisher    = {{ACM}},
  year         = {2019},
  url          = {https://doi.org/10.1145/3313276.3316307},
  doi          = {10.1145/3313276.3316307}
}

@article{DabasGI25,
  author       = {Rajni Dabas and
                  Neelima Gupta and
                  Tanmay Inamdar},
  title        = {{FPT} approximation for capacitated clustering with outliers},
  journal      = {Theor. Comput. Sci.},
  volume       = {1027},
  pages        = {115026},
  year         = {2025},
  url          = {https://doi.org/10.1016/j.tcs.2024.115026},
  doi          = {10.1016/J.TCS.2024.115026},
  timestamp    = {Mon, 13 Jan 2025 21:28:27 +0100},
  biburl       = {https://dblp.org/rec/journals/tcs/DabasGI25.bib},
  bibsource    = {dblp computer science bibliography, https://dblp.org}
}

@article{bandyapadhyay2024parameterized,
  title={Parameterized Approximation Algorithms and Lower Bounds for k-Center Clustering and Variants},
  author={Bandyapadhyay, Sayan and Friggstad, Zachary and Mousavi, Ramin},
  journal={Algorithmica},
  volume={86},
  number={8},
  pages={2557--2574},
  year={2024},
  publisher={Springer}
}

@inproceedings{ShmoysTA97,
  author       = {David B. Shmoys and
                  {\'{E}}va Tardos and
                  Karen I. Aardal},
  title        = {Approximation Algorithms for Facility Location Problems (Extended
                  Abstract)},
  booktitle    = {Proceedings of the Twenty-Ninth Annual {ACM} Symposium on the Theory
                  of Computing, 1997},
  pages        = {265--274},
  publisher    = {{ACM}},
  year         = {1997},
  url          = {https://doi.org/10.1145/258533.258600},
  doi          = {10.1145/258533.258600}
}

@article{CharikarG05,
  author       = {Moses Charikar and
                  Sudipto Guha},
  title        = {Improved Combinatorial Algorithms for Facility Location Problems},
  journal      = {{SIAM} J. Comput.},
  volume       = {34},
  number       = {4},
  pages        = {803--824},
  year         = {2005},
  url          = {https://doi.org/10.1137/S0097539701398594},
  doi          = {10.1137/S0097539701398594}
}

@article{KleinbergT02,
  author       = {Jon M. Kleinberg and
                  {\'{E}}va Tardos},
  title        = {Approximation algorithms for classification problems with pairwise
                  relationships: metric labeling and Markov random fields},
  journal      = {J. {ACM}},
  volume       = {49},
  number       = {5},
  pages        = {616--639},
  year         = {2002},
  url          = {https://doi.org/10.1145/585265.585268},
  doi          = {10.1145/585265.585268}
}

@inproceedings{BalcanBG09,
  author       = {Maria{-}Florina Balcan and
                  Avrim Blum and
                  Anupam Gupta},
  title        = {Approximate clustering without the approximation},
  booktitle    = {Proceedings of the Twentieth Annual {ACM-SIAM} Symposium on Discrete
                  Algorithms, {SODA}, 2009},
  pages        = {1068--1077},
  publisher    = {{SIAM}},
  year         = {2009},
  url          = {https://doi.org/10.1137/1.9781611973068.116},
  doi          = {10.1137/1.9781611973068.116}
}

\end{document}